\Crefname{algocf}{Algorithm}{Algorithms}
\crefname{figure}{Figure}{Figures}
\newcommand{\hide}[1]{}
\newcommand{\ol}[1]{\textcolor{blue}{\ifmmode \text{[OL: #1]}\else [OL: #1] \fi}}
\newcommand{\vh}[1]{\textcolor{orange}{\ifmmode \text{[VH: #1]}\else [VH: #1] \fi}}
\newcommand{\at}[1]{\todo[inline,color=teal!10,caption={AT}]{\textbf{AT:} #1}}
\newcommand{\at}[1]{}
\newcommand{\nat}{\mathbb{N}}
\newcommand{\bbB}{\mathbb{B}}
\newcommand{\vars}{\mathbb{X}}
\newcommand{\varsV}{\mathbb{V}}
\newcommand{\svars}{\Sigma_{\vars}}
\newcommand{\wordvars}{\mathbb{W}}
\newcommand{\posvars}{\mathbb{P}}
\newcommand{\strend}[0]{\$}
\newcommand{\A}[0]{\mathcal{A}}
\newcommand{\weq}[0]{\mathit{eq}}
\newcommand{\seq}[0]{\wedge\mathit{eq}}
\newcommand{\sqeq}[0]{\wedge\mathit{eq}}
\newcommand{\sct}[0]{\wedge\vee\mathit{eq}}
\newcommand{\sweq}[0]{\Phi}
 \newcommand{\lhs}[0]{\mathsf{t}_{\ell}}
 \newcommand{\rhs}[0]{\mathsf{t}_{r}}
\newcommand{\deli}[0]{\#}
\newcommand{\msoalph}[0]{\Sigma_{\vars}}
\newcommand{\msoalphpad}[0]{\Sigma_{\vars,\pad}}
\newcommand{\encode}[0]{\mathsf{encode}}
\newcommand{\eqencode}[0]{\mathsf{eqencode}}
\newcommand{\encodeof}[1]{\encode(#1)}
\newcommand{\eqencodeof}[1]{\eqencode(#1)}
\newcommand{\rewrite}[0]{\mathop{\rightarrowtail}}
\newcommand{\rewriteto}[2]{#1 \rewrite #2}
\newcommand{\xtoyx}[0]{\rewriteto x {yx}}
\newcommand{\ytozy}[0]{\rewriteto y {zy}}
\newcommand{\ytoay}[0]{\rewriteto y {ay}}
\newcommand{\xsubyx}[0]{x\mapsto y x}
\newcommand{\xsubalphax}[0]{x\mapsto \alpha x}
\newcommand{\emptyword}[0]{\epsilon}
\newcommand{\xsubepsilon}[0]{x\mapsto\emptyword}
\newcommand{\xtoax}[0]{\rewriteto x {ax}}
\newcommand{\xtoalphax}[0]{\rewriteto x {\alpha x}}
\newcommand{\ytoalphay}[0]{\rewriteto y {\alpha y}}
\newcommand{\ytoxy}[0]{\rewriteto y {xy}}
\newcommand{\xtoepsilon}[0]{\rewriteto x \emptyword}
\newcommand{\ytoepsilon}[0]{\rewriteto y \emptyword}
\newcommand{\trim}[0]{\mathit{trim}}
\newcommand{\step}[0]{\mathit{step}}
\newcommand{\pass}[0]{\mathit{skip}}
\newcommand{\msoint}[0]{\sigma}
\newcommand{\lenofint}[1]{|#1|}
\newcommand{\ordered}[1]{\mathit{ordered}(#1)}
\newcommand{\alleq}[3]{\mathit{alleq}_{#1}^{#2}(#3)}
\newcommand{\substof}[3]{#1[#2 \mapsto #3]}
\newcommand{\forallp}[0]{\forall^\posvars}
\newcommand{\forallw}[0]{\forall^\wordvars}
\newcommand{\existsp}[0]{\exists^\posvars}
\newcommand{\existsw}[0]{\exists^\wordvars}
\newcommand{\sing}[0]{\mathit{Sing}}
\newcommand{\occur}[3]{\mathit{occur}_{#1}^{#2}(#3)}
\newcommand{\pad}[0]{{\scriptscriptstyle\Box}}
\newcommand{\padfnc}[0]{\mathsf{pad}}
\newcommand{\twotrackpadsmall}[0]{\mbox{\scalebox{0.8}{$\twotrack \pad \pad$}}}
\newcommand{\false}{\mbox{$\mathsf{false}$}}
\newcommand{\model}{\mathit{I}}
\newcommand{\transduct}[0]{\tau}
\newcommand{\transductof}[1]{\transduct_{#1}}
\newcommand{\bigtransof}[1]{\T_{#1}}
\newcommand{\compose}[0]{\mathop{\circ}}
\newcommand{\concat}[0]{.}
\newcommand{\substrof}[2]{#1[#2\!:]}
\newcommand{\T}{\mathcal{T}}
\newcommand{\I}{\mathcal{I}}
\newcommand{\D}{\mathcal{D}}
\newcommand{\lang}[0]{\mathcal{L}}
\newcommand{\langof}[1]{\lang(#1)}
\newcommand{\twotrack}[2]{{\textstyle{{#1}\choose{#2}}}}
\newcommand{\occurof}[2]{|#2|_{#1}}
\newcommand{\encs}[0]{E}
\newcommand{\encsof}[1]{\encs_{#1}}
\newcommand{\smallencsof}[1]{e_{#1}}
\newcommand{\invar}[0]{\mathit{inv}}
\newcommand{\sidesep}[0]{\tikz[baseline,anchor=base,inner sep=-0.4mm]{\node[draw,circle] {=\vphantom{ql}};}}
\newcommand{\limpl}[0]{\mathrel{\rightarrow}}
\newcommand{\defiff}[0]{\mathrel{\triangleq}}
\newcommand{\msostr}[0]{\textsc{MSO(Str)}\xspace}
\newcommand{\runto}[0]{\mathrel{\leadsto}}
\newcommand{\retro}{\textsc{Retro}\xspace}
\newcommand{\sloth}{\textsc{Sloth}\xspace}
\newcommand{\slog}{\textsc{Slog}\xspace}
\newcommand{\slent}{\textsc{Slent}\xspace}
\newcommand{\norn}{\textsc{Norn}\xspace}
\newcommand{\zthreestrfour}{\textsc{Z3str4}\xspace}
\newcommand{\trau}{\textsc{Trau}\xspace}
\newcommand{\ostrich}{\textsc{Ostrich}\xspace}
\newcommand{\bintrackin}[2]{\resizebox{!}{3.5mm}{$\Big[\!\!\begin{array}{c} #1\\[-1mm] #2\end{array}\!\!\Big]\!$}}
\newcommand{\bintrackfst}[4]{\resizebox{!}{3.5mm}{$\begin{array}{r}#1\colon\\[-1mm]#2\colon\end{array}$}\!\!\!\bintrackin{#3}{#4}}
\newcommand{\bintrackelip}[8]{\resizebox{!}{3.5mm}{$\Big[\!\!\begin{array}{cccccc}#1&\!\!\cdots\!\!&#2&\!\!#3&\!\!\cdots\!\!&#4\\[-1mm] #5\!&\!\!\cdots\!\!&#6&\!\!#7&\!\!\cdots\!\!&#8\end{array}\!\!\Big]\!$}}
\newcommand{\lensep}[0]{\#_{\mathit{len}}}
\newcommand{\len}[0]{\mathit{len}}
\newcommand{\psireg}{\psi^{\mathit{reg}}}
\newcommand{\filter}{\mathit{filter}}
\newcommand{\union}{\mathit{union}}
\newcommand{\set}{\mathit{set}}
\newcommand{\ordset}{\mathit{ordSet}}
\newcommand{\expfunc}{\mathit{exp}}
\newcommand{\expfuncof}[2]{\expfunc_{#1,#2}}
\newcommand{\allsymreg}{\Gamma_{\vars, \A_r}}
\newcommand{\allsymregx}{\Gamma_{\vars\setminus\{x\}, \A_r}}
\newcommand{\regsep}[0]{\#_{\mathit{reg}}}
\newcommand{\reg}[0]{\mathit{reg}}
\newcommand{\NielsenTransformationST}[2]{\dfrac{#1}{#2}}
\newcommand{\kepler}[0]{\texttt{Kepler}$_{22}$\xspace}
\newcommand{\pyexhard}[0]{\textsc{PyEx-Hard}\xspace}
\newcommand{\trimtrans}[0]{(trim)\xspace}
\newcommand{\transclof}[1]{#1^{rt}}
\newcommand{\nstepof}[2]{#1^{(#2)}}
\newcommand{\regsym}[1]{\langle #1 \rangle}
\newcommand{\cut}[0]{\mathcal{C}}
\newcommand{\autof}[1]{\A_{#1}}
\newcommand{\ltr}[1]{\labeledto{#1}}
\newcommand{\langinof}[2]{\lang_{#1}(#2)}
\newcommand{\nil}{\mathsf{noop}}
\newcommand{\saturate}{\mathit{satur}}
\newcommand{\saturatenew}{\saturate^\bullet}
\newcommand{\lsbf}[0]{\mathit{LSBF}}
\newcommand{\lsbfp}[0]{\mathit{LSBF_p}}
\newcommand{\lsbfof}[1]{\lsbf(#1)}
\newcommand{\clNP}[0]{\textsc{NP}\xspace}
\newcommand{\zstrre}{\textsc{Z3Str3RE}\xspace}
\renewcommand{\twotrack}[2]{\bintrackin{#1}{#2}}
\newtheorem{theorem}{Theorem}
\newtheorem{proposition}[theorem]{Proposition}
\newtheorem{lemma}[theorem]{Lemma}
\newcounter{example}
\newenvironment{example}[1][]{\refstepcounter{example}\par\medskip
   \noindent \textbf{Example~\theexample. #1} \rmfamily}{\medskip}
\DeclareRobustCommand{\shortto}{%
  \mathrel{\mathpalette\short@to\relax}%
}
\DeclareRobustCommand{\shortminus}{%
  \mathrel{\mathpalette\short@minus\relax}%
}
\newcommand{\short@to}[2]{%
  \mkern2mu
  \clipbox{{.5\width} 0 0 0}{$\m@th#1\vphantom{+}{\rightarrow}$}%
}
\newcommand{\short@minus}[2]{%
  \mkern2mu
  \clipbox{{.5\width} 0 0 0}{$\m@th#1\vphantom{+}{-}$}%
}
\newcommand{\labeledto}[1]{\raisebox{-0.2pt}{\scalebox{1.2}{\ensuremath{{\shortminus}\hspace{-2.1pt}\raisebox{0.16ex}{$\scriptstyle\{#1\hspace{-0.28pt}\}$}\hspace{-2.4pt}{\shortto}}}}}
\newcommand{\setnocond}[1]{\{#1\}}
\renewcommand{\epsilon}{\varepsilon}
\newcommand{\aut}[1][A]{\mathcal{#1}}
\newcommand{\transducer}{\aut[T]}
\newcommand{\seTransform}[2]{{#1}\hookrightarrow{#2}}
\tikzset{every picture/.style={->,shorten >=0pt, >=stealth', cap=rect, auto}}
\tikzstyle{update}=[font=\scriptsize, right, very near end]
\tikzstyle{state}=[draw, shape=ellipse, initial text={}]
\tikzstyle{loop}+=[min distance=10mm]
\tikzstyle{loop above}=[above, out=105, in=75, loop]
\tikzstyle{loop left}=[left, out=165, in=195, loop]
\tikzstyle{loop below}=[below, out=255, in=285, loop]
\title{A Symbolic Algorithm for the Case-Split Rule in Solving Word
Constraints with Extensions\\ (Technical Report)}
\author[1]{Yu-Fang Chen}
\ead{yfc@iis.sinica.edu.tw}
\affiliation[1]{organization={Institute of Information Science, Academia Sinica},
                addressline={128~Academia Road, Section~2, Nangang},
                postcode={11500},
                city={Taipei},
                country={Taiwan}}
\author[2]{Vojt\v{e}ch Havlena}
\ead{ihavlena@fit.vutbr.cz}
\author[2]{Ond\v{r}ej Leng\'{a}l\corref{cor1}}
\ead{lengal@fit.vutbr.cz}
\affiliation[2]{organization={Faculty of Information Technology, Brno University of Technology},
                addressline={Bozetechova~2},
                postcode={61200},
                city={Brno},
                country={Czech Republic}}
\author[3,4]{Andrea Turrini}
\ead{turrini@ios.ac.cn}
\affiliation[3]{organization={State Key Laboratory of Computer Science, Institute of Software, Chinese Academy of Sciences},
                addressline={Haidian District, Zhongguancun 4\# South Fourth Street},
                postcode={100190},
                city={Beijing},
                country={China}}
\affiliation[4]{organization={Institute of Intelligent Software},
                addressline={221, Nansha Street West},
                postcode={511458},
                city={Guangzhou},
                country={China}}
\begin{document}

\begin{abstract}
  \emph{Case split} is a core proof rule in current decision procedures for the
  theory of string constraints.
  Its use is the primary cause of the state space explosion in string
  constraint solving, since it is the only rule that creates branches in the
  proof tree.
  Moreover, explicit handling of the \emph{case split} rule may cause recomputation of
  the same tasks in multiple branches of the proof tree.
  In this paper, we propose a~symbolic algorithm that significantly reduces
  such a~redundancy.
  In particular, we encode a~string constraint as a~regular language and proof
  rules as rational transducers.
  This allows us to perform similar steps in the proof tree only once, alleviating
  the state space explosion.
  We also extend the encoding to handle arbitrary Boolean combinations of
  string constraints, length constraints, and regular constraints.
  In our experimental results, we validate that our technique
  works in many practical cases where other state-of-the-art solvers fail to
  provide an answer;
  our Python prototype implementation solved over 50\,\% of string constraints
  that could not be solved by the other tools.
\end{abstract}

\begin{keyword}
string constraints \sep
satisfiability modulo theories \sep
regular model checking \sep
Nielsen transformation \sep
finite automata \sep
monadic second-order logic over strings
\end{keyword}

\maketitle

\vspace{-0.0mm}
\section{Introduction} \label{section:introduction}
\vspace{-0.0mm}

Constraint solving is a~technique used as an enabling technology in many areas
of formal verification and analysis, such as
symbolic execution (\cite{CadarGPDE06Exe,GodefroidKS05DART,King76SymbolicExecution,SenKBG13Jalangi}),
static analysis (\cite{8115706,gulwani2008program}), or
synthesis (\cite{gulwani2011,osera2019,knoth2019}).
For instance, in symbolic execution, feasibility of a~path in a~program is
tested by creating a~constraint that encodes the evolution of the values of
variables on the given path and checking if it is satisfiable.
Due to the features used in the analyzed programs,
checking satisfiability of the constraint can be a complex task.
For instance, the solver has to deal with different data types, such as
Boolean, Integer, Real, or String.
Theories for the first three data types are well known, widely developed, and
implemented in tools, while the theory for the String data type has started to
be investigated only
recently in~\cite{abdulla2014string,BerzishGZ17Z3str3,BjornerTV09PathFeasibility,ChenHLRW19DecisionProcedures,ChenCHLW18WhatIsDecidable,HolikJLRV18,LinM18QuadraticWordEquations,LiangRTBD14DPLLStrings,WangTLYJ16StringAutomata,YuABI14AutomataString,AbdullaACDHRR17FlattenConquer,abdulla2015norn,KiezunGAGHE12HAMPI,lin2016string,berzish2021smt,reynolds2019high,blotsky2018stringfuzz,stanford2021symbolic,loring2019sound,trinh2020inter,chen2019decision},
despite having been considered already by A.\@ A.\@ Markov in the late 1960s in
connection with Hilbert's 10th
problem (see, e.g., \cite{Matiyasevich68Hilbert,DurnevZ09EquationsFreeGroups,Kosovskii76EquationsFreeGroups}).

Most current decision procedures for string constraints involve the so-called
\emph{case-split} rule.
This rule performs a~case split with respect to the possible alignment of the variables.
The case-split rule is used in most, if not all, (semi-)decision procedures for
string constraints, including Makanin's algorithm in~(\cite{makanin1977problem}), Nielsen
transformation~(\cite{Nielsen17}) (also known as the Levi's lemma~(\cite{Levi44})), and the
procedures implemented in most state-of-the-art solvers such as
Z3~(\cite{BjornerTV09PathFeasibility}), CVC4~(\cite{LiangRTBD14DPLLStrings}),
Z3Str3~(\cite{BerzishGZ17Z3str3}), Norn~(\cite{abdulla2014string}), and many more.
In this paper, we will explain the general idea of our symbolic approach using 
the Nielsen transformation, which is the simplest of the approaches;
nonetheless, we believe that the approach is applicable also to other procedures.

Consider the \emph{word equation} $xz = yw$, the primary type of \emph{atomic
string constraints} considered in this paper, where $x$, $z$, $y$, and~$w$ are
\emph{word variables}.
When establishing satisfiability of the word equation,
the Nielsen transformation (introduced in~\cite{Nielsen17}) proceeds by first performing a~case
split based on the possible alignments of the variables~$x$ and~$y$, the first
symbol of the left and right-hand sides of the equation, respectively.
More precisely, it reduces the satisfiability problem for $xz = yw$ into
satisfiability of (at least) one of the following four cases
\begin{inparaenum}[(1)]
\item
	$y$~is a prefix of $x$,\label{itm:NTxgeqy}
\item
	$x$~is a prefix of $y$,\label{itm:NTygeqx}
\item
	$x$~is an empty string, and\label{itm:NTxzero}
\item
	$y$~is an empty string.\label{itm:NTyzero}
\end{inparaenum}
Note that these cases are not disjoint:
for instance, the empty string is a prefix of every variable.
For these cases, the Nielsen transformation generates the following
equations.

For the case~\eqref{itm:NTxgeqy}, i.e., $y$ is a~prefix of~$x$, all occurrences
of~$x$ in $xz = yw$ are replaced with~$yx'$,
where $x'$ is a~fresh word variable (we denote this case as
$\seTransform{x}{yx'}$), i.e., we obtain the equation $yx'z = yw$, which can be
simplified to $x'z = w$.
In fact, since the transformation $\seTransform{x}{yx'}$ removes all occurrences
of the variable~$x$, we can just reuse the variable~$x$ and perform the
transformation $\seTransform{x}{yx}$ instead (and take this into account when
constructing a~model later).

Case~\eqref{itm:NTygeqx} of the Nielsen transformation is just a~symmetric
counterpart of case~\eqref{itm:NTxgeqy} discussed above.
For cases~\eqref{itm:NTxzero} and~\eqref{itm:NTyzero}, $x$ and $y$,
respectively, are replaced by empty strings.
Taking into account all four possible transformations of the equation $xz =
yw$, we obtain the following equations:
\begin{align*}
	\eqref{itm:NTxgeqy}\ xz & = w & \eqref{itm:NTygeqx}\ z & = yw & \eqref{itm:NTxzero}\ z & = yw & \eqref{itm:NTyzero}\ xz & = w
\end{align*}
(Note that the results for \eqref{itm:NTxgeqy} and \eqref{itm:NTyzero} coincide, as well as the results for \eqref{itm:NTygeqx} and \eqref{itm:NTxzero}.)
If $xz = yw$ has a solution, then at least one of the above equations has a~solution, too.
The Nielsen transformation keeps applying the transformation rules on the obtained
equations, building a~proof tree and searching for a~tautology of the form
$\emptyword = \emptyword$.

\begin{figure}[t]
	\centering
\scalebox{0.8}{
\hspace*{-2mm}
\begin{tikzpicture}
\tikzstyle{node_style} = [rectangle,draw,rounded corners=.5cm, minimum height=1.2cm]

\node[node_style] (s_0) at (-1,0) {\begin{tabular}{c} $xz = ab$ \\ $wabyx = awbzy$ \end{tabular}};
\node[node_style] (s_1) at (3,1) {\begin{tabular}{c} $xz = b$ \\ $wabyax = awbzy$ \end{tabular}};
\node[node_style] (s_2) at (3,-1) {\begin{tabular}{c} $z = ab$ \\ $waby = awbzy$ \end{tabular}};

\node[node_style] (s_3) at (7,2) {\begin{tabular}{c} $xz = \emptyword$ \\ $wabyabx = awbzy$ \end{tabular}};
\node[node_style] (s_4) at (7, 0) {\begin{tabular}{c} $z = b$ \\ $wabya = awbzy$ \end{tabular}};

\node[node_style] (s_5) at (11,2) {$wabyab = awby$};

\node[node_style] (s_6) at (11, 0) {$wabya = awbby$};
\node[node_style] (s_7) at (11,-2) {$waby = awbaby$};

\draw (s_0) --++ (0,1) -- node[above] {$\seTransform x {ax}$}  (s_1);
\draw (s_0) --++ (0,-1) -- node[above] {$\seTransform x \emptyword$}  (s_2);
\draw (s_1) --++ (0,1) -- node[above] {$\seTransform x {bx}$}  (s_3);
\draw (s_1) --++ (0,-1) -- node[above] {$\seTransform x \emptyword$}  (s_4);
\draw (s_2) --++ (0,-1) -- node[above] {$\seTransform z  {az} ; \seTransform z {bz};\seTransform z \emptyword$}  (s_7);

\draw (s_3) edge node[above] {$\begin{array}{c}\seTransform z \emptyword \\
  \seTransform x \emptyword \end{array}$} (s_5);
\draw (s_4) edge node[above] {$\begin{array}{c}\seTransform z {bz}\\ \seTransform z \emptyword \end{array}$} (s_6);

\end{tikzpicture}}
\caption{A~partial proof tree of applying the Nielsen transformation on the
  string constraint $xz = ab
  \land wabyx = awbzy$. The leaves are the outcome of completely processing the first word
  equation $xz = ab$. Branches leading to contradictions are omitted.}
\label{fig:nielsen_problem}
\end{figure}
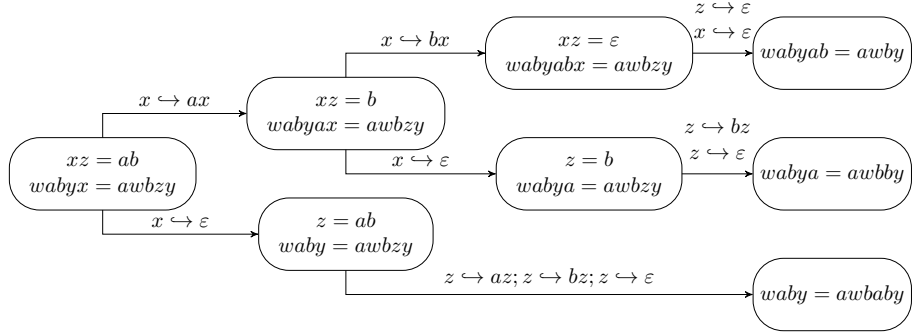
Treating each of the obtained equations separately can cause some redundancy
(as we could already see above).
Let us consider the example in \cref{fig:nielsen_problem}, where we apply
the Nielsen transformation to solve the string constraint $xz = ab \land wabyx =
awbzy$, where $x$, $z$, $w$, and $y$ are word variables and~$a$ and~$b$ are
constant symbols.
After processing the first word equation $xz = ab$, we obtain a~proof tree with
three very similar leaf nodes $wabyab = awby$, $wabya = awbby$, and $waby =
awbaby$, which share the prefixes $waby$ and $awb$ on the left and right-hand
side of the equations, respectively.
If we continue applying the Nielsen transformation on the three leaf nodes, we will
create three very similar subtrees, with almost identical operations.
In particular, the nodes near the root of such subtrees, which transform
$waby\dotso = awb\dotso$, are going to be essentially the same.
The resulting proof trees will therefore start to differ only after processing
such a~common part.
Therefore, handling those equations separately will cause that some operations
will be performed multiple times.
If~the proof tree of each word equation has $n$~leaves and the string
constraint is a conjunction of $k$~word equations, we might need to create~$n^k$
similar subtrees.

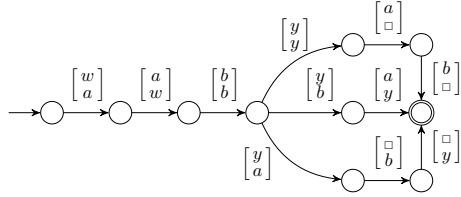
\begin{figure}[t]
	\centering
\scalebox{0.9}{
	\begin{tikzpicture}
    [node distance=1cm]

	\node[state, initial] (s_0) at (0,0) {};
	\node[state] (s_1) [right of=s_0] {};
	\node[state] (s_2) [right of=s_1] {};
	\node[state] (s_3) [right of=s_2] {};
	\node[state] (s_5) [right of=s_3,xshift=4mm] {};
	\node[state] (s_4) [above of=s_5] {};
	\node[state] (s_6) [below of=s_5] {};
	\node[state] (s_7) [right of=s_4] {};
	\node[accepting,state] (s_8) [right of=s_5] {};
	\node[state] (s_9) [right of=s_6] {};
	\draw (s_0) to node[above] {$\twotrack{w}{a}$} (s_1);
	\draw (s_1) to node[above] {$\twotrack{a}{w}$} (s_2);
	\draw (s_2) to node[above] {$\twotrack{b}{b}$} (s_3);
	\draw (s_3) to[bend left] node[above] {$\twotrack{y}{y}$} (s_4);
	\draw (s_3) to node[above,xshift=2mm] {$\twotrack{y}{b}$} (s_5);
	\draw (s_3) to[bend right] node[below left,xshift=-2mm,yshift=4mm] {$\twotrack{y}{a}$} (s_6);
	\draw (s_4) to node[above] {$\twotrack{a}{\pad}$} (s_7);
	\draw (s_7) to node[auto] {$\twotrack{b}{\pad}$} (s_8);

	\draw (s_5) to node[auto] {$\twotrack{a}{y}$} (s_8);

	\draw (s_6) to node[above] {$\twotrack{\pad}{b}$} (s_9);
	\draw (s_9) to node[right] {$\twotrack{\pad}{y}$} (s_8);

	\end{tikzpicture}}
	\caption{A~finite automaton encoding the three equations $wab yab = awb y$, $wab ya = awb by$, and $wab y = awb aby$.}
	\label{fig:fsa_leaves}
\end{figure}
The case split can be performed more efficiently if we process the common part
of the said leaves together using a~symbolic encoding.
In this paper, we use an encoding of a~set of equations as a~regular language,
which is represented by a~\emph{finite automaton}.
An example is given in \cref{fig:fsa_leaves},
which shows a~finite automaton over a~2-track alphabet, where each of the two
tracks represents one side of the equation.
For instance, the equation $wabyab = awby$ is represented by the word
$\twotrack{w}{a}\twotrack{a}{w}\twotrack{b}{b}\twotrack{y}{y}\twotrack{a}{\pad}\twotrack{b}{\pad}$
accepted by the automaton, where the $\pad$ symbol is a~padding used to make
sure that both tracks are of the same length.

Given our regular language-based symbolic encoding, we need a~mechanism to
perform the Nielsen transformation steps on a~set of equations encoded as
a~regular language.
We show that the transformations can be encoded as \emph{rational relations},
represented using \emph{finite transducers}, and the whole satisfiability
checking problem can be encoded within the framework of \emph{regular model
checking}.
We will provide more details on how this is done in
\cref{section:quadratic,section:conjunction,section:full,sec:rmc-extension}
stepwise.
In \cref{section:quadratic}, we describe the approach for a~simpler case where
the input is a~\emph{quadratic word equation}, i.e., a~word equation with at
most two occurrences of every variable.
In this case, the Nielsen transformation is sound and complete, that is, the solution it returns is correct and it returns a solution whenever a~solution exists.
In \cref{section:conjunction}, we extend the technique to support
the \emph{conjunction} of \emph{non-quadratic} word equations.
In \cref{section:full}, we extend our approach to support arbitrary Boolean
combination of string constraints.
\cref{sec:rmc-extension} extends our framework with two additional types of
atomic string constraints---\emph{length} and \emph{regular constraints}---which
can constrain the length of values assigned to word variables and their
membership in a~regular language, respectively.

We have implemented our approach in a~prototype Python tool called \retro and
evaluated its performance on two benchmark sets:
\kepler obtained from~\cite{LeH18} and \pyexhard obtained by running the PyEx
symbolic execution engine on Python programs from \cite{reynolds2017scaling} and
collecting examples on which CVC4 or Z3 fail.
\retro solved most of the problems in \kepler (on which CVC4 and Z3 do not
perform well).
Moreover, it solved over 50\,\% of the benchmarks in \pyexhard that could be
solved by neither CVC4 nor Z3.

This paper is an extended version of the paper that appeared in the proceedings
of APLAS'20 (\cite{ChenHLT20}), containing complete proofs of the presented
lemmas and theorems and further extending the presented technique to handle
\begin{inparaenum}[(i)]
  \item  arbitrary Boolean combination of string constraints (\cref{section:full}),
  \item  length constraints (\cref{sec:length-constraints}), and
  \item  regular constraints (\cref{sec:regular-constraints}).
\end{inparaenum}

\vspace{-0.0mm}
\section{Preliminaries}
\label{sec:preliminaries}
\vspace{-0.0mm}
An \emph{alphabet} $\Sigma$ is a finite set of \emph{characters} and a
\emph{word} over~$\Sigma$ is a sequence $w = a_{1} \ldots a_{n}$ of characters
from $\Sigma$, with $\emptyword$ denoting the \emph{empty word}.
We use $w_{1} \concat w_{2}$ (and often just $w_1 w_2$) to denote the
\emph{concatenation} of words $w_{1}$ and $w_{2}$.
$\Sigma^{*}$ is the set of all words over~$\Sigma$, $\Sigma^{+} = \Sigma^{*}
\setminus \setnocond{\emptyword}$, and $\Sigma_{\emptyword} = \Sigma \cup \setnocond{\emptyword}$.
A~\emph{language} over $\Sigma$ is a~subset~$L$ of~$\Sigma^{*}$.
Given a~word~$w = a_1 \ldots a_n$, we use $|w|$ to denote the length~$n$ of~$w$
and $\occurof a w$ to denote the number of occurrences of the character $a \in
\Sigma$ in~$w$.
Further, we use $w[i]$ to denote $a_{i}$, the $i$-th character of~$w$, and
$\substrof w i$ to denote the word $a_{i} \ldots a_n$.
When $i > n$, the value of $w[i]$ and $\substrof w i$ is in both cases $\bot$,
a~special \emph{undefined} value, which is different from all other
values and also from itself (i.e., $\bot \neq \bot$).
We use $\Sigma^k$ for $k \geq 2$ to denote the \emph{stacked alphabet} consisting
of $k$-tuples of symbols from~$\Sigma$, e.g., $\twotrack a b \in \Sigma^2$ for
$a,b \in \Sigma$.

\subsection{Automata and transducers}

A \emph{(finite) $k$-tape transducer} is a~quintuple $\transducer =
(Q,\Sigma,\Delta,Q_{i},Q_f)$ such that
$Q$ is a~finite set of \emph{states},
$\Sigma$~is an~alphabet,
$\Delta \subseteq Q \times \Sigma_\emptyword^k \times Q$ is a~set of
\emph{transitions} of the form $q \ltr{a^1, \ldots, a^k} s$ for $a^1, \ldots,
a^k \in \Sigma_\emptyword$,
$Q_{i} \subseteq Q$ is a~set of \emph{initial states}, and
$Q_f \subseteq Q$ is a~set of \emph{final states}.
A~run $\pi$ of~$\transducer$ over a~$k$-tuple of words $(w_1, \ldots, w_k)$ is a~sequence
of transitions
$q_0 \ltr{a^1_1, \ldots, a^k_1} q_1$,
$q_1 \ltr{a^1_2, \ldots, a^k_2} q_2$,
$\ldots,
q_{n-1} \ltr{a^1_n, \ldots, a^k_n} q_n \in \Delta$
such that for each $i \in [1,k]$ we have
$w_{i} = a^{i}_1 a^{i}_2 \ldots a^{i}_n$
(note that $a^{i}_m$ can be $\emptyword$, so $w_{i}$ and $w_j$ may be
of a~different length, for $i \neq j$).
The run~$\pi$ is \emph{accepting} if $q_0 \in Q_{i}$ and $q_{n} \in Q_f$, and
a~$k$-tuple $(w_1, \ldots, w_k)$ is \emph{accepted} by~$\transducer$ if there
exists an accepting run of~$\transducer$ over $(w_1, \ldots, w_k)$.
The \emph{language} $\langof \transducer$ of~$\transducer$ is defined as the $k$-ary
relation $\langof \transducer = \{\,(w_1, \ldots, w_k) \in (\Sigma^{*})^k \mid  (w_1, \ldots,
w_k) \text{ is accepted by } \transducer\,\}$.
We call the class of relations accepted by transducers \emph{rational
relations}.
$\transducer$~is \emph{length-preserving} if no transition in $\Delta$ contains
$\emptyword$;
the class of relations accepted by length-preserving transducers
is named as \emph{regular relations}. 
For a 2-tape transducer $\transducer$ and $(w_1, w_2) \in \langof\transducer$,
we denote $w_1$ as an input and $w_2$ as an output of $\transducer$. 
A~\emph{finite automaton} (FA) is a~1-tape finite transducer;
languages accepted by finite automata are called \emph{regular
languages}.
See, e.g., \cite{Pin2021} for more details on automata and transducers.

Given two $k$-ary relations $R_1, R_2$, we define their \emph{concatenation}
$R_1 \concat R_2 = \{\,(u_1v_1, \dots, u_kv_k) \in (\Sigma^{*})^k \mid (u_1,\dots,u_k)\in R_1 \land
(v_1,\dots,v_k)\in R_2 \,\}$ and given two binary relations $R_1, R_2$, we
define their \emph{composition} $R_1 \compose R_2 = \{\,(x, z) \in (\Sigma^{*})^2 \mid \exists y \in \Sigma^{*}\colon
(x, y) \in R_2 \land (y,z) \in R_1\,\}$.
Given a~$k$-ary relation $R$ we define $R^0 = \{\emptyword\}^k$, $R^{i+1} =
R.R^{i}$ for $i\geq 0$.
\emph{Iteration} of $R$ is then defined as $R^{*} = \bigcup_{i\geq 0}R^{i}$.
Given a~language $\lang \subseteq \Sigma^*$ 
and a~binary relation $R$, we use $R(\lang)$ to denote the language
$\{\,y \in \Sigma^{*} \mid \exists x \in \lang\colon (x, y) \in R\,\}$, called the $R$-\emph{image}
of~$\lang$. We also use $R^{-1}(w)$ to denote the language $\{ u \mid (w,u) \in R \}$, called the \emph{preimage} of a~word~$w$.

\begin{proposition}[\cite{Berstel79}]\label{prop:rat-prop}
  The following propositions hold:
	\begin{enumerate}[(i)]
    \item  The class of binary rational relations is closed under (finite) union,
      composition, concatenation, and iteration;
      it is not closed under intersection and complement.
    \item For a~binary rational relation $R$, a~regular language $\lang$, and a word $w$, the
      languages $R(\lang)$ and $R^{-1}(w)$ are also effectively regular (i.e., they can be
      computed).
    \item The class of regular relations is closed under Boolean operations.
	\end{enumerate}
\end{proposition}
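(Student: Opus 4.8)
The plan is to treat the three items by direct automaton and transducer constructions, since each is a classical closure fact provable from the definitions in \cref{sec:preliminaries} alone. For (ii) and (iii) the work essentially reduces to standard finite-automaton manipulations, so I would spend most of the effort on the composition case of (i), which is the genuinely delicate one.

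For (i), I would first dispatch the easy closures. \emph{Union}: take the disjoint union of transducers $\transducer_1$ and $\transducer_2$, keeping both sets of initial and final states; the result accepts $\langof{\transducer_1} \cup \langof{\transducer_2}$. \emph{Concatenation}: link every final state of $\transducer_1$ to every initial state of $\transducer_2$ by a fresh $\twotrack{\emptyword}{\emptyword}$-transition, declare $\transducer_1$'s initial states initial and $\transducer_2$'s final states final; a run then spells a pair of $R_1 \concat R_2$. \emph{Iteration}: to the concatenation gadget add $\twotrack{\emptyword}{\emptyword}$ back-edges from final to initial states together with a fresh initial-and-final state contributing $\{\emptyword\}^2 = R^0$, yielding $R^*$. \emph{Composition} $R_1 \compose R_2$ is the crux: I would build a product transducer whose states are pairs $(q_1,q_2)$ and which synchronises on the intermediate word $y$, letting $\transducer_2$ emit a middle symbol $b$ on its output track that $\transducer_1$ consumes on its input track. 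The subtlety is that the transducers are not length-preserving, so the two factors must be allowed to move asynchronously via $\emptyword$ on the shared $y$-coordinate; proving that this product realises exactly $\{(x,z)\mid \exists y\colon (x,y)\in R_2 \land (y,z)\in R_1\}$ is the Elgot--Mezei argument, and that is where the care lies.

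For the negative half of (i), I would exhibit the standard counterexample to closure under intersection: over $\Sigma \supseteq \{a,b,c\}$ set $R_1 = \{(a^n, b^n c^m)\mid n,m\ge 0\}$ and $R_2 = \{(a^n, b^m c^n)\mid n,m\ge 0\}$, both plainly rational. Their intersection is $\{(a^n, b^n c^n)\mid n\ge 0\}$, whose output projection $\{b^n c^n\}$ is not regular; since projecting any rational relation onto one coordinate (erase the other track of the transducer) yields a regular language, the intersection cannot be rational. Non-closure under complement then comes for free from closure under union via De~Morgan: $\Sigma^*\times\Sigma^*$ is rational, so were the class closed under complement it would be closed under intersection through $R_1\cap R_2 = \overline{\overline{R_1}\cup\overline{R_2}}$, contradicting the above.

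For (ii), $R(\lang)$ is the image of a regular language under a rational relation: given an FA $\aut$ for $\lang$ and a transducer $\transducer$ for $R$, I would form the product that runs $\aut$ on the input track of $\transducer$, advancing $\aut$ exactly on non-$\emptyword$ input symbols, and then project onto the output track; the result is an FA (with removable $\emptyword$-moves) for $R(\lang)$, and the construction is effective. For $R^{-1}(w)$ I would note that the converse $R^{-1}$ is rational (swap the two tracks of every transition of $\transducer$) and that $\{w\}$ is regular, so $R^{-1}(w)=R^{-1}(\{w\})$ is effectively regular by the image construction just given. Finally, for (iii), a length-preserving $k$-tape transducer is precisely a finite automaton over the stacked alphabet $\Sigma^k$, so regular $k$-ary relations correspond bijectively to regular languages over $\Sigma^k$; union, intersection (product), and complement (determinise-and-complement, taken inside $(\Sigma^k)^*$, i.e.\ among equal-length tuples) transfer directly, giving Boolean closure. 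The one real obstacle throughout is the composition construction of (i): because rational transducers may read and write $\emptyword$, the product must handle asynchronous $\emptyword$-moves on the shared coordinate correctly, and showing it neither loses nor invents a witness $y$ needs a genuine induction on runs rather than a one-line gadget.
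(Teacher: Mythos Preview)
Your proof is correct and follows the standard route one finds in the cited reference, but note that the paper itself does not prove this proposition at all: it is stated as a citation of \cite{Berstel79} and used as a black box throughout. So there is no ``paper's own proof'' to compare against; you have supplied what the paper deliberately omits.

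One small notational wrinkle worth flagging: the paper defines $R^{-1}(w) = \{u \mid (w,u) \in R\}$, which despite the name ``preimage'' is actually the \emph{image} of the singleton $\{w\}$ under $R$, i.e., $R(\{w\})$ in the paper's own notation. Your argument for (ii) takes the converse relation and applies the image construction, which would yield $\{u \mid (u,w) \in R\}$ instead. This is harmless---both directions are handled by the same image-of-a-regular-language construction you already gave---but if you are matching the paper's definition exactly, no track-swapping is needed: $R^{-1}(w)$ is just the special case $\lang = \{w\}$ of the $R(\lang)$ construction.
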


\subsection{String constraints}
Let $\Sigma$ be an alphabet and $\vars$ be a~set of \emph{word variables}
ranging over~$\Sigma^{*}$ s.t. $\vars \cap \Sigma = \emptyset$.
We use $\svars$ to denote the extended alphabet~$\Sigma\cup\vars$.
An \emph{assignment of $\vars$} is a mapping $I \colon \vars\rightarrow\Sigma^{*}$.
A~\emph{word term} is a~string over the alphabet $\svars$.
We lift an assignment~$I$ to word terms by defining $I(\emptyword) = \emptyword$,
$I(a)=a$, and $I(x.w) = I(x).I(w)$, for $a \in\Sigma$, $x \in \svars$, and $w
\in \svars^{*}$.
A~\emph{word equation}~$\varphi_e$ is of the form $t_{1} = t_{2}$ where $t_{1}$
and $t_{2}$ are word terms.
$I$ is a~\emph{model} of~$\varphi_e$ if $I(t_{1})=I(t_{2})$.
We call a~word equation an \emph{atomic string constraint}.
A~\emph{string constraint} is obtained from atomic string constraints using
Boolean connectives ($\wedge, \vee, \neg$), with the semantics defined in the
standard manner.
A~string constraint is \emph{satisfiable} if it has a~model.
Given a~word term $t \in \svars^{*}$, a variable $x \in \vars$, and
a~word term $u \in \svars^{*}$, we use $\substof t x u$ to denote the
word term obtained from~$t$ by replacing all occurrences of~$x$ by~$u$,
e.g.~$\substof{(abxcxy)} x {cy} = abcyccyy$.
We~call a~string constraint~$\psi$ \emph{quadratic} if each variable has at most two
occurences in~$\psi$, and \emph{cubic} if each variable has at most three
occurences in~$\psi$.

We use the following terminology.
Let~$\Phi$ be a~string constraint.
A~\mbox{(semi-)algo}\-ritm~\textbf{A} for solving string constraints is
\begin{enumerate}
  \item  \emph{sound} if it holds that if~\textbf{A} returns an assignment $I$
    to the variables of~$\Phi$, then~$I$ is a~model of~$\Phi$,
  \item  \emph{complete} if it holds that if~$\Phi$ is satisfiable,
    then~\textbf{A} returns a~model of~$\Phi$ in a~finite number of steps, and
  \item  \emph{terminating} if it holds that~\textbf{A} always returns an
    assignment or $\false$ in a~finite number of steps.
\end{enumerate}

\vspace{-0.0mm}
\subsection{Monadic Second-Order Logic on Strings (\msostr)}
\label{ssec:mso_str}
\vspace{-0.0mm}

We define \emph{monadic second-order logic on strings} (\msostr) (\cite{buchi1960weakSOAautomata}) over the
alphabet $\Gamma$ as follows.
Let $\wordvars$ be a~countable set of \emph{string variables} whose values range over
$\Gamma^{*}$ and
$\posvars$ be a~countable set of \emph{set (second-order) position variables} whose
values range over finite subsets of $\nat_1 = \nat \setminus \{0\}$ such that $\wordvars \cap \posvars =
\emptyset$.
A~formula~$\varphi$ of~\msostr is defined as
\begin{align*}
  \varphi ::=
  P \subseteq R \mid
  P = R + 1 \mid
  w[P] = a \mid
  \varphi_{1} \land \varphi_{2} \mid
  \neg \varphi \mid
  \forallp P (\varphi) \mid
  \forallw w (\varphi)
\end{align*}
where $P,R \in \posvars$,
$w \in \wordvars$, and
$a \in \Gamma$.
We use $\varphi(w_1, \ldots, w_k)$ to denote that the free variables of $\varphi$
are contained in $\{w_1, \ldots, w_k\}$.

\begin{figure}[t]
\centering
\begin{tabular}{l@{~~}l}
  $\msoint \models P \subseteq R$ &
  iff $\msoint(P)$ is a subset of $\msoint(R)$
  \\
  $\msoint \models P = R + 1$ &
  iff $\msoint(P) = \{r + 1 \mid r \in \msoint(R)$ and $r + 1 \leq \lenofint
  \msoint\}$
  \\
  $\msoint \models w[P] = a$ &
  iff for all $p \in P$ it holds that $\msoint(w)[p]$ is $a$
  \\
  $\msoint \models \varphi_{1} \land \varphi_{2}$ &
  iff $\msoint \models \varphi_{1}$ and $\msoint \models \varphi_{2}$
  \\
  $\msoint \models \neg \varphi$ &
  iff not $\msoint \models \varphi$
  \\
  $\msoint \models \forallp P (\varphi)$ &
  iff for all $v \subseteq \{1, \ldots, \lenofint \msoint\}$ it
  holds that $\substof \msoint P v \models \varphi$
  \\
  $\msoint \models \forallw w (\varphi)$ &
  iff for all $v \in \Gamma^{\lenofint \msoint}$ it
  holds that $\substof \msoint w v \models \varphi$
\end{tabular}
\caption{Semantics of \msostr}
\label{tab:semantics}
\end{figure}

\begin{figure}[t]
	\centering
\begin{align*}
  \varphi \lor \psi ~~\defiff~~ {}&
   \neg (\varphi \land \psi)
   &
  \existsp P (\varphi) ~~\defiff~~ {}&
  \neg \forallp P (\neg \varphi)
  \\
  \varphi \limpl \psi ~~\defiff~~ {}&
   \neg \varphi \lor \psi
  &
  \existsw w (\varphi) ~~\defiff~~ {} &
  \neg \forallw w (\neg \varphi)
  \\
  P = R ~~\defiff~~ {} &
  P \subseteq R ~\land~ R \subseteq P
  &
  P = \emptyset ~~\defiff~~ {} &
  \forallp R (P \subseteq R)
  \\
  \sing (P) ~~\defiff~~ {} &
  \neg (P = \emptyset) ~\land~ \forallp R(R \subseteq P
  \limpl (R = \emptyset ~\lor~ R = P))
  \span\omit\span\omit
  \\
  p \in R ~~\defiff~~ {} &
   \sing(p) \land p \subseteq R
  \\
  p \leq r ~~\defiff~~ {} &
  \forallp T((p \in T  \land \forallp u (u \in T
  \limpl \existsp v(v = u+1 \land v \in T))) \limpl r \in T)
  \span\omit\span\omit
  \\
  p < r ~~\defiff~~ {} &
  p \leq r \land \neg (p = r)
  \\
  x = 1 ~~\defiff~~ {} &
  \forallp u(u \leq x \rightarrow u = x)
  &
  x = \strend ~~\defiff~~ {} &
  \forallp u(x \leq u \rightarrow u = x)
  \\
  w_1[P] = w_2[R] ~~\defiff~~ {} &
  \bigvee_{a \in \Gamma} (w_1[P] = a \land w_2[R] = a)
  \span\omit\span\omit
\end{align*}
\begin{flushleft}
for all $j \geq 1$ and any term $\gamma$:
\end{flushleft}
\begin{align*}
  P = R + j ~~\defiff~~ {} &
  \existsp S_1 \ldots \existsp S_j(S_1 = P + 1 \land S_2 = S_1 + 1 \land \ldots \land R = S_j + 1)
  \\
  p = j ~~\defiff~~ {} &
  \existsp z (z = 1 \land p = z + (j - 1))
  \\
  w[P + j] = \gamma ~~\defiff~~ {} &
  \existsp S(S = P + j \land w[S] = \gamma)
  \\
  w[P - j] = \gamma ~~\defiff~~ {} &
  \existsp S(P = S + j \land w[S] = \gamma)
  \\
  w[\strend - j] = \gamma ~~\defiff~~ {} &
  \existsp r\existsp s(r = \strend \land r = s + j \land w[s] = \gamma)
  \\
  w[j] = \gamma ~~\defiff~~ {} &
  \existsp p (p = j \land w[p] = \gamma)
  \\
  p \geq j ~~\defiff~~ {} &
  \existsp s (s = j \land p \geq s)
\end{align*}
\caption{Syntactic sugar for \msostr}
\label{fig:mso_sugar}
\end{figure}
The semantics of \msostr is defined in \cref{tab:semantics}.
An \emph{\msostr variable assignment} is an assignment $\msoint \colon \wordvars \cup \posvars
\to (\Gamma^{*} \cup 2^{\nat_1})$ that respects the types of variables with the
additional requirement that for every $u, v \in \wordvars$ we have $|\msoint(u)|
= |\msoint(v)|$.
(We often omit unused variables in $\msoint$.)
We use $\lenofint{\msoint}$ to denote the value $|\msoint(w)|$ of any $w \in
\wordvars$.
Note that $\lenofint{\msoint}$ is well defined since we assume that all $w \in
\wordvars$ are mapped to strings of the same length. 
The notation $\substof \msoint x v$ denotes a~variant of $\msoint$ where the
assignment of variable $x$ is changed to the value~$v$.

We call an \msostr formula a~\emph{string formula} if it contains no free
position variables.
Such a~formula (with $k$~free string variables) denotes a~$k$-ary relation
over~$\Gamma^{*}$.
In particular, given an \msostr string formula~$\varphi(w_1, \ldots, w_k)$ with
$k$~free string variables $w_1, \ldots, w_k$, we use $\langof \varphi$ to denote
the relation $\{\,(x_1, \ldots, x_k) \in (\Gamma^{*})^k \mid \{w_1 \mapsto x_1,
\ldots, w_k \mapsto x_k\} \models \varphi \,\}$.
In the special case of $k = 1$, $\varphi$ denotes a~language $\langof \varphi
\subseteq \Gamma^{*}$.

\begin{proposition}[\cite{ThatcherW68}]\label{prop:msostr-power}
  The class of languages denoted by \msostr string formulae with $1$~free string
  variable is exactly the class of regular languages.
  Furthermore, the class of relations denoted by \msostr string formulae with
  $k$~free string variables, for $k > 1$, is exactly the class of regular
  relations.
\end{proposition}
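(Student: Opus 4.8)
The plan is to prove both directions of the classical B\"uchi--Elgot--Trakhtenbrot correspondence, treating the $k=1$ and $k>1$ cases uniformly by encoding variable assignments as words over a suitable product alphabet. Concretely, for a formula $\varphi$ whose free variables are the string variables $w_1,\dots,w_k$ and the position variables $P_1,\dots,P_m$, I would encode an assignment $\msoint$ of common length $n = \lenofint\msoint$ as a single word of length $n$ over the alphabet $\Gamma^k \times \{0,1\}^m$: position $i$ carries the $k$ letters $\msoint(w_1)[i],\dots,\msoint(w_k)[i]$ together with $m$ bits recording, for each $j$, whether $i \in \msoint(P_j)$. Because the \msostr semantics forces all string variables to have the same length, this convolution is well defined and the tracks can be read synchronously. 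I would then prove, by structural induction, the strengthened statement that for every $\varphi$ the set of encodings of its models is a regular language over $\Gamma^k \times \{0,1\}^m$; the proposition follows by restricting to string formulae (so $m=0$), which yields a regular language when $k=1$ and, since the alphabet is then the stacked alphabet $\Gamma^k$ read synchronously, a regular relation when $k>1$.

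For the direction from automata to \msostr, I would show that every length-preserving $k$-tape transducer $\transducer = (Q,\Gamma,\Delta,Q_i,Q_f)$ is captured by a string formula describing an accepting run. The idea is to existentially quantify one position-set variable $X_q \in \posvars$ per state $q \in Q$ and assert, using the sugar of \cref{fig:mso_sugar}, that the $X_q$ partition $\{1,\dots,\lenofint\msoint\}$, that the first position lies in a state entered by a transition from some state in $Q_i$ on the first letters of the tracks, that the last position lies in a state from which some state in $Q_f$ is reached, and that every consecutive pair of positions is consistent with some transition of $\Delta$ on the letters read from the $k$ tracks. The letter tests needed are exactly the in-logic predicates $w_j[P]=a$ and $w_1[P]=w_2[R]$.

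For the harder direction, from \msostr to automata, I would induct on the structure of $\varphi$ over $\Gamma^k \times \{0,1\}^m$. The base cases $P \subseteq R$, $P = R+1$, and $w[P]=a$ are each recognised by a small automaton inspecting the relevant bit- and letter-tracks. For $\varphi_1 \land \varphi_2$ I would lift the two subautomata to a common set of tracks (a cylindrification) and intersect; for $\neg\varphi$ I would complement; and for the quantifiers $\forallp P(\varphi)$ and $\forallw w(\varphi)$, which by \cref{fig:mso_sugar} are dual to existential quantification, I would delete the corresponding track by projection, complementing before and after as needed. Every step preserves regularity by \cref{prop:rat-prop}(iii) together with closure under projection; crucially, since every track has length $n$, projection removes one track without desynchronising the others, so we never leave the class of regular relations.

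The main obstacle is the quantifier step and its interaction with the length invariant. Projecting away a string or position track is a letter-to-letter morphism, but one must verify that the projected automaton still accepts only encodings in which all remaining string tracks agree in length; this holds automatically here precisely because the transducers are length-preserving, and it is exactly this synchrony that separates the regular relations captured by \msostr from the strictly larger class of rational relations of \cref{prop:rat-prop}. A secondary technicality is the track bookkeeping when combining two subformulas with differing free variables under $\land$, which the cylindrification above resolves. The determinisation incurred by complementation causes only a size blow-up and does not affect correctness.
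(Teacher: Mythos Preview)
Your sketch is the standard B\"uchi--Elgot--Trakhtenbrot argument and is correct for the \msostr setting defined here; the key observation you make---that the semantics of \cref{ssec:mso_str} forces all string variables to share a common length $\lenofint\msoint$, so the convolution over $\Gamma^k \times \{0,1\}^m$ is synchronous and projection never desynchronises tracks---is exactly what pins the expressiveness at regular relations rather than the larger rational class. Two minor points: in the paper's syntax the universal quantifiers $\forallp,\forallw$ are primitive and $\existsp,\existsw$ are derived (not the other way round), so you should either handle $\forall$ directly via complement--project--complement or simply remark that the two are interdefinable; and in the automaton-to-formula direction you should also handle the degenerate case $\lenofint\msoint = 0$, where acceptance reduces to $Q_i \cap Q_f \neq \emptyset$ and there are no positions to quantify over.

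As for comparison with the paper: the paper does not supply a proof of this proposition at all. It is stated with a citation to Thatcher and Wright (1968) and used as a black box throughout (e.g., in the proof of \cref{lem:eq-enc} and in \cref{sssec:enc_rational}). So there is no ``paper's own proof'' to compare against; your argument is simply the classical one that underlies the cited result, and it would serve perfectly well if a self-contained proof were wanted.
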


\vspace{-0.0mm}
\paragraph{Syntactic sugar for \msostr{}}
\vspace{-0.0mm}

In \cref{fig:mso_sugar}, we define the standard syntactic sugar to
allow us to write more concise \msostr formulae.
Most of the sugar is standard,
let us, however, explain some of the less standard notation:
$\sing(P)$ denotes that~$P$ is a~singleton set of positions,
$p \leq r$ denotes that~$p$ and~$r$ are single positions and that~$p$
is less than or equal to~$r$,
$x = 1$ and $x = \strend$ denote that $x$ is the first and the last position
respectively, and
$P = R+j$ denotes that~$P$ is equal to~$R$ with all positions incremented by~$j$.
We also extend our syntax to allow first-order variables (we abuse
notation and use the same quantifier notation as for second-order variables, but
denote the first-order variable with a~lowercase letter):
\begin{align*}
  \forallp p (\varphi) & {}\defiff \forallp P (\sing(P) \limpl \substof \varphi p P) \\
  \existsp p (\varphi) & {}\defiff \existsp P (\sing(P) \land \substof \varphi p P)
\end{align*}
where $\substof \varphi p P$ denotes the substitution of all free occurrences
of $p$ in $\varphi$ by~$P$.

\vspace{-0.0mm}
\subsection{Nielsen Transformation}
\label{ssec:nielsen}
\vspace{-0.0mm}
\begin{figure}[t]
  \centering
	\begin{tabular}{c}
		$\NielsenTransformationST{\alpha u = \alpha v}{u = v}$ \trimtrans \qquad\qquad
		$\NielsenTransformationST{xu = v}{\substof{u}{x}{\emptyword} = \substof{v}{x}{\emptyword}}$ ($\seTransform{x}{\emptyword}$)
		\\
		\\[1mm]
		$\NielsenTransformationST{xu = \alpha v}{x(\substof{u}{x}{\alpha x}) =
    \substof{v}{x}{\alpha x}}$ ($\seTransform{x}{\alpha x}$) \qquad
	\end{tabular}
	\caption{Rules of the Nielsen transformation, with $x~\in~\vars$,
    $\alpha~\in~\svars$, and $u,v \in \svars^{*}$.
	 Symmetric rules are omitted.}
	\label{tab:NielsenTransformations}
\end{figure}

As already briefly mentioned in the introduction, the Nielsen transformation can be
used to check satisfiability of a~conjunction of word equations.
We use the three rules shown in \cref{tab:NielsenTransformations};
besides the rules $\seTransform{x}{\alpha x}$ and $\seTransform{x}{\emptyword}$ that we
have seen in the introduction, there is also the \trimtrans rule, used to remove
a~shared prefix from both sides of the equation.

Given a system of word equations, multiple Nielsen transformations might be
applicable to it, resulting in different transformed equations on which other
Nielsen transformations can be performed, as shown in
\cref{fig:nielsen_problem}.
Trying all possible transformations generates a tree (or a~graph in general)
whose nodes contain conjunctions of word equations and whose edges are labelled
with the applied transformation.
The conjunction of word equations in the root of the tree is satisfiable if and only if at
least one of the leaves in the graph is a tautology, i.e., it contains
a~conjunction of the form $\emptyword = \emptyword \land \cdots \land
\emptyword = \emptyword$.
As an example, consider the satisfiable equation $xy=ax$ where
$x,y$ are word variables and $a$ is a symbol with the proof graph in
\cref{fig:nielsen-proof-graph}.

\begin{figure}[t]
  \centering
	\begin{tikzpicture}
  [node distance=3cm,->,>=stealth,transform shape, scale=0.8]

  \tikzstyle{state} = [rectangle,draw,rounded corners=.2cm,draw=black,minimum height=0.7cm, inner sep=3pt]

  \node[state, initial, initial text={}] (s0) at (0,0) {$xy = ax$};
  \node[state, right of=s0] (s2) {$y = a$};
  \node[state, yshift=0-10mm, above of=s2] (s3) {$\emptyword = a$};
  \node[state, right of=s2] (s4) {$y = \emptyword$};
  \node[state, right of=s4] (s5) {$\emptyword = \emptyword$};

  \draw (s0) edge [loop above] node[above] {$\seTransform{x}{ax}$} (s0);
  \draw (s0) to node[above] {$\seTransform{x}{\emptyword}$} (s2);
  \draw (s2) to node[right, yshift=0mm] {$\seTransform{y}{\emptyword}$} (s3);
  \draw (s2) to node[above] {$\seTransform{y}{ay}$} (s4);
  \draw (s4) to node[above] {$\seTransform{y}{\emptyword}$} (s5);


\end{tikzpicture}
  \caption{Proof graph of the equation $xy=ax$ generated by the Nielsen transformation.}
\label{fig:nielsen-proof-graph}
\end{figure}
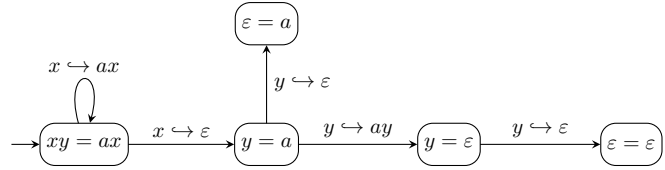

\begin{lemma}[cf.\@ \cite{makanin1977problem,Diekert02Makanin}]
\label{lem:Nielsen}
	The Nielsen transformation is sound and complete. Moreover, if the system of word
	equations is quadratic, the proof graph is finite.
\end{lemma}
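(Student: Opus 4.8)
The plan is to prove the three assertions separately: soundness and completeness by induction along paths of the proof graph, and finiteness in the quadratic case by a size-tracking argument. For \emph{soundness} I would verify that each rule of \cref{tab:NielsenTransformations} (and its symmetric counterpart) lifts models \emph{backwards}, i.e.\ a model of the child equation yields a model of the parent. For \trimtrans this is immediate, since $I(\alpha u) = I(\alpha)I(u)$, so the same assignment $I$ witnesses both $u = v$ and $\alpha u = \alpha v$. For $\seTransform{x}{\emptyword}$, a model $I$ of $\substof{u}{x}{\emptyword} = \substof{v}{x}{\emptyword}$ gives the model $\substof{I}{x}{\emptyword}$ of $xu = v$ via the substitution identity $\substof{I}{x}{\emptyword}(t) = I(\substof{t}{x}{\emptyword})$. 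For $\seTransform{x}{\alpha x}$, a model $I$ of $x(\substof{u}{x}{\alpha x}) = \substof{v}{x}{\alpha x}$ gives the model $I' = \substof{I}{x}{I(\alpha)I(x)}$ of $xu = \alpha v$: using $\alpha \neq x$ (else \trimtrans applies) one gets $I'(t) = I(\substof{t}{x}{\alpha x})$ for the residual terms, whence $I'(xu) = I(\alpha)\,[\,I(x)\,I(\substof{u}{x}{\alpha x})\,] = I(\alpha)\,I(\substof{v}{x}{\alpha x}) = I'(\alpha v)$. Since the tautology leaf $\emptyword = \emptyword \wedge \cdots$ has the all-empty model, induction on the length of a path from such a leaf back to the root produces a model of the original system, which is exactly what the algorithm returns.

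For \emph{completeness} I would argue the converse direction: given a model $I$ of a non-tautological system, some applicable rule yields a child that still has a model and \emph{strictly decreases} a well-founded measure. The measure I would use is the lexicographic pair consisting of the total length of the evaluated left-hand sides $\sum_e \size{I(t_1^e)}$ across the system and the total number $N$ of variable occurrences. A case analysis on some non-trivial equation, guided by $I$, always supplies a rule: if the two sides start with the same syntactic symbol, \trimtrans removes it; otherwise the leading symbol of at least one side is a variable $x$, and since $I(x)$ and the evaluated leading symbol of the opposite side are both prefixes of the common word $W = I(t_1^e)$ they are comparable, so we apply $\seTransform{x}{\emptyword}$ when $I(x) = \emptyword$ (keeping $\size{W}$ fixed while lowering $N$) or $\seTransform{x}{\alpha x}$ with $\alpha$ the opposite leading symbol when $I(x) \neq \emptyword$ (lowering $\size{W}$, and by the substitution identity leaving the evaluated sides of all \emph{other} equations unchanged). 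Well-foundedness then yields a finite path from the input system to the tautology leaf, which any fair (e.g.\ breadth-first) exploration of the graph discovers, after which soundness converts the leaf into the returned model.

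For \emph{finiteness} in the quadratic case the plan has two ingredients. First I would show the rules preserve quadraticity; the only delicate rule is $\seTransform{x}{\alpha x}$ with $\alpha = y$ a variable. Writing $k_x, k_y \le 2$ for the total occurrences of $x$ and $y$, the transformation keeps the leading $x$, rewrites every remaining $x$ into $yx$, and deletes the leading $y$, so $x$ still has $k_x$ occurrences while $y$ now has $(k_y - 1) + (k_x - 1) = k_x + k_y - 2 \le 2$ occurrences. Second—and this is where I expect the \textbf{main obstacle}—I would bound the total size of the system. The danger is that applying $\seTransform{x}{\alpha x}$ \emph{globally} lengthens every other equation containing $x$; the quadratic hypothesis is exactly what tames this, since $x$ has at most one further occurrence in the \emph{whole} system, so the single inserted copy of $\alpha$ is offset by deleting the leading $\alpha$ of the processed equation, making the total number of symbols across the system non-increasing (and strictly decreasing for \trimtrans and $\seTransform{x}{\emptyword}$). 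Because variables are reused rather than created, the variable set and alphabet are fixed and finite, the number of equations is constant, and each system has bounded total length; hence only finitely many distinct systems can appear, so the proof graph is finite.
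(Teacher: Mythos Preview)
The paper does not prove this lemma at all; it states it with the citation ``cf.\ \cite{makanin1977problem,Diekert02Makanin}'' and immediately moves on, treating it as a known result from the literature. So there is no paper proof to compare against.

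Your argument is the standard one and is essentially correct. One small refinement is needed in the completeness part: your case split says ``the leading symbol of at least one side is a variable $x$'' and then branches on whether $I(x)=\emptyword$, but if \emph{both} leading symbols are variables you must be careful which one you call $x$. Specifically, if the equation is $xu = yv$ with $I(x)\neq\emptyword$ and $I(y)=\emptyword$, picking $x$ and applying $\seTransform{x}{yx}$ leaves the first component of your measure unchanged (you subtract $|I(y)|=0$) and need not decrease the second either. The fix is trivial: when a leading variable with empty value exists, always choose \emph{that} one and apply the $\emptyword$-rule; only when every leading variable has nonempty value do you apply $\seTransform{x}{\alpha x}$, and then $|I(\alpha)|\ge 1$ so the first component strictly drops. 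With that adjustment your measure argument goes through, and the finiteness argument for the quadratic case is exactly the standard size-non-increase computation.
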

\cref{lem:Nielsen} is correct even if we construct the proof tree using the
following strategy:
every application of $\seTransform{x}{\alpha x}$ or $\seTransform{x}{\emptyword}$
is followed by as many applications of the \trimtrans rule as possible.
We use $\rewriteto{x}{\alpha x}$ to denote the application of one
$\seTransform{x}{\alpha x}$ rule followed by as many applications of \trimtrans as
possible, and $\rewriteto{x}{\emptyword}$ for the application of
$\seTransform{x}{\emptyword}$ followed by \trimtrans.

\vspace{-0.0mm}
\subsection{Regular Model Checking}
\label{ssec:rmc}
\vspace{-0.0mm}
\emph{Regular model checking}
(RMC) (cf. \cite{KestenMMPS01,WolperB98,BouajjaniJNT00,Abdulla12,BouajjaniHRV12}) is a~framework for
verifying infinite state systems.
In~RMC, each \emph{system configuration} is represented as a~word
over an~alphabet~$\Sigma$.
The set of \emph{initial configurations} $\I$ and \emph{destination
configurations} $\D$
are captured as regular languages over~$\Sigma$.
The \emph{transition relation} $\T$ is captured as a~binary rational relation over~$\Sigma^{*}$.
A~regular model checking \emph{reachability problem} is represented by the triplet $(\I,\T,\D)$
and asks whether $\transclof{\T} (\I) \cap \D \neq \emptyset$, where $\transclof
\T$ represents the reflexive and transitive closure of~$\T$.
One way how to solve the problem is to start computing the sequence
$\nstepof \T 0 (\I), \nstepof \T 1(\I), \nstepof \T 2(\I), \ldots$
where $\nstepof \T 0(\I) = \I$ and $\nstepof \T {n+1} (\I) = \T(\nstepof \T n
(\I))$.
During the computation of the sequence, we can check whether we find $\nstepof \T i(\I)$
that overlaps with~$\D$, and if yes, we can deduce that $\D$ is reachable.
On the other hand,
if we obtain a~sequence such that $\bigcup_{0 \leq i < n} \T^{i}(\I) \supseteq \T^{n}
(\I)$, we know that we have explored all possible system configurations without
reaching $\D$, so $\D$ is unreachable.
The RMC reachability problem is in general \emph{undecidable} (this can be
easily shown, e.g., by a~reduction from Turing machine configuration
reachability).

\vspace{-0.0mm}
\section{Solving Word Equations using RMC}
\label{section:we_rmc}
\label{section:quadratic}
\vspace{-0.0mm}

In this section, we describe a~symbolic RMC-based framework for solving string
constraints.
The framework is based on encoding a~string constraint into a~regular language
and encoding steps of the Nielsen transformation as a~rational relation.
Satisfiability of a~string constraint is then reduced to a~reachability problem
of RMC.

\vspace{-0.0mm}
\subsection{Nielsen Transformation as Word Operations}
\label{ssec:nielsenWordTransform}
\vspace{-0.0mm}

In the following, we describe how the Nielsen transformation of a~single word
equation can be expressed as operations on words.
We view a word equation~$\weq\colon \lhs = \rhs$ as a~pair of word terms $\smallencsof \weq =
(\lhs, \rhs)$ corresponding to the left and right hand sides of the equation respectively;
therefore $\smallencsof \weq \in \msoalph^{*} \times \msoalph^{*}$.
Without loss of generality we assume that $\lhs[1] \neq \rhs[1]$;
if this is not the case, we pre-process the equation by applying the \trimtrans
Nielsen transformation (cf.\@ \cref{tab:NielsenTransformations})
to trim the common prefix of $\lhs$ and $\rhs$.

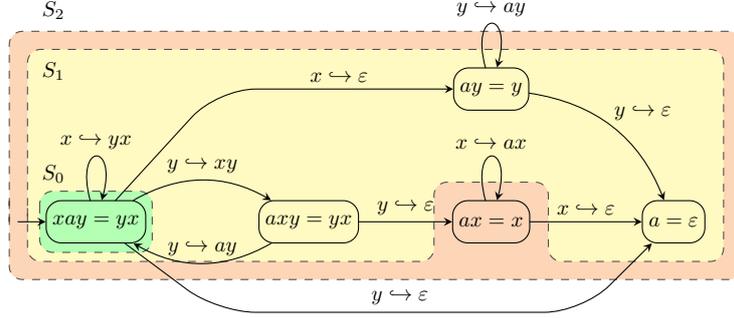
\begin{figure}[t]
	\centering
	\begin{tikzpicture}
  [node distance=3cm,->,>=stealth,transform shape, scale=0.8]

  \tikzstyle{state} = [rectangle,draw,rounded corners=.2cm,draw=black,minimum height=0.7cm, inner sep=3pt]
  \tikzstyle{empty}=[]

  \node[state, initial, initial text={}] (s0) at (0,0) {$xay = yx$};
  \node[state, right of=s0, xshift=5mm] (s2) {$axy = yx$};
  \node[state, right of=s2] (s3) {$ax = x$};
  \node[state, above of=s3,yshift=-8mm] (s4) {$ay = y$};
  \node[state, right of=s3] (s5) {$a = \emptyword$};

  \draw (s0) edge [loop above] node[above] {$\seTransform{x}{yx}$} (s0);
  \draw (s0) edge[bend left] node[above] {$\seTransform{y}{xy}$} (s2);
  \draw (s2) edge[bend left] node[above] {$\seTransform{y}{ay}$} (s0);
  \draw (s2) to node[above] {$\seTransform{y}{\emptyword}$} (s3);
  \draw (s3) edge [loop above] node[above] {$\seTransform{x}{ax}$} (s3);
  \draw[->,rounded corners=5mm] (s0) -- ++(2,2.2) node[above,xshift=20mm]{$\seTransform{x}{\emptyword}$} -- (s4);
  \draw (s4) edge [loop above] node[above] {$\seTransform{y}{ay}$} (s4);
  \draw[->,rounded corners=5mm] (s0) -- ++(2,-1.5) node[above,xshift=30mm]{$\seTransform{y}{\emptyword}$} -- ++(6,0) -- (s5);
  \draw (s4) edge[bend left] node[above,xshift=5mm] {$\seTransform{y}{\emptyword}$} (s5);
  \draw (s3) to node[above] {$\seTransform{x}{\emptyword}$} (s5);

  \begin{pgfonlayer}{background}
    \draw[-,dashed,rectangle,fill=Orange!30,draw=black!70,rounded corners=5pt,inner sep=10pt]
      ([xshift=-6mm]s0.west) |- ([yshift=6mm]s4.north) -| ([xshift=6mm,yshift=-6mm]s5.south east) -| ([xshift=-6mm]s0.north west);

    \draw[-,dashed,rectangle,fill=Yellow!30,draw=black!70,rounded corners=5pt,inner sep=10pt]
      ([xshift=-3mm]s0.west) |- ([yshift=3mm]s4.north) -| ([xshift=3mm,yshift=-3mm]s5.south east) -| ([xshift=3mm]s3.east) |- ([yshift=3mm]s3.north) -|
      ([xshift=-3mm,yshift=-3mm]s3.south west) -| ([xshift=-3mm]s0.north west);

    \draw[-,dashed,rectangle,fill=green!30,draw=black!70,rounded corners=5pt,inner sep=10pt]
      ([xshift=-1mm]s0.west) |- ([yshift=1.5mm]s0.north) -| ([xshift=1mm]s0.east)
      |- ([yshift=-1.5mm,xshift=-1mm]s0.south west) -- cycle;
  \end{pgfonlayer}

  \node[empty, above of=s0, node distance=8mm, text=black, xshift=-7mm] (lab1) {$S_0$};
  \node[empty, above of=s0, node distance=25mm, text=black, xshift=-7mm] (lab1) {$S_1$};
  \node[empty, above of=s0, node distance=35mm, text=black, xshift=-7mm] (lab2) {$S_2$};

\end{tikzpicture}
	\caption{
    Proof graph for a~run of the Nielsen transformation on the equation $x a y = y x$.
    The sets~$S_0$, $S_1$, and~$S_2$ are the sets of nodes explored in
    0, 1, and 2 steps of our algorithm, respectively.
  }
	\label{fig:nielsen-unsat-ex}
\end{figure}

\begin{example}\label{ex:init_weq}%
	The word equation $\weq_{1}\colon x a y = y x$ is represented by the pair of
	word terms $\smallencsof 1 = (xay, yx)$. 
	The full proof graph generated by applying the Nielsen
	transformation is depicted in~\cref{fig:nielsen-unsat-ex}.
  \qed
\end{example}

A rule of the Nielsen transformation (cf.\@ \cref{ssec:nielsen}) is represented
using a~(partial) function $\transduct \colon (\msoalph^{*} \times \msoalph^{*})
\to (\msoalph^{*} \times \msoalph^{*})$. Given a~pair of word terms $(\lhs,
\rhs)$ of a~word equation~$\weq$, the function~$\transduct$ transforms it into a
pair of word terms of a~word equation~$\weq'$ that would be obtained by
performing the corresponding step of the Nielsen transformation on~$\weq$. 
Before we express the rules of the Nielsen transformation, we define functions 
performing the corresponding substitution. 
For $x \in \vars$ and $\alpha \in \msoalph$ we define
\begin{align}
  \begin{split}
  \transductof{\xsubalphax}&= \{\,(\lhs, \rhs)\mapsto (\lhs',
  \rhs') \mid \lhs' = \substof \lhs x {\alpha x} \land  \rhs' = \substof \rhs x
  {\alpha x} \,\} \text{ and}\\
  \transductof{\xsubepsilon} &= \{\,(\lhs, \rhs) \mapsto (\lhs',
  \rhs') \mid \lhs' = \substof \lhs x \emptyword \land  \rhs' = \substof \rhs x
  \emptyword \,\} .
  \end{split}
\end{align}
The function $\transductof{\xsubalphax}$ performs a~substitution $\xsubalphax$ while the
function $\transductof{\xsubepsilon}$ performs a~substitution $\xsubepsilon$.

\begin{example}
Consider the pair of word terms $\smallencsof 1 = (xay, yx)$ from \cref{ex:init_weq}.
The application $\transductof \xsubyx (\smallencsof 1)$ would produce the pair
$\smallencsof 2 = (yxay,yyx)$ and the application
$\transductof \xsubepsilon (\smallencsof 1)$
would produce the pair $\smallencsof 3 = (ay,y)$.
\qed
\end{example}

The functions introduced above do not take into account the first symbols of
each side and do not remove a common prefix of the two sides of the equation,
which is a~necessary operation for the Nielsen transformation to terminate.
Let us, therefore, define the following function, which trims (the longest)
matching prefix of word terms of the two sides of an equation:
\begin{align}
  \begin{split}
    \transductof{\trim} = \{\,(\lhs, \rhs) \mapsto (\lhs',\rhs') \mid \exists i \geq 1 \forall j < i
    & \big(\lhs[i] \neq \rhs[i]  \land \lhs[j] = \rhs[j] \\
    & {} \land \lhs' = \substrof \lhs i \land \rhs' = \substrof \rhs i \big)
    \,\} .
  \end{split}
\end{align}

\begin{example}
Continuing in our running example, the application $\transductof \trim
  (\smallencsof 2)$ produces the pair $\smallencsof 2 ' = (xay,yx)$ and,
  furthermore, $\transductof \trim (\smallencsof 3)$ produces the pair
  $\smallencsof 3 ' = (ay, y)$.
\qed
\end{example}

Now we are ready to define functions corresponding to the rules of the Nielsen
transformation. In particular, the rule $\xtoalphax$ and its symmetric variant (i.e., $x$ is 
the first symbol of either left or right side of an equation) for $x \in \vars$ and
$\alpha \in \msoalph$ (cf.\@ \cref{ssec:nielsen}) can be expressed using the function
\begin{align}
  \begin{split}
 \transductof{\xtoalphax} = \transductof{\trim} \circ \{\, (\lhs, \rhs)\mapsto
  \transductof{x\mapsto \alpha x}(\lhs, \rhs)\mid {} &(\rhs[1] = x \land
  \lhs[1] = \alpha) \lor {}
  \\
  &
  (\rhs[1] = \alpha \land \lhs[1] = x) \,\}
  \end{split}
\end{align}
while the rule $\xtoepsilon$ and its symmetric variant for $x \in \vars$ can be expressed as the function
\begin{align}
	\transductof{\xtoepsilon} &=\transductof{\trim} \circ \{\, (\lhs, \rhs)\mapsto
		\transductof{x\mapsto\emptyword}(\lhs, \rhs)\mid (\lhs[1] = x \vee
		\rhs[1] = x)\}.
\end{align}
If we keep applying the functions defined above on individual pairs of
word terms, while searching for the pair $(\emptyword, \emptyword)$---which
represents the case when a~solution to the original equation~$\weq$ exists---, we would obtain
the Nielsen transformation graph (cf.\@ \cref{ssec:nielsen}).
In the following, we show how to perform the steps symbolically on
a~representation of a~\emph{set} of word equations at once.

\vspace{-0.0mm}
\subsection{Symbolic Algorithm for Word Equations}
\label{ssec:symbolicWordEquations}
\vspace{-0.0mm}

In this section, we describe the main idea of our symbolic algorithm for
solving word equations.
We first focus on the case of a~single word equation and in
subsequent sections extend the algorithm to a~richer class.

Our algorithm is based on applying the transformation rules not on a~single
equation, but on a~whole \emph{set of equations} at once.
For this, we define the relations $\bigtransof \xtoalphax$ and $\bigtransof
\xtoepsilon$ that aggregate the versions of $\transductof{\xtoalphax}$ and
$\transductof{\xtoepsilon}$ for all possible $x \in \vars$ and $\alpha \in
\svars$.
The signature of these relations is $(\msoalph^{*} \times \msoalph^{*}) \times
(\msoalph^{*} \times \msoalph^{*})$ and they are defined as follows:
\begin{align}\label{eq:trans-x-yx}
		\bigtransof \xtoalphax = {} & \hspace*{-3mm}\bigcup_{y\in\vars,
    \alpha\in\svars}\hspace*{-3mm}\transductof\ytoalphay &
	  \bigtransof \xtoepsilon	= {} & \bigcup_{y\in\vars} \transductof\ytoepsilon
\end{align}
Note the following two properties of the relations:
\begin{inparaenum}[(i)]
  \item  they produce outputs of all possible Nielsen transformation steps
    applicable with the first symbols on the two sides of the equations and
  \item  they include the \emph{trimming} operation.
\end{inparaenum}

We compose the introduced relations into a single one, denoted as $\bigtransof
\step$ and defined as $\bigtransof \step = \bigtransof \xtoalphax \cup \bigtransof
\xtoepsilon$. The relation $\bigtransof \step$ can then be used to compute
\emph{all successors} of a set of word terms of equations in one step.
For a~set
of word terms $S$ we can compute the $\bigtransof \step$-image of~$S$ to obtain all
successors of pairs of word terms in~$S$.
The initial configuration, given a~word equation~$\weq\colon \lhs = \rhs$, is the set
$\encsof \weq = \{(\lhs, \rhs)\}$.

\begin{example} \label{ex:lift_to_sets}
  Lifting our running example to the introduced notions over sets, we start with
  the set $\encsof \weq = S_0 = \{\smallencsof 1 = (xay, yx)\}$. After applying
  $\bigtransof \step$ on~$\encsof \weq$, we obtain the set $S_1 = \{\smallencsof
  2 ' = (xay,yx), \smallencsof 3 ' = (ay, y), (axy, yx), (a, \emptyword)\}$. The
  pairs~$\smallencsof 2 '$ and~$\smallencsof 3 '$ were described earlier, the
  pair $(axy, yx)$ is obtained by the transformation $\transductof \ytoxy$, and
  the pair $(a, \emptyword)$ is obtained by the transformation~$\transductof
  \ytoepsilon$. If~we continue by computing $\bigtransof \step (S_1)$, we obtain
  the set $S_2 = S_1 \cup \{ (ax, x) \}$, as shown
  in~\cref{fig:nielsen-unsat-ex} (the pair $(ax,x)$ was obtained from
  $(axy, yx)$ using the transformation $\transductof \ytoepsilon$).
  \qed
\end{example}

Using the symbolic representation, we can formulate the problem of checking
satisfiability of a~word equation~$\weq$ as the task of
\begin{itemize}
  \item  either testing whether $(\emptyword, \emptyword) \in
    \transclof{\bigtransof \step} (\encsof \weq)$; if the membership holds, it
    means that the constraint~$\weq$ is satisfiable, or
  \item  finding a set (called \emph{unsat-invariant}) $\encsof \invar$
    such that $\encsof \weq \subseteq \encsof \invar$, $(\emptyword, \emptyword) \notin
    \encsof \invar$, and $\bigtransof \step (\encsof \invar) \subseteq \encsof
    \invar$, implying that $\weq$ is unsatisfiable.
\end{itemize}
In the following sections, we show how to encode the problem into the
RMC framework.

\begin{example}
	To proceed in our running example, when we apply $\bigtransof\step$ on~$S_2$, we
	get $\bigtransof\step(S_2)\subseteq S_2$. Since $e_1\in S_2$ and
	$(\emptyword, \emptyword)\notin S_2$, the set~$S_2$ is our unsat-invariant,
	which means that~$\weq_1$ is unsatisfiable.
  \qed
\end{example}

\vspace{-0.0mm}
\subsection{Towards Symbolic Encoding}
\label{ssec:towardsSymbolicEncoding}
\vspace{-0.0mm}

Let us now discuss some possible encodings of the word equations satisfiability
problem into RMC. Recall that our task is to find an encoding such that the
encoded equation (corresponding to initial configurations in RMC) and
satisfiability condition (corresponding to destination configurations) are regular
languages and the transformation (transition) relation is a~rational relation.
We~start by describing two possible methods of encodings that do not work,
analyze why they cannot be used, and then describe a~working encoding that we do use.

The first idea about how to encode a~set of word equations as a~regular language is to
encode a~pair $\smallencsof \weq = (\lhs, \rhs)$ as a~word $\lhs \cdot \sidesep
\cdot \rhs$, where $\sidesep \notin \msoalph$.
One immediately finds out that although the transformations $\transductof
\xtoalphax$
and $\transductof \xtoepsilon$ are rational (i.e., expressible using
a~transducer), the transformation $\transductof \trim$,
which removes the longest matching prefix from both sides, is not (a~transducer
with an unbounded memory to remember the prefix would be required).

The second attempt of an encoding might be to encode $\smallencsof \weq = (\lhs, \rhs)$ as a~rational
binary relation, represented, e.g., by a~(not necessarily length-preserving)
2-tape transducer (with one tape for $\lhs$ and the other tape for $\rhs$) and use
four-tape transducers to represent the transformations (with two tapes
for $\lhs$ and $\rhs$ and two tapes for $\lhs'$ and $\rhs'$). 
The transducers implementing $\transductof \xtoyx$ and $\transductof
\xtoepsilon$ can be constructed easily and so can be the transducer
implementing~$\transductof \trim$, so this solution looks appealing.
One, however, quickly realizes that there is an issue in computing
$\bigtransof \step (\encsof \weq)$.
In particular, since~$\encsof \weq$ and~$\bigtransof \step$ are both represented
as rational relations, the intersection $(\encsof \weq \times \msoalph^{*} \times
\msoalph^{*}) \cap \bigtransof \step$, which needs to be computed first, may not
be rational any more.
Why?
Suppose $\encsof \weq = \{\,(a^m b^{n}, c^m) \mid m,n \geq 0\,\}$ and
$\bigtransof \step = \{\,(a^m b^{n}, c^{n}, \emptyword, \emptyword) \mid m,n \geq 0 \,\}$.
Then the intersection $(\encsof \weq \times \msoalph^{*} \times
\msoalph^{*}) \cap \bigtransof \step = \{\,(a^{n} b^{n}, c^{n}, \emptyword, \emptyword) \mid n
\geq 0\,\}$ is clearly not rational any more.

\vspace{-0.0mm}
\subsection{Symbolic Encoding of Quadratic Equations into RMC}
\label{ssec:symbolicEncodingQuadratictoRMC}
\vspace{-0.0mm}
%
We therefore converge on the following method of representing word equations by
a~regular language. A~set of pairs of word terms is represented as a~regular
language over a 2-track alphabet with padding $\msoalphpad^2$, where
$\msoalphpad = \msoalph \cup \{\pad\}$, using an FA. For instance, $\smallencsof
1 = (x a y, y x)$ would be represented by the regular language
$\twotrack{x~a~y}{y~x~\pad} \twotrack \pad \pad ^{*}$. 
In other words, 
the equation $e_1$ has many encodings that differ by the padding, with $\twotrack{x~a~y}{y~x~\pad}$ 
being the shortest encoding. The valid representation of the equation contains all of these encodings.
On the other hand, Nielsen transformations are represented by (in general,
length non-preserving) \emph{binary} rational relations over the 2-track alphabet $\msoalphpad^2$
(the first item of each pair refers to an encoding of an equation and the second one refers to the particular transformation applied to the encoding). 
For instance, the transformation $\transductof\xtoepsilon$
is represented by a~rational relation containing, e.g., the pair
$\left(\twotrack {x~a~y~\pad}{y~x~\pad~\pad} ~,~
\twotrack{a~y~\pad}{y~\pad~\pad} \right)$ and
the transformation $\transductof{\rewriteto y {x y}}$
is represented by a~rational relation containing, e.g., the pair
$\left(\twotrack{x~a~y~\pad}{y~x~\pad~\pad} ~,~ \twotrack{a~x~y}{y~x~\pad}\right)$.

Formally, we first define the \emph{equation encoding function} $\eqencode
\colon (\msoalph^{*})^2 \to (\msoalphpad^2 \setminus \{\twotrack \pad \pad\})^{*}$
such that for a~pair of word terms $\lhs = a_1 \ldots a_n$ and $\rhs = b_1
\ldots b_m$ (without loss of generality we assume that $n \geq m$), we have
$\eqencodeof{\lhs, \rhs} =
\bintrackelip{a_1}{a_m}{a_{m+1}}{a_n}{b_1}{b_m}{\pad}{\pad}$.
We also lift $\eqencode$ to sets of pairs of word terms $S \subseteq \msoalph^{*}
\times \msoalph^{*}$ as
$\eqencodeof S = \{\eqencodeof {\lhs, \rhs} \mid (\lhs, \rhs) \in S\}$.

Let $\sigma$ be a~symbol. We define the \emph{padding} of a~word~$w$ with
respect to $\sigma$ as the language $\padfnc_\sigma =
\{ (w, w')\mid w' \in \{w\}.\{\sigma\}^{*} \}$, i.e., it is a~set of words obtained from~$w$ by extending
it by an arbitrary number of~$\sigma$'s.
Moreover, we also create a~(length non-preserving) transducer~$T_{\trim}$ that
performs trimming of its input; this is easy to implement by a~two-state
transducer that replaces a~prefix of symbols of
the form~$\twotrack \beta \beta$ with~$\epsilon$, for $\beta \in \msoalphpad$.
We define the function $\encode$, used for encoding word
equations into regular languages, as
$\encode = T_\trim \circ \padfnc_{\twotrackpadsmall}\circ \eqencode$,
i.e., it takes an encoding of the equation, adds padding, and trims the maximum
shared prefix of the two sides of the equation.
For example, $\encodeof{b x a y, b y x} = \twotrack{x~a~y}{y~x~\pad} \twotrack
\pad \pad ^{*}$. Moreover, for an equation $e\in\msoalph^{*}
\times \msoalph^{*} $, a word $w \in \encodeof{e}$, and a
Nielsen rule $\rho$, we use $w[\rho]$ to denote the set $\encodeof{\transductof{\rho}(e)}$. 

\begin{lemma}
	\label{lem:eq-enc}
	Given a~word equation $\weq\colon \lhs = \rhs$ for $\lhs, \lhs \in \msoalph^{*}$,
	the set $\encodeof \weq$ is regular.
\end{lemma}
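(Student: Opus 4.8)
The plan is to read the definition $\encode = T_\trim \circ \padfnc_{\twotrackpadsmall} \circ \eqencode$ as a composition of relations applied right to left (following the composition convention fixed in the preliminaries, where $R_1 \circ R_2$ means ``apply $R_2$ first''), and to establish regularity stage by stage using the closure properties recorded in \cref{prop:rat-prop}. Since the pipeline first applies $\eqencode$, then pads, then trims, I would track the image of the concrete equation through these three stages in order, checking that each stage either produces or preserves regularity.

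First I would observe that for the given equation $\weq\colon \lhs = \rhs$, the value $\eqencodeof{\lhs, \rhs}$ is a single well-defined word over $\msoalphpad^2$, because the definition of $\eqencode$ assigns exactly one word to each pair of word terms. Hence the language $L_0 = \{\eqencodeof{\lhs, \rhs}\}$ is a singleton, and singletons are finite and therefore regular. This is what lets me sidestep the question of whether $\eqencode$ itself is a rational relation between a product alphabet and single words: I only ever need its value at one input, so the whole pipeline collapses to iterated images of a regular language.

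Next I would argue that the two remaining stages are each the image of a regular language under a rational relation, so \cref{prop:rat-prop}(ii) applies twice. The padding relation $\padfnc_{\twotrackpadsmall}$ is rational: it is realized by a transducer that copies every input symbol and then, in a final self-loop, emits an arbitrary number of $\twotrack\pad\pad$ symbols while reading $\emptyword$; hence $L_1 = \padfnc_{\twotrackpadsmall}(L_0)$ is regular. Likewise $T_\trim$ is given explicitly in the text as a (two-state, length non-preserving) transducer and so defines a rational relation, whence $L_2 = T_\trim(L_1)$ is regular. Since $\encodeof{\weq} = L_2$, this yields the claim.

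The step I expect to require the most care is not any individual closure application but the bookkeeping that the composition is unfolded in the correct order and that the intermediate objects are genuine languages rather than relations; once $\eqencode$ is evaluated at a single point this reduces to verifying that $\padfnc_{\twotrackpadsmall}$ and $T_\trim$ honestly meet the hypotheses of \cref{prop:rat-prop}(ii), i.e.\ a rational relation applied to a regular language. A sanity check against the worked instance $\encodeof{b x a y, b y x} = \twotrack{x~a~y}{y~x~\pad}\,\twotrack\pad\pad^{*}$ confirms the expected shape: the result is a finite trimmed core followed by the Kleene star of the single padding letter, which is manifestly regular.
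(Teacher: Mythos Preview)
Your proof is correct, and it takes a more elementary route than the paper's. The paper establishes regularity of the padded (pre-trim) language by writing down an explicit \msostr formula $\varphi_\weq(w,w')$ with two free string variables that pins the first $|\lhs|$ positions of~$w$ and~$w'$ and forces padding afterwards, then invokes \cref{prop:msostr-power} to conclude this defines a regular relation over~$\msoalphpad$, hence a regular language over~$\msoalphpad^2$; only then does it apply~$T_\trim$ via \cref{prop:rat-prop}. You instead observe that $\eqencodeof{\lhs,\rhs}$ is a single word, so its singleton language is trivially regular, and then push it through the rational relations $\padfnc_{\twotrackpadsmall}$ and~$T_\trim$ one at a time using \cref{prop:rat-prop}(ii). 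Your argument is shorter and does not need the \msostr machinery at all for this lemma; the paper's detour through \msostr buys nothing here beyond consistency with the style of the later, more substantial encodings (where \msostr is genuinely needed to express the Nielsen transformation relations).
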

\begin{proof}
Without loss of generality we assume that $|\rhs| \leq |\lhs|$.
We give the following \msostr formula that encodes~$\weq$:
\begin{align}
  \begin{split}
    \varphi_\weq(w, w') \defiff {} &
    \bigwedge_{1 \leq k \leq |\lhs|}\hspace*{-3mm} w[k] = \lhs[k] ~\land
    \hspace*{-2mm}\bigwedge_{1 \leq k \leq |\rhs|} \hspace*{-3mm} w'[k] = \rhs[k] ~\land {} 
	\hspace*{-2mm}\bigwedge_{|\rhs| < k \leq |\lhs|} \hspace*{-4mm} w'[k] = \pad ~\land\\
    & \forallp p ((p > |\lhs|) \limpl (w[p] = \pad \land w'[p] = \pad))
  \end{split}
\end{align}
  From \cref{prop:msostr-power}, it follows that $\langof{
  \varphi_{\weq}}$ is a regular binary relation and, moreover, it can be interpreted
  as a~regular language over the composed alphabet~$\msoalphpad^2$.
  Since the image of a~regular language with respect to a rational relation (realizing $T_\trim$) is also
  regular (cf.\@ \cref{prop:rat-prop}), it
  follows that $\encodeof{\langof{\varphi_{\weq}}}$ is also regular.
\end{proof}

Using the presented encoding,
when trying to express the $\transductof \xtoalphax$ and $\transductof \xtoepsilon$
transformations, we, however, encounter an issue with the need of an unbounded
memory.
For instance, for the language $L = \twotrack x y ^{*}$, the transducer
implementing $\transductof \xtoyx$ would need to remember how many times it has
seen $x$ on the first track of its input (indeed, the image of~$L$ with respect to
$\transductof \xtoyx$, i.e., the set $\{\,\encodeof{u, v} \mid \exists n \geq
0\colon u = (yx)^{n} \land v = y^{n} \pad^{n}\,\}$, is no longer regular).

We address this issue in several steps:
first, we give a~rational relation that correctly represents the transformation
rules for cases where the equation $\weq$ is quadratic, and extend our algorithm to
equations with more occurrences of variables in \cref{section:conjunction}.
Let us define the following, more general, restriction of $\transductof
\xtoalphax$
to equations with at most $i \in \nat$ occurrences of variable~$x$:
\begin{equation}
\transductof \xtoalphax ^{\leq i} = \transductof \xtoalphax \cap \{\,((\lhs, \rhs),
  (w, w'))
  \mid w, w' \in \msoalph^{*}, \occurof x {\lhs \concat \rhs} \leq i\,\} .
\end{equation}
We define $\transductof \xtoepsilon ^{\leq i}$, $\transductof \xsubalphax ^{\leq i}$, and $\transductof \xsubepsilon ^{\leq i}$ similarly.
%

\vspace{-0.0mm}
\subsubsection{Encoding Nielsen Transformations as Rational Relations}
\label{sssec:enc_rational}
\vspace{-0.0mm}

Next, in order to be able to perform the operations given by~$\transductof
\xtoepsilon ^{\leq i}$ and~$\transductof \xtoalphax ^{\leq i}$ on our encoding
within the RMC framework, we need to encode them as rational relations.
In this section, we define rational relations~$\T_{\xtoepsilon}^{\leq i}$
and~$\T_{\xtoalphax}^{\leq i}$ that do exactly this encoding.
We obtain these relations in successive steps, by defining several intermediate 
formulae with the transformation as subscript and the number of variables as 
superscript, e.g., $\varphi_{\xtoepsilon}^{\leq n}$ and $\psi_{\xtoalphax}^{n}$.

We begin with defining some useful \msostr predicates for an \msostr
string variable~$w$, a~word constraint variable~$x$, and positions $k_1, \dots, k_m$.%
\begin{align}
  \ordered{k_{1}, \ldots, k_m} \defiff {}& \bigwedge_{1 \leq i < m} k_{i} < k_{i+1}
  \\
  \alleq x w {k_{1}, \ldots, k_m} \defiff {}& \bigwedge_{1 \leq i \leq m} w[k_{i}] = x \\
	\begin{split}
	\occur x w {k_{1}, \ldots, k_m} \defiff {}& \ordered{k_{1}, \ldots, k_m} \wedge \alleq x w {k_{1}, \ldots, k_m} \\
	& \hspace*{10.1mm} \wedge \forallp j\Big(w[j] = x \rightarrow \bigvee_{1\leq i \leq m} j = k_{i}\Big)
\end{split}
\end{align}

We use the following \msostr formula to define the transformation $x \mapsto \epsilon$
for~$n$ occurrences of~$x$ in a~single string. 
The formula guesses $n$ positions of $x$ and then ensures that all symbols in $w'$ are 
correctly shifted. In particular, the symbols on positions smaller than $i_1$ are copied from $w$ 
without change. Symbols in $w$ on positions between $i_\ell$ and $i_{\ell+1}$ are shifted $\ell$ positions 
to the left in $w'$. And the remaining positions in $w'$ are filled with $\pad$'s.
\begin{align}
  \begin{split}
    \psi^{n}_{x \mapsto \epsilon}(w, w') \defiff
    \existsp i_{1}, \ldots, i_{n}
    \Big(&
       \occur x w {i_{1}, \ldots, i_{n}} \land {}
         \forallp j (j < i_{1} \limpl w'[j] = w[j]) \land {} \\
        \bigwedge_{1 \leq k < n} &
          \forallp j((i_k < j < i_{k+1}) \limpl w'[j-k] = w[j]) \land {} \\
        & \forallp j (i_{n} < j \limpl w'[j-n] = w[j]) \land {} \\
        \bigwedge_{1 \leq k \leq n} & w'[\strend - k] = \pad
    \Big),
  \end{split}
\end{align}
where $w'[\strend - k] = \pad$ stands for the formula
$\existsp r \existsp s (r = \strend \land r = s + k \land w'[s] = \pad)$ 
(cf.~\cref{fig:mso_sugar}).
We extend $\psi^{n}_{x \mapsto \epsilon}$ to describe the relation on pairs of strings:
\begin{align}
  \psi'^{n}_{x\mapsto \epsilon}(\lhs, \rhs, \lhs', \rhs') \defiff {} &
		\bigvee_{0 \leq k \leq n } \psi^k_{x\mapsto \epsilon}(\lhs, \lhs') \land
    \psi^{n-k}_{x\mapsto \epsilon}(\rhs, \rhs')
  \\
  \psi'^{\leq n}_{x\mapsto \epsilon}(\lhs, \rhs, \lhs', \rhs') \defiff {} &
		\bigvee_{0 \leq k \leq n } \psi'^k_{x\mapsto \epsilon}(\lhs, \rhs, \lhs', \rhs')
  \\
    \varphi^{\leq n}_{x \mapsto \epsilon}(\lhs, \rhs, \lhs', \rhs') \defiff {} &
		(\lhs[1] = x \vee \rhs[1] = x) \land \psi'^{\leq n}_{x\mapsto \epsilon}(\lhs, \rhs, \lhs', \rhs')
\end{align}
Next, we define the transformation $x \mapsto \alpha x$ for $n$ occurrences
of~$x$ in a~single string.
The formula guesses $n$ positions of $x$ in $w$. Then, it ensures that on $i_\ell + \ell$ position in $w'$
there
is the symbol $\alpha$ and all other symbols from $w$ are copied to the correct positions in $w'$.
In particular, symbols in $w$ on positions between $i_\ell$ and $i_{\ell+1}$ are shifted $\ell$ positions 
to the right in $w'$.
\begin{align}
  \begin{split}
    \psi^{n}_{x \mapsto \alpha x} &(w, w') \defiff {}\\
    & \hspace*{-4mm} \existsp i_{1}, \ldots, i_{n}
    \Big(
      \occur x w {i_{1}, \ldots, i_{n}} \land {} \forallp j (j \leq i_{1} \limpl w'[j] = w[j]) \land {} \\
        &\bigwedge_{1 \leq k < n}
          w'[i_k+k] = \alpha \land \forallp j((i_k < j \leq i_{k+1}) \limpl w'[j+k] = w[j]) \land {} \\
        & \hspace*{10.1mm} w'[i_{n}+n] = \alpha \land \forallp j (i_{n} < j \limpl w'[j+n] = w[j]) \land {} \\
        & \bigwedge_{1 \leq k \leq n} w[\strend - k] = \pad
    \Big)
  \end{split}
\end{align}
We extend $\psi^{n}_{x \mapsto \alpha x}$ to describe the relation on pairs of strings:
\begin{align}
  \psi'^{n}_{x \mapsto \alpha x}(\lhs, \rhs, \lhs', \rhs') \defiff {} &
  \bigvee_{0 \leq k \leq n } \psi^k_{x \mapsto \alpha x}(\lhs, \lhs') \land
    \psi^{n-k}_{x \mapsto \alpha x}(\rhs, \rhs')
  \\
  \psi'^{\leq n}_{x \mapsto \alpha x}(\lhs, \rhs, \lhs', \rhs') \defiff {} &
		\bigvee_{0 \leq k \leq n} \psi'^k_{x \mapsto \alpha x}(\lhs, \rhs, \lhs', \rhs')
  \\
  \begin{split}
    \varphi^{\leq n}_{x \mapsto \alpha x}(\lhs, \rhs, \lhs', \rhs') \defiff {} &
    \big((\lhs[1] = x \land \rhs[1] = \alpha) \lor (\lhs[1] = \alpha \land \rhs[1] = x)\big)\\
		&{}\land \psi'^{\leq n}_{x \mapsto \alpha x}(\lhs, \rhs, \lhs', \rhs')
  \end{split}
\end{align}

The constructed formulae~$\varphi^{\leq n}_{x \mapsto \epsilon}$ and~$\varphi^{\leq
n}_{x \mapsto \alpha x}$ describe regular relations with arity~$4$ over the
alphabet~$\msoalphpad$.
Since they are regular (i.e., length-preserving), they can also be interpreted
as binary relations over the composed alphabet~$\msoalphpad^2$.
This interpretation can easily be done by modifying a length-preserving transducer corresponding to $\varphi^{\leq n}_{x \mapsto \epsilon}$ and~$\varphi^{\leq
n}_{x \mapsto \alpha x}$ respectively in a way that each transition $q \ltr{a_1, a_2, a_3, a_4} r$ 
is replaced by the transition $q \ltr{\scalebox{0.8}{\twotrack{a_1}{a_2}}, \scalebox{0.8}{\twotrack{a_3}{a_4}}} r$ 
(with the same sets of states).
When interpreted as binary relations in this way, they denote the relations
containing pairs (without loss of generality, we show this only for $\varphi^{\leq n}_{x
\mapsto \epsilon}$)
\[
\left(
\bintrackelip{u_1}{u_m}{u_{m+1}}{u_n}{v_1}{v_m}{\pad}{\pad}\twotrack \pad \pad ^{i}
~,~
\bintrackelip{w_1}{w_k}{w_{k+1}}{w_\ell}{z_1}{z_k}{\pad}{\pad}\twotrack \pad \pad ^j
\right)
\]
such that
\begin{itemize}
  \item  $\tau_{x \mapsto \epsilon}^{\leq n}((u_1 \ldots u_n, v_1 \ldots v_m))
    = (w_1 \ldots w_\ell, z_1 \ldots z_k)$ for $\varphi^{\leq n}_{x \mapsto
    \epsilon}$,
  \item  $\tau_{x \mapsto \alpha x}^{\leq n}((u_1 \ldots u_n, v_1 \ldots v_m))
    = (w_1 \ldots w_\ell, z_1 \ldots z_k)$ for $\varphi^{\leq n}_{x \mapsto
    \alpha x}$, and
  \item $\max(m,n) + i = \max(k, \ell) + j$.
\end{itemize}

Let us now consider~$\varphi^{\leq n}_{x \mapsto \epsilon}$.
We note that for every $((u,v),(w,z)) \in \tau_{x \mapsto \epsilon}^{\leq n}$,
there will indeed be a~corresponding pair in~$\varphi^{\leq n}_{x \mapsto
\epsilon}$ (actually, there will be infinitely many such pairs that differ in
the number of used padding symbols).

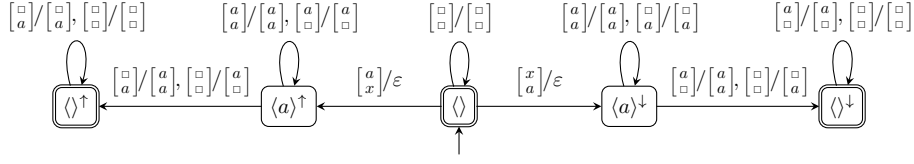
\begin{figure}[t]
	\begin{tikzpicture}
    [node distance=2.8cm,->,>=stealth,transform shape,scale=0.8]
  
    \tikzstyle{state} = [draw=black,minimum height=0.5cm, inner sep=4pt,rectangle, rounded corners=1mm]
  
    \node[state, initial below, accepting, initial text={}] (s0) at (0,0) {$\langle \rangle$};
    \node[state, right of=s0] (s1) {$\langle {a} \rangle^\downarrow$};
    \node[state, accepting, node distance=3.5cm, right of=s1] (s2) {$\langle \rangle^\downarrow$};
    \node[state, left of=s0] (s3) {$\langle {a} \rangle^\uparrow$};
    \node[state, accepting, node distance=3.5cm, left of=s3] (s4) {$\langle \rangle^\uparrow$};

    \draw (s0) edge [loop above] node[above] {$\twotrack{\pad}{\pad}/\twotrack{\pad}{\pad}$} (s0);
    \draw (s0) to node[above] {$\twotrack{x}{{a}}/\emptyword$} (s1);
    \draw (s0) to node[above] {$\twotrack{a}{x}/\emptyword$} (s3);

    \draw (s1) edge [loop above] node[above] {$\twotrack{a}{a}/\twotrack{a}{a}, \twotrack{\pad}{a}/\twotrack{\pad}{a}$} (s1);
    \draw (s1) to node[above] {$\twotrack{a}{\pad}/\twotrack{a}{a}, \twotrack{\pad}{\pad}/\twotrack{\pad}{a}$} (s2);
    \draw (s2) edge [loop above] node[above] {$\twotrack{a}{\pad}/\twotrack{a}{\pad}, \twotrack{\pad}{\pad}/\twotrack{\pad}{\pad}$} (s2);

    \draw (s3) edge [loop above] node[above] {$\twotrack{a}{a}/\twotrack{a}{a}, \twotrack{a}{\pad}/\twotrack{a}{\pad}$} (s3);
    \draw (s3) to node[above] {$\twotrack{\pad}{a}/\twotrack{a}{a}, \twotrack{\pad}{\pad}/\twotrack{a}{\pad}$} (s4);
    \draw (s4) edge [loop above] node[above] {$\twotrack{\pad}{a}/\twotrack{\pad}{a}, \twotrack{\pad}{\pad}/\twotrack{\pad}{\pad}$} (s4);
  
  \end{tikzpicture}
	\vspace*{-5mm}
	\caption{Example of a transducer realizing the relation $T_{x\mapsto \epsilon}^{\leq 1}$ for the set of variables $\vars = \{ x \}$ and the alphabet $\Sigma = \{ a \}$.
		The names of the states of the transducers denote symbols that the transducers
		remember to output on the tape given by the symbols~$\uparrow$ and~$\downarrow$.}
	\label{fig:transducer-ex}
\end{figure}

In order to get closer to~$\transductof \xtoepsilon$,
we need to modify the relation of~$\varphi^{\leq n}_{x \mapsto \epsilon}$ to
also perform trimming of the shared prefix.
We do this modification by taking the (length-preserving) two-track
transducer~$T_{x\mapsto \epsilon}^{\leq n}$ that
recognizes $\varphi^{\leq n}_{x \mapsto \epsilon}$ (it can be constructed due
to \cref{prop:msostr-power}).
Moreover, we also create a~(length non-preserving) transducer~$T_{\trim}$ that
performs trimming of its input; this is easy to implement by a~two-state
transducer that replaces a~prefix of symbols of
the form~$\twotrack \beta \beta$ with~$\epsilon$, for $\beta \in \msoalphpad$.
By composing the two transducers, we~obtain~$T_{\xtoepsilon}^{\leq n} =
T_{\trim} \circ T_{x\mapsto \epsilon}^{\leq n}$. 
An example of a transducer realizing $T_{x\mapsto \epsilon}^{\leq 1}$ is 
shown in \cref{fig:transducer-ex}. 

We can repeat the previous reasoning for~$\varphi^{\leq n}_{x \mapsto \alpha
x}$ in a~similar way to obtain the (length non-preserving)
transducer~$T_{\xtoalphax}^{\leq n}$.
\begin{lemma}
It holds that
  $\tau_{\xtoepsilon}^{\leq n}((u_1 \ldots u_n, v_1 \ldots v_m)) = (w_1
    \ldots w_\ell, z_1 \ldots z_k)$ iff
\[
\left(
\bintrackelip{u_1}{u_m}{u_{m+1}}{u_n}{v_1}{v_m}{\pad}{\pad}\twotrack \pad \pad ^{i}
~,~
\bintrackelip{w_1}{w_k}{w_{k+1}}{w_\ell}{z_1}{z_k}{\pad}{\pad}\twotrack \pad \pad ^j
\right)
\in \langof{T^{\leq n}_{\xtoepsilon}}
\]
for some $i,j \in \nat$.

  Further, it holds that
  $\tau_{\xtoalphax}^{\leq n}((u_1 \ldots u_n, v_1 \ldots v_m)) = (w_1
    \ldots w_\ell, z_1 \ldots z_k)$ iff
\[
\left(
\bintrackelip{u_1}{u_m}{u_{m+1}}{u_n}{v_1}{v_m}{\pad}{\pad}\twotrack \pad \pad ^{i}
~,~
\bintrackelip{w_1}{w_k}{w_{k+1}}{w_\ell}{z_1}{z_k}{\pad}{\pad}\twotrack \pad \pad ^j
\right)
\in \langof{T^{\leq n}_{\xtoalphax}}
\]
for some $i,j \in \nat$.
\end{lemma}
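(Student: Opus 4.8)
The plan is to prove the statement modularly, mirroring the way both sides of the claimed equivalence are built as compositions. On the semantic side, $\transductof{\xtoepsilon}^{\leq n} = \transductof{\trim} \compose \transductof{\xsubepsilon}^{\leq n}$, where $\transductof{\xsubepsilon}^{\leq n}$ is restricted to the guard that $x$ is the leading symbol of one side; on the syntactic side $T^{\leq n}_{\xtoepsilon} = T_{\trim} \compose T^{\leq n}_{x\mapsto\epsilon}$ by construction. Since \cref{prop:rat-prop} guarantees that composition of transducers realizes composition of the corresponding relations, it suffices to (a) show that the length-preserving transducer $T^{\leq n}_{x\mapsto\epsilon}$ realizes $\transductof{\xsubepsilon}^{\leq n}$ on padded encodings, and then (b) show that composing $T_{\trim}$ onto its output exactly adds the trimming of the shared prefix, matching $\transductof{\trim}$. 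Part (a) is essentially the content of the displayed family of pairs preceding the lemma; the lemma's genuinely new content is the trimming composition in (b).

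For step (a), the crux is to verify that the \msostr formula $\psi^{n}_{x\mapsto\epsilon}(w,w')$ characterizes precisely the relation ``$w$ contains exactly $n$ occurrences of $x$, and $w'$ is obtained from $w$ by deleting all of them and right-padding with $\pad$'s''. I would argue this by reading off the conjuncts: the existential $\existsp i_1,\dots,i_n$ together with $\occur{x}{w}{i_1,\dots,i_n}$ pins the $n$ occurrence positions in increasing order and forbids any other occurrence; the shift clauses then assert that a character strictly between the $k$-th and $(k{+}1)$-th occurrence moves left by exactly $k$, with the boundaries $j < i_1$ and $i_n < j$ treated separately, and the trailing $\pad$'s filling the freed tail. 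I then lift this via $\psi'^{\leq n}_{x\mapsto\epsilon}$ and $\varphi^{\leq n}_{x\mapsto\epsilon}$, checking that the disjunction $\bigvee_{k}\psi^{k}(\lhs,\lhs')\land\psi^{n-k}(\rhs,\rhs')$ correctly ranges over all splits of the $\leq n$ total occurrences between the two sides, and that the guard $(\lhs[1]=x \lor \rhs[1]=x)$ matches the applicability condition of the Nielsen rule. By \cref{prop:msostr-power}, $\varphi^{\leq n}_{x\mapsto\epsilon}$ denotes a regular $4$-ary relation recognised by a length-preserving transducer, and the transition-relabelling described in the text reinterprets it as the intended binary relation over $\msoalphpad^2$.

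For step (b), I would invoke the correctness of $T_{\trim}$: by construction it is the two-state transducer that nondeterministically erases a maximal prefix of $\twotrack{\beta}{\beta}$ blocks, so its relation is exactly $\transductof{\trim}$ lifted to encodings. Composing it onto the output of $T^{\leq n}_{x\mapsto\epsilon}$ and appealing again to \cref{prop:rat-prop} yields that $T^{\leq n}_{\xtoepsilon}$ realises $\transductof{\trim}\compose\transductof{\xsubepsilon}^{\leq n} = \transductof{\xtoepsilon}^{\leq n}$. The two directions of the iff then follow. Given $\transductof{\xtoepsilon}^{\leq n}((u,v))=(w,z)$, I factor it through the pre-trim intermediate pair, use (a) to obtain an accepting run of $T^{\leq n}_{x\mapsto\epsilon}$ on a suitably padded encoding, and let $T_{\trim}$ produce the padded encoding of $(w,z)$, fixing the witnesses $i,j$; conversely, any accepting run of $T^{\leq n}_{\xtoepsilon}$ decomposes at the composition point into a substitution run plus a trimming run, each semantically faithful by (a) and (b).

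The main obstacle I anticipate is the bookkeeping in step (a): getting the index arithmetic of the shift clauses exactly right (the $\pm k$ offsets and the off-by-one at the occurrence positions themselves) and, separately, reconciling the length accounting. Because $T^{\leq n}_{x\mapsto\epsilon}$ is length-preserving, input and output carry padding amounts tied by $\max(m,n)+i = \max(k,\ell)+j$, and the subsequent trimming further shortens the output; I must argue that the existential ``for some $i,j\in\nat$'' absorbs precisely this freedom, so that no valid semantic pair is lost and no spurious pair is admitted. The case of $T^{\leq n}_{\xtoalphax}$ is entirely symmetric: $\psi^{n}_{x\mapsto\alpha x}$ shifts characters to the \emph{right} by $k$ and writes $\alpha$ at the vacated position $i_k+k$, so the identical decomposition argument applies with deletion/left-shift replaced by insertion/right-shift.
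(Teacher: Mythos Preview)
Your proposal is correct and follows essentially the same approach as the paper: the paper's own proof is a single sentence, ``The proof follows from the above described construction of transducers $T_{\xtoalphax}^{\leq n}$ and $T_{\xtoepsilon}^{\leq n}$,'' and your argument is precisely the unpacking of that construction---verifying that $\varphi^{\leq n}_{x\mapsto\epsilon}$ captures the guarded substitution on padded encodings and that composing with $T_{\trim}$ supplies the trimming step. You are simply making explicit what the paper leaves to the reader; there is no methodological difference.
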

\begin{proof}
	The proof follows from the above described construction of transducers $T_{\xtoalphax}^{\leq n}$ and
	$T_{\xtoepsilon}^{\leq n}$.
\end{proof}

\newcommand{\reachset}[0]{\mathit{reach}}
\newcommand{\oldset}[0]{\mathit{old}}
\newcommand{\procset}[0]{\mathit{processed}}
\newcommand{\extractmodel}[0]{\mathit{ExtractModel}}
\begin{algorithm}[t]
  \caption{Solving a~string constraint $\varphi$ using RMC}
  \label{alg:rmc}
	\SetKwComment{ctriang}{\scalebox{1.3}{\textcolor{gray}{$\triangleright$\,}}}{}
	\KwIn{Encoding $\I$ of a formula $\varphi$ (the initial set),\\
  \hspace*{13mm}transformers $\T_{\xtoalphax}$, $\T_{\xtoepsilon}$, and \\
  \hspace*{13mm}the destination set $\D$}
	\KwOut{A model of $\varphi$ if $\varphi$ is satisfiable, $\false$ otherwise}
  $\reachset_0 := \I$\;
  $\procset := \emptyset$\;
	$\T := \T_{\xtoalphax} \cup \T_{\xtoepsilon}$\;\label{algln:sett}
  $i := 0$\;
  \While{$\reachset_{i} \not\subseteq \procset$}
  {
    \If{$\D \cap \reachset_{i} \neq \emptyset$\label{algln:touch}}
    {
      \Return{$\extractmodel(\{\T\}_{j\in\{0,\dots, i\}}, \D, \reachset_0, \ldots, \reachset_{i})$}\;
    }
    $\procset := \procset \cup \reachset_{i}$\;
    $\reachset_{i+1} := \saturate\circ\T(\reachset_{i})$\label{algln:tr}\tcp*{$\saturatenew\circ\T(\reachset_{i})$}
    $i$++\;
  }
  \Return{$\false$}\;
\end{algorithm}

\vspace{-0.0mm}
\subsubsection{RMC for Quadratic Equations}
\label{sssec:rmcForQuadraticEquations}
\vspace{-0.0mm}

In \cref{alg:rmc}, we give a~high-level algorithm for solving string constraints
using RMC.
The algorithm is parameterized by the following:
\begin{inparaenum}[(i)]
  \item  a~regular language~$\I$ encoding a~formula~$\varphi$ (the initial set),
  \item  rational relations given by the transducers $\T_{\xtoalphax}$ and $\T_{\xtoepsilon}$, and
  \item  the destination set~$\D$ (also given as a~regular language).
\end{inparaenum}
The algorithm tries to solve the RMC problem $(\I, \T_{\xtoalphax}\cup
\T_{\xtoepsilon}, \D)$ by an iterative unfolding of the transition relation $\T$
computed in Line~\ref{algln:sett}, looking for an element~$w_{i}$ from~$\D$.
If such an element is found in the set~$\reachset_{i}$, we call Function
\textit{\ref{alg:extract-model-func}} to extract a~model of the
original word equation by starting a backward run from~$w_{i}$, computing
pre-images $w_{i-1}, \ldots, w_1$ over transfomers~$\T_{\xtoalphax}$
and~$\T_{\xtoepsilon}$ (restricting them to $\reachset_j$ for every~$w_j$),
while updating values of the variables according to the transformations that were
performed. 
Note that $\extractmodel$ uses a more general interface allowing to specify
a~transducer for each backward step (Line~\ref{alg-line:back-step} of Function
\textit{\ref{alg:extract-model-func}}).
This is utilized later in~\cref{section:conjunction}; here, we just pass~$i$
copies of~$\T$.
\cref{algln:sett} also employs \emph{saturation} of the sets of reachable
configurations defined as:
\begin{equation}
	\saturate(L) = \left\{ u ~\middle|~ w \in L, w \in u.\twotrack \pad \pad^*  \right\}.
\end{equation}
Intuitively, $\saturate(L)$ removes some occurrences (possibly none of them) of
the padding symbol at the end of all words from~$L$.
If $L$ is a regular language, $\saturate(L)$ is regular as well (from an FA
representing~$L$ we can get $\saturate(L)$ by saturating its transitions over
the padding symbol).
We saturate the sets of reachable configurations, because we want to keep the shortest words
(i.e., words without padding symbols)---e.g., the transformer
$\T^{\weq}_{\xtoepsilon}$ need not generate all shortest words.

\SetKwProg{Fn}{Function}{:}{}
\SetProcNameSty{texttt}
\vspace{-0mm}
\hspace*{-7mm}
\begin{function}[t]
	$\model(a) := a$ for each $a\in\Sigma$, $\model(x) := \emptyword$ for each $x\in\vars$\;
  let $w_{i} \in \D \cap \reachset_{i}$\;
	\For{$\ell = i \mathbf{~downto~} 1$} {
		let $w_{\ell - 1} \in \T_\ell^{-1}(w_\ell) \cap \reachset_{\ell - 1}$\;\label{alg-line:back-step}
		let $\rho$ be a rule s.t. $w_{\ell} \in w_{\ell - 1}[\rho]$\;
		\If{$\rho = \ytoalphay$} {
			$\model(y) := \model(\alpha).\model(y)$\;
		}
	}
	\Return $\model$\;
  \caption{ExtractModel($\{\T_j\}_{j\in\{0,\dots, i\}}, \D, \reachset_0, \ldots, \reachset_{i}$)}
	\label{alg:extract-model-func}
\end{function}

\cref{alg:rmc} follows a \emph{breadth-first search} (BFS) strategy:
from the initial set $\I$, we apply both transformers $\T_{\xtoalphax}$ and $\T_{\xtoepsilon}$ on all elements of $\I$ at the same time, before repeatedly applying the transformers on the result.
This corresponds to a~breadth-first application of the transformers if we applied them one element of~$\I$ at a~time.

Our first instantiation of the algorithm is for checking satisfiability of
a~single quadratic word equation $\weq\colon \lhs = \rhs$.
We instantiate the RMC problem as the tuple
$(\I^{\weq},\T^{\weq}_{\xtoalphax}\cup\T^{\weq}_{\xtoepsilon},\D^{\weq})$ where

\begin{align*}
  \I^{\weq} &= \encodeof{\lhs, \rhs} &
  \T^{\weq}_{\xtoalphax} &= \hspace*{-3.5mm}\bigcup_{y\in\vars,
  \alpha\in\svars}\hspace*{-3.5mm} T_{\ytoalphay}^{\leq 2}  &
  \D^{\weq} &= \big\{\twotrack \pad \pad\big\}^{*} \\
  && \T^{\weq}_{\xtoepsilon} &= \bigcup_{y\in\vars} T_{\ytoepsilon}^{\leq 2}
\end{align*}

\begin{lemma}
  \cref{alg:rmc} instantiated with
  $(\I^{\weq},\T^{\weq}_{\xtoalphax}\cup\T^{\weq}_{\xtoepsilon},\D^{\weq})$ is
  sound, complete, and terminating if $\weq\colon \lhs = \rhs$ is quadratic.
\end{lemma}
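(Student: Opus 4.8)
The goal is to show that Algorithm~\ref{alg:rmc}, instantiated for a quadratic word equation $\weq\colon \lhs = \rhs$ with the RMC triple $(\I^{\weq}, \T^{\weq}_{\xtoalphax} \cup \T^{\weq}_{\xtoepsilon}, \D^{\weq})$, is sound, complete, and terminating. My plan is to reduce all three properties to the faithfulness of the symbolic encoding together with \cref{lem:Nielsen} (soundness, completeness, and finiteness of the Nielsen proof graph in the quadratic case). The central bridge is to establish that one application of $\saturate \circ \T$ on the encoded set faithfully simulates one layer of the Nielsen transformation graph, i.e., that for a set $S$ of pairs of word terms, $\saturate \circ \T(\encodeof S)$ equals $\encodeof{\bigtransof \step(S)}$ (up to the padding ambiguity handled by $\saturate$). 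Since $\weq$ is quadratic, every reachable equation remains quadratic — substitutions $\xsubalphax$ and $\xsubepsilon$ do not increase the number of occurrences of any variable beyond two — so the restriction to $\leq 2$ occurrences in the transducers $T^{\leq 2}_{\ytoalphay}$ and $T^{\leq 2}_{\ytoepsilon}$ loses nothing.

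The steps I would carry out, in order, are as follows. \emph{First}, I would prove the key simulation claim: that the transducers $\T^{\weq}_{\xtoalphax}$ and $\T^{\weq}_{\xtoepsilon}$, composed with $\saturate$, compute exactly the $\bigtransof \step$-image on encoded sets. Here I would invoke the already-proved lemma relating $\langof{T^{\leq n}_{\xtoepsilon}}$ and $\langof{T^{\leq n}_{\xtoalphax}}$ to the functions $\tau^{\leq n}_{\xtoepsilon}$ and $\tau^{\leq n}_{\xtoalphax}$, specialized to $n = 2$, and rely on \cref{lem:eq-enc} and \cref{prop:rat-prop} to guarantee each $\reachset_i$ is regular and effectively computable. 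The role of $\saturate$ is to collapse the infinitely many padded encodings of a single equation so that the fixpoint test $\reachset_i \subseteq \procset$ and the intersection test $\D^{\weq} \cap \reachset_i \neq \emptyset$ behave correctly; I would verify that $\saturate$ preserves regularity and commutes appropriately with the padding of $\encode$. \emph{Second}, by induction on $i$, I would show that $\reachset_i$ is precisely $\encodeof{S_i}$ (modulo padding), where $S_i$ is the set of equations reachable within $i$ Nielsen steps. This establishes a tight correspondence between the algorithm's reachable sets and the Nielsen proof graph layers depicted in \cref{fig:nielsen-unsat-ex}.

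\emph{Third}, I would discharge each property using this correspondence. For \textbf{soundness}, when the test on \cref{algln:touch} succeeds, some $w_i \in \D^{\weq} \cap \reachset_i$ corresponds to the tautology $(\emptyword, \emptyword)$ reachable in the Nielsen graph; Function~\textit{\ref{alg:extract-model-func}} then reconstructs a concrete assignment by a backward run, and I would argue that each backward step, which picks $w_{\ell-1} \in \T_\ell^{-1}(w_\ell) \cap \reachset_{\ell-1}$ and applies the model update for $\ytoalphay$, inverts the corresponding forward Nielsen step, so by the soundness half of \cref{lem:Nielsen} the returned $I$ is a model of~$\weq$. For \textbf{completeness}, if $\weq$ is satisfiable then by \cref{lem:Nielsen} the tautology $(\emptyword, \emptyword)$ appears at some finite depth $d$ in the proof graph, hence $\D^{\weq} \cap \reachset_d \neq \emptyset$ and the algorithm returns a model. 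For \textbf{termination}, quadraticity guarantees via \cref{lem:Nielsen} that the proof graph is finite, so there are only finitely many distinct equations up to the encoding; consequently the sequence $\procset$ is monotone and bounded, the sets $\reachset_i$ stabilize, and the loop condition $\reachset_i \subseteq \procset$ is eventually met, returning $\false$ when no tautology is reached.

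The main obstacle I anticipate is the \emph{first} step — proving the exact simulation equality $\saturate \circ \T(\encodeof{S}) = \encodeof{\bigtransof \step(S)}$ — because of the subtle interaction between three sources of length mismatch: the trimming performed inside $T_\trim$, the padding inserted by $\encode$, and the left/right shifting of symbols encoded by the $\psi$ formulae. The delicate point is that a single Nielsen step applied to an equation can change the relative lengths of the two sides, so the padding on the $w'$ side need not match that on the $w$ side, and one must check that $\saturate$ exactly reconciles these discrepancies (the constraint $\max(m,n)+i = \max(k,\ell)+j$ from the interpreted relation is what makes this work). I would handle this by arguing both inclusions separately: the forward inclusion from the construction of the composed transducers $T_{\trim} \circ T^{\leq 2}_{\dots}$, and the reverse inclusion by showing every padded variant of a genuine successor is captured, using that $\saturate$ removes exactly the surplus trailing $\twotrack \pad \pad$ symbols.
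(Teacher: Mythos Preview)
Your proposal is correct and follows essentially the same approach as the paper: reduce all three properties to \cref{lem:Nielsen} by establishing that the encoded RMC loop faithfully simulates the Nielsen proof graph layer by layer via the BFS strategy. Your version is considerably more detailed than the paper's own proof---which simply asserts that the transducers implement the Nielsen rules and invokes BFS plus finiteness of the quadratic proof graph---and in particular your identification of the padding/trimming/$\saturate$ interaction as the delicate point is well taken, since the paper's proof glosses over exactly this.
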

\begin{proof}
	We encode the nodes of a Nielsen proof graph as strings. The
	initial node $\lhs = \rhs$ corresponds to a string from $\I^{\weq}$. Due to the
	padding, the final node $\epsilon = \epsilon$ corresponds to a string from
	$\D^{\weq}$. The relations $\T^{\weq}_{\xtoalphax}$ and
	$\T^{\weq}_{\xtoepsilon}$ implement Nielsen rules $\xtoalphax$ and
	$\xtoepsilon$. From~\cref{lem:Nielsen} we have that the Nielsen proof graph is
	finite. Since our approach implements the \emph{breadth-first search} (BFS)
	strategy, it is sound, complete, and terminating.
\end{proof}

\newcommand{
\begin{figure}[t]
  \centering
	\begin{subfigure}[b]{0.3\textwidth}
    \centering
    \begin{tikzpicture}
  [node distance=1cm,->,>=stealth,transform shape, scale=0.8]

  \tikzstyle{empty}=[]

  \node[state,initial] (s0) {};
  \node[state,right of=s0] (s1) {};
  \node[state,right of=s1] (s4) {};
  \node[state,right of=s4,accepting] (s5) {};

  \draw (s0) edge node[above] {$\twotrack{x}{y}$} (s1);
  \draw (s1) edge node[above] {$\twotrack{a}{x}$} (s4);
  \draw (s4) edge node[above] {$\twotrack{y}{\pad}$} (s5);
  \draw (s5) edge[loop above] node[above] {$\twotrack{\pad}{\pad}$} (s5);

\end{tikzpicture}
    \vspace{13mm}
		\caption{$\I = \aut_{\reachset_0}$}
		\label{fig:aut-ex-proc0}
  \end{subfigure}
  \hfill
	\begin{subfigure}[b]{0.65\textwidth}
    \centering
    \begin{tikzpicture}
  [node distance=2.8cm,->,>=stealth,transform shape,scale=0.8]

  \tikzstyle{state} = [draw=black,minimum height=0.5cm, inner sep=4pt,rectangle, rounded corners=1mm]

  \node[state, initial left, accepting, initial text={}] (s0) at (0,0) {$\langle \rangle_1$};
  \node[state, right of=s0] (s1) {$\langle {y} \rangle^\downarrow$};
  \node[state, accepting, node distance=4cm, right of=s1] (s2) {$\langle \rangle_2$};
  \node[state, below of=s1, node distance=1.5cm] (s5) {$\langle a \rangle^\downarrow$};

  \node[below of=s0,node distance=1.2cm] (s0bel) {};
  \node[below of=s1,node distance=1.2cm, xshift=-12mm] (s1bel) {};
  \node[left of=s5,node distance=1.2cm] (s5bel) {};

  \draw (s0) edge [loop above] node[above] {$\twotrack{\pad}{\pad}/\twotrack{\pad}{\pad}$} (s0);
  \draw (s0) to node[above] {$\twotrack{x}{y}/\emptyword$} (s1);

  \draw[dotted] (s0) to (s0bel);
  \draw[dotted] (s1) to (s1bel);
  \draw[dotted] (s5) to (s5bel);

  \draw (s1) edge [loop above] node[above]
    {$\alpha/\alpha$; $\alpha \in \left\{\twotrack y y, \twotrack a y\right\}$} (s1);
  \draw (s1) to node[above] {$\twotrack{a}{x}/\twotrack{a}{y}, \twotrack{y}{x}/\twotrack{y}{y}$} (s2);
  \draw (s2) edge [loop above] node[above]
    {$\twotrack \alpha \beta / \twotrack \alpha \beta$; $\alpha, \beta \in \left\{ a, y, \pad \right\}$} (s2);

  \draw (s1) edge node[auto] {$\twotrack y a / \twotrack y y, \twotrack a a / \twotrack a y$} (s5);

\end{tikzpicture}
    \vspace{-5mm}
		\caption{A part of $T_{\xtoepsilon}^{\leq 2}$}
		\label{fig:aut-ex-trans}
  \end{subfigure}\\[2mm]
	\begin{subfigure}[b]{0.3\textwidth}
    \centering
    \begin{tikzpicture}
  [node distance=2cm,->,>=stealth,transform shape, scale=0.8]

  \tikzstyle{empty}=[]

  \node[state,initial,accepting] (s0) {};

  \draw (s0) edge[loop above] node[above] {$\twotrack{\pad}{\pad}$} (s0);

\end{tikzpicture}
    \vspace{7mm}
		\caption{$\D$}
		\label{fig:aut-ex-dest}
  \end{subfigure}
  \hfill
	\begin{subfigure}[b]{0.65\textwidth}
    \centering
    \begin{tikzpicture}
  [node distance=2.8cm,->,>=stealth,transform shape,scale=0.8]

  \tikzstyle{state} = [draw=black,minimum height=0.5cm, inner sep=4pt,rectangle, rounded corners=1mm]

  \node[state, initial left, accepting, initial text={}] (s0) at (0,0) {$\langle \rangle_1$};
  \node[state, right of=s0,node distance=1.9cm] (s1) {$\langle {x} \rangle^\uparrow$};
  \node[state, right of=s1,node distance=2.5cm] (s3) {$\langle {a} \rangle^\uparrow, \langle x \rangle^\downarrow$};
  \node[state, accepting, node distance=2.3cm, right of=s3] (s2) {$\langle \rangle_2$};
  \node[below of=s1, node distance=1.2cm] (s5) {};

  \node[below of=s0,node distance=1.2cm] (s0bel) {};

  \draw (s0) edge [loop above] node[above] {$\twotrack{\pad}{\pad}/\twotrack{\pad}{\pad}$} (s0);
  \draw (s0) to node[above] {$\twotrack{x}{y}/\emptyword$} (s1);

  \draw[dotted] (s0) to (s0bel);
  \draw[dotted] (s1) to (s5);

  \draw (s1) to node[above] {$\twotrack{a}{x}/\twotrack{x}{y}$} (s3);
  \draw (s3) to node[above] {$\emptyword/\twotrack a x$} (s2);
  \draw (s2) edge [loop above] node[above]
    {$\twotrack \alpha \beta / \twotrack \alpha \beta$; $\alpha, \beta \in \left\{ a, y, \pad \right\}$} (s2);

\end{tikzpicture}
    \vspace{-3mm}
		\caption{A part of $T_{\xtoyx}^{\leq 2}$}
		\label{fig:aut-ex-trans2}
  \end{subfigure}
  \caption{Input of \cref{alg:rmc} for solving the word equation $x a y = y x$.
  The size of $\T^{\weq}_{\xtoepsilon}$ and $\T^{\weq}_{\xtoyx}$ would be
  prohibitively large, so we only give relevant parts of transducers
  $T_{\xtoepsilon}^{\leq 2}$ and $T_{\xtoyx}^{\leq 2}$.
  Some transitions use shorthand notation with the obvious meaning.
  }
  \label{fig:rmc_setup}
\end{figure}
}[0]{
\begin{figure}[t]
  \centering
	\begin{subfigure}[b]{0.3\textwidth}
    \centering
    \begin{tikzpicture}
  [node distance=1cm,->,>=stealth,transform shape, scale=0.8]

  \tikzstyle{empty}=[]

  \node[state,initial] (s0) {};
  \node[state,right of=s0] (s1) {};
  \node[state,right of=s1] (s4) {};
  \node[state,right of=s4,accepting] (s5) {};

  \draw (s0) edge node[above] {$\twotrack{x}{y}$} (s1);
  \draw (s1) edge node[above] {$\twotrack{a}{x}$} (s4);
  \draw (s4) edge node[above] {$\twotrack{y}{\pad}$} (s5);
  \draw (s5) edge[loop above] node[above] {$\twotrack{\pad}{\pad}$} (s5);

\end{tikzpicture}
    \vspace{13mm}
		\caption{$\I = \aut_{\reachset_0}$}
		\label{fig:aut-ex-proc0}
  \end{subfigure}
  \hfill
	\begin{subfigure}[b]{0.65\textwidth}
    \centering
    \begin{tikzpicture}
  [node distance=2.8cm,->,>=stealth,transform shape,scale=0.8]

  \tikzstyle{state} = [draw=black,minimum height=0.5cm, inner sep=4pt,rectangle, rounded corners=1mm]

  \node[state, initial left, accepting, initial text={}] (s0) at (0,0) {$\langle \rangle_1$};
  \node[state, right of=s0] (s1) {$\langle {y} \rangle^\downarrow$};
  \node[state, accepting, node distance=4cm, right of=s1] (s2) {$\langle \rangle_2$};
  \node[state, below of=s1, node distance=1.5cm] (s5) {$\langle a \rangle^\downarrow$};

  \node[below of=s0,node distance=1.2cm] (s0bel) {};
  \node[below of=s1,node distance=1.2cm, xshift=-12mm] (s1bel) {};
  \node[left of=s5,node distance=1.2cm] (s5bel) {};

  \draw (s0) edge [loop above] node[above] {$\twotrack{\pad}{\pad}/\twotrack{\pad}{\pad}$} (s0);
  \draw (s0) to node[above] {$\twotrack{x}{y}/\emptyword$} (s1);

  \draw[dotted] (s0) to (s0bel);
  \draw[dotted] (s1) to (s1bel);
  \draw[dotted] (s5) to (s5bel);

  \draw (s1) edge [loop above] node[above]
    {$\alpha/\alpha$; $\alpha \in \left\{\twotrack y y, \twotrack a y\right\}$} (s1);
  \draw (s1) to node[above] {$\twotrack{a}{x}/\twotrack{a}{y}, \twotrack{y}{x}/\twotrack{y}{y}$} (s2);
  \draw (s2) edge [loop above] node[above]
    {$\twotrack \alpha \beta / \twotrack \alpha \beta$; $\alpha, \beta \in \left\{ a, y, \pad \right\}$} (s2);

  \draw (s1) edge node[auto] {$\twotrack y a / \twotrack y y, \twotrack a a / \twotrack a y$} (s5);

\end{tikzpicture}
    \vspace{-5mm}
		\caption{A part of $T_{\xtoepsilon}^{\leq 2}$}
		\label{fig:aut-ex-trans}
  \end{subfigure}\\[2mm]
	\begin{subfigure}[b]{0.3\textwidth}
    \centering
    \begin{tikzpicture}
  [node distance=2cm,->,>=stealth,transform shape, scale=0.8]

  \tikzstyle{empty}=[]

  \node[state,initial,accepting] (s0) {};

  \draw (s0) edge[loop above] node[above] {$\twotrack{\pad}{\pad}$} (s0);

\end{tikzpicture}
    \vspace{7mm}
		\caption{$\D$}
		\label{fig:aut-ex-dest}
  \end{subfigure}
  \hfill
	\begin{subfigure}[b]{0.65\textwidth}
    \centering
    \begin{tikzpicture}
  [node distance=2.8cm,->,>=stealth,transform shape,scale=0.8]

  \tikzstyle{state} = [draw=black,minimum height=0.5cm, inner sep=4pt,rectangle, rounded corners=1mm]

  \node[state, initial left, accepting, initial text={}] (s0) at (0,0) {$\langle \rangle_1$};
  \node[state, right of=s0,node distance=1.9cm] (s1) {$\langle {x} \rangle^\uparrow$};
  \node[state, right of=s1,node distance=2.5cm] (s3) {$\langle {a} \rangle^\uparrow, \langle x \rangle^\downarrow$};
  \node[state, accepting, node distance=2.3cm, right of=s3] (s2) {$\langle \rangle_2$};
  \node[below of=s1, node distance=1.2cm] (s5) {};

  \node[below of=s0,node distance=1.2cm] (s0bel) {};

  \draw (s0) edge [loop above] node[above] {$\twotrack{\pad}{\pad}/\twotrack{\pad}{\pad}$} (s0);
  \draw (s0) to node[above] {$\twotrack{x}{y}/\emptyword$} (s1);

  \draw[dotted] (s0) to (s0bel);
  \draw[dotted] (s1) to (s5);

  \draw (s1) to node[above] {$\twotrack{a}{x}/\twotrack{x}{y}$} (s3);
  \draw (s3) to node[above] {$\emptyword/\twotrack a x$} (s2);
  \draw (s2) edge [loop above] node[above]
    {$\twotrack \alpha \beta / \twotrack \alpha \beta$; $\alpha, \beta \in \left\{ a, y, \pad \right\}$} (s2);

\end{tikzpicture}
    \vspace{-3mm}
		\caption{A part of $T_{\xtoyx}^{\leq 2}$}
		\label{fig:aut-ex-trans2}
  \end{subfigure}
  \caption{Input of \cref{alg:rmc} for solving the word equation $x a y = y x$.
  The size of $\T^{\weq}_{\xtoepsilon}$ and $\T^{\weq}_{\xtoyx}$ would be
  prohibitively large, so we only give relevant parts of transducers
  $T_{\xtoepsilon}^{\leq 2}$ and $T_{\xtoyx}^{\leq 2}$.
  Some transitions use shorthand notation with the obvious meaning.
  }
  \label{fig:rmc_setup}
\end{figure}
}

\newcommand{
\begin{figure}[t]
\centering
\begin{subfigure}[b]{0.45\textwidth}
  \centering
  \begin{tikzpicture}
  [node distance=2cm,->,>=stealth,transform shape, scale=0.8]

  \tikzstyle{empty}=[]

  \node[state,initial] (s0) {};
  \node[state,right of=s0] (s1) {};
  \node[state,below of=s1,yshift=5mm] (s3) {};
  \node[state,right of=s1] (s4) {};
  \node[state,right of=s4,accepting] (s5) {};

  \draw (s0) edge[in=125, out=55] node[above] {$\twotrack{a}{\pad}$} (s5);
  \draw (s0) edge node[above,xshift=2mm] {$\twotrack{x}{y}$} (s1);
  \draw (s0) edge node[above,right,yshift=2mm] {$\twotrack{a}{y}$} (s3);
  \draw (s1) edge node[above] {$\twotrack{a}{x}$} (s4);
  \draw (s3) edge node[above,yshift=-2mm,xshift=-4mm] {$\twotrack{x}{x}$} (s4);
  \draw (s3) edge[bend right] node[above] {$\twotrack{y}{\pad}$} (s5);
  \draw (s4) edge node[above] {$\twotrack{y}{\pad}$} (s5);
  \draw (s5) edge[loop above] node[above] {$\twotrack{\pad}{\pad}$} (s5);

\end{tikzpicture}
  \caption{$\aut_{\reachset_1}$}
  \label{fig:aut-ex-proc1}
\end{subfigure}
\hfill
\begin{subfigure}[b]{0.45\textwidth}
  \centering
  \begin{tikzpicture}
  [node distance=2cm,->,>=stealth,transform shape, scale=0.8]

  \tikzstyle{empty}=[]

  \node[state,initial] (s0) {};
  \node[state,right of=s0] (s1) {};
  \node[state,above of=s1,yshift=-10mm] (s2) {};
  \node[state,below of=s1,yshift=5mm] (s3) {};
  \node[state,right of=s1] (s4) {};
  \node[state,right of=s4,accepting] (s5) {};

  \draw (s0) edge[in=100, out=80] node[above] {$\twotrack{a}{\pad}$} (s5);
  \draw (s0) edge node[above] {$\twotrack{a}{x}$} (s2);
  \draw (s0) edge node[above,pos=0.9] {$\twotrack{x}{y}$} (s1);
  \draw (s0) edge node[above,right,yshift=2mm] {$\twotrack{a}{y}$} (s3);
  \draw (s1) edge node[above] {$\twotrack{a}{x}$} (s4);
  \draw (s3) edge node[above,yshift=-2mm,xshift=-4mm] {$\twotrack{x}{x}$} (s4);
  \draw (s3) edge[bend right] node[above] {$\twotrack{y}{\pad}$} (s5);
  \draw (s4) edge node[above,pos=0.2] {$\twotrack{y}{\pad}$} (s5);
  \draw (s2) edge[out=0] node[above] {$\twotrack{x}{\pad}$} (s5);
  \draw (s5) edge[loop above] node[above] {$\twotrack{\pad}{\pad}$} (s5);

\end{tikzpicture}
  \caption{$\aut_{\reachset_2}$}
  \label{fig:aut-ex-proc2}
\end{subfigure}
\caption{Examples of finite automata $\aut_{\reachset_1}$,
$\aut_{\reachset_2}$ representing sets of configurations reachable in one and
two steps, respectively, when solving the word equation $x a y = y x$
using~\cref{alg:rmc}.
  }
\end{figure}
}[0]{
\begin{figure}[t]
\centering
\begin{subfigure}[b]{0.45\textwidth}
  \centering
  \begin{tikzpicture}
  [node distance=2cm,->,>=stealth,transform shape, scale=0.8]

  \tikzstyle{empty}=[]

  \node[state,initial] (s0) {};
  \node[state,right of=s0] (s1) {};
  \node[state,below of=s1,yshift=5mm] (s3) {};
  \node[state,right of=s1] (s4) {};
  \node[state,right of=s4,accepting] (s5) {};

  \draw (s0) edge[in=125, out=55] node[above] {$\twotrack{a}{\pad}$} (s5);
  \draw (s0) edge node[above,xshift=2mm] {$\twotrack{x}{y}$} (s1);
  \draw (s0) edge node[above,right,yshift=2mm] {$\twotrack{a}{y}$} (s3);
  \draw (s1) edge node[above] {$\twotrack{a}{x}$} (s4);
  \draw (s3) edge node[above,yshift=-2mm,xshift=-4mm] {$\twotrack{x}{x}$} (s4);
  \draw (s3) edge[bend right] node[above] {$\twotrack{y}{\pad}$} (s5);
  \draw (s4) edge node[above] {$\twotrack{y}{\pad}$} (s5);
  \draw (s5) edge[loop above] node[above] {$\twotrack{\pad}{\pad}$} (s5);

\end{tikzpicture}
  \caption{$\aut_{\reachset_1}$}
  \label{fig:aut-ex-proc1}
\end{subfigure}
\hfill
\begin{subfigure}[b]{0.45\textwidth}
  \centering
  \begin{tikzpicture}
  [node distance=2cm,->,>=stealth,transform shape, scale=0.8]

  \tikzstyle{empty}=[]

  \node[state,initial] (s0) {};
  \node[state,right of=s0] (s1) {};
  \node[state,above of=s1,yshift=-10mm] (s2) {};
  \node[state,below of=s1,yshift=5mm] (s3) {};
  \node[state,right of=s1] (s4) {};
  \node[state,right of=s4,accepting] (s5) {};

  \draw (s0) edge[in=100, out=80] node[above] {$\twotrack{a}{\pad}$} (s5);
  \draw (s0) edge node[above] {$\twotrack{a}{x}$} (s2);
  \draw (s0) edge node[above,pos=0.9] {$\twotrack{x}{y}$} (s1);
  \draw (s0) edge node[above,right,yshift=2mm] {$\twotrack{a}{y}$} (s3);
  \draw (s1) edge node[above] {$\twotrack{a}{x}$} (s4);
  \draw (s3) edge node[above,yshift=-2mm,xshift=-4mm] {$\twotrack{x}{x}$} (s4);
  \draw (s3) edge[bend right] node[above] {$\twotrack{y}{\pad}$} (s5);
  \draw (s4) edge node[above,pos=0.2] {$\twotrack{y}{\pad}$} (s5);
  \draw (s2) edge[out=0] node[above] {$\twotrack{x}{\pad}$} (s5);
  \draw (s5) edge[loop above] node[above] {$\twotrack{\pad}{\pad}$} (s5);

\end{tikzpicture}
  \caption{$\aut_{\reachset_2}$}
  \label{fig:aut-ex-proc2}
\end{subfigure}
\caption{Examples of finite automata $\aut_{\reachset_1}$,
$\aut_{\reachset_2}$ representing sets of configurations reachable in one and
two steps, respectively, when solving the word equation $x a y = y x$
using~\cref{alg:rmc}.
  }
\end{figure}
}

\begin{figure}[t]
  \centering
	\begin{subfigure}[b]{0.3\textwidth}
    \centering
    \begin{tikzpicture}
  [node distance=1cm,->,>=stealth,transform shape, scale=0.8]

  \tikzstyle{empty}=[]

  \node[state,initial] (s0) {};
  \node[state,right of=s0] (s1) {};
  \node[state,right of=s1] (s4) {};
  \node[state,right of=s4,accepting] (s5) {};

  \draw (s0) edge node[above] {$\twotrack{x}{y}$} (s1);
  \draw (s1) edge node[above] {$\twotrack{a}{x}$} (s4);
  \draw (s4) edge node[above] {$\twotrack{y}{\pad}$} (s5);
  \draw (s5) edge[loop above] node[above] {$\twotrack{\pad}{\pad}$} (s5);

\end{tikzpicture}
    \vspace{13mm}
		\caption{$\I = \aut_{\reachset_0}$}
		\label{fig:aut-ex-proc0}
  \end{subfigure}
  \hfill
	\begin{subfigure}[b]{0.65\textwidth}
    \centering
    \begin{tikzpicture}
  [node distance=2.8cm,->,>=stealth,transform shape,scale=0.8]

  \tikzstyle{state} = [draw=black,minimum height=0.5cm, inner sep=4pt,rectangle, rounded corners=1mm]

  \node[state, initial left, accepting, initial text={}] (s0) at (0,0) {$\langle \rangle_1$};
  \node[state, right of=s0] (s1) {$\langle {y} \rangle^\downarrow$};
  \node[state, accepting, node distance=4cm, right of=s1] (s2) {$\langle \rangle_2$};
  \node[state, below of=s1, node distance=1.5cm] (s5) {$\langle a \rangle^\downarrow$};

  \node[below of=s0,node distance=1.2cm] (s0bel) {};
  \node[below of=s1,node distance=1.2cm, xshift=-12mm] (s1bel) {};
  \node[left of=s5,node distance=1.2cm] (s5bel) {};

  \draw (s0) edge [loop above] node[above] {$\twotrack{\pad}{\pad}/\twotrack{\pad}{\pad}$} (s0);
  \draw (s0) to node[above] {$\twotrack{x}{y}/\emptyword$} (s1);

  \draw[dotted] (s0) to (s0bel);
  \draw[dotted] (s1) to (s1bel);
  \draw[dotted] (s5) to (s5bel);

  \draw (s1) edge [loop above] node[above]
    {$\alpha/\alpha$; $\alpha \in \left\{\twotrack y y, \twotrack a y\right\}$} (s1);
  \draw (s1) to node[above] {$\twotrack{a}{x}/\twotrack{a}{y}, \twotrack{y}{x}/\twotrack{y}{y}$} (s2);
  \draw (s2) edge [loop above] node[above]
    {$\twotrack \alpha \beta / \twotrack \alpha \beta$; $\alpha, \beta \in \left\{ a, y, \pad \right\}$} (s2);

  \draw (s1) edge node[auto] {$\twotrack y a / \twotrack y y, \twotrack a a / \twotrack a y$} (s5);

\end{tikzpicture}
    \vspace{-5mm}
		\caption{A part of $T_{\xtoepsilon}^{\leq 2}$}
		\label{fig:aut-ex-trans}
  \end{subfigure}\\[2mm]
	\begin{subfigure}[b]{0.3\textwidth}
    \centering
    \begin{tikzpicture}
  [node distance=2cm,->,>=stealth,transform shape, scale=0.8]

  \tikzstyle{empty}=[]

  \node[state,initial,accepting] (s0) {};

  \draw (s0) edge[loop above] node[above] {$\twotrack{\pad}{\pad}$} (s0);

\end{tikzpicture}
    \vspace{7mm}
		\caption{$\D$}
		\label{fig:aut-ex-dest}
  \end{subfigure}
  \hfill
	\begin{subfigure}[b]{0.65\textwidth}
    \centering
    \begin{tikzpicture}
  [node distance=2.8cm,->,>=stealth,transform shape,scale=0.8]

  \tikzstyle{state} = [draw=black,minimum height=0.5cm, inner sep=4pt,rectangle, rounded corners=1mm]

  \node[state, initial left, accepting, initial text={}] (s0) at (0,0) {$\langle \rangle_1$};
  \node[state, right of=s0,node distance=1.9cm] (s1) {$\langle {x} \rangle^\uparrow$};
  \node[state, right of=s1,node distance=2.5cm] (s3) {$\langle {a} \rangle^\uparrow, \langle x \rangle^\downarrow$};
  \node[state, accepting, node distance=2.3cm, right of=s3] (s2) {$\langle \rangle_2$};
  \node[below of=s1, node distance=1.2cm] (s5) {};

  \node[below of=s0,node distance=1.2cm] (s0bel) {};

  \draw (s0) edge [loop above] node[above] {$\twotrack{\pad}{\pad}/\twotrack{\pad}{\pad}$} (s0);
  \draw (s0) to node[above] {$\twotrack{x}{y}/\emptyword$} (s1);

  \draw[dotted] (s0) to (s0bel);
  \draw[dotted] (s1) to (s5);

  \draw (s1) to node[above] {$\twotrack{a}{x}/\twotrack{x}{y}$} (s3);
  \draw (s3) to node[above] {$\emptyword/\twotrack a x$} (s2);
  \draw (s2) edge [loop above] node[above]
    {$\twotrack \alpha \beta / \twotrack \alpha \beta$; $\alpha, \beta \in \left\{ a, y, \pad \right\}$} (s2);

\end{tikzpicture}
    \vspace{-3mm}
		\caption{A part of $T_{\xtoyx}^{\leq 2}$}
		\label{fig:aut-ex-trans2}
  \end{subfigure}
  \caption{Input of \cref{alg:rmc} for solving the word equation $x a y = y x$.
  The size of $\T^{\weq}_{\xtoepsilon}$ and $\T^{\weq}_{\xtoyx}$ would be
  prohibitively large, so we only give relevant parts of transducers
  $T_{\xtoepsilon}^{\leq 2}$ and $T_{\xtoyx}^{\leq 2}$.
  Some transitions use shorthand notation with the obvious meaning.
  }
  \label{fig:rmc_setup}
\end{figure}

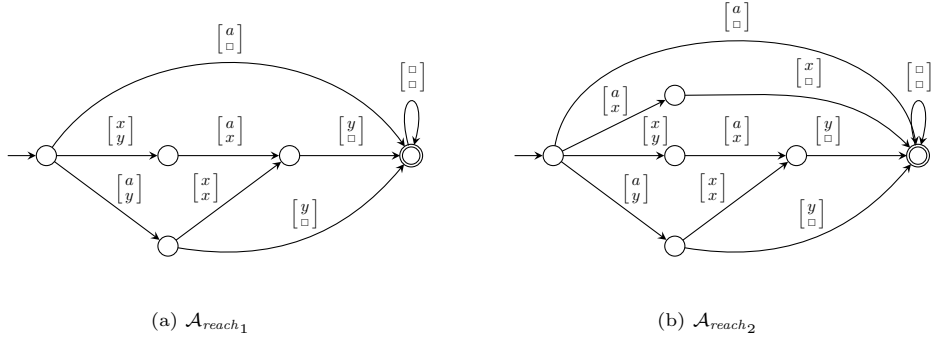
\begin{figure}[t]
\centering
\begin{subfigure}[b]{0.45\textwidth}
  \centering
  \begin{tikzpicture}
  [node distance=2cm,->,>=stealth,transform shape, scale=0.8]

  \tikzstyle{empty}=[]

  \node[state,initial] (s0) {};
  \node[state,right of=s0] (s1) {};
  \node[state,below of=s1,yshift=5mm] (s3) {};
  \node[state,right of=s1] (s4) {};
  \node[state,right of=s4,accepting] (s5) {};

  \draw (s0) edge[in=125, out=55] node[above] {$\twotrack{a}{\pad}$} (s5);
  \draw (s0) edge node[above,xshift=2mm] {$\twotrack{x}{y}$} (s1);
  \draw (s0) edge node[above,right,yshift=2mm] {$\twotrack{a}{y}$} (s3);
  \draw (s1) edge node[above] {$\twotrack{a}{x}$} (s4);
  \draw (s3) edge node[above,yshift=-2mm,xshift=-4mm] {$\twotrack{x}{x}$} (s4);
  \draw (s3) edge[bend right] node[above] {$\twotrack{y}{\pad}$} (s5);
  \draw (s4) edge node[above] {$\twotrack{y}{\pad}$} (s5);
  \draw (s5) edge[loop above] node[above] {$\twotrack{\pad}{\pad}$} (s5);

\end{tikzpicture}
  \caption{$\aut_{\reachset_1}$}
  \label{fig:aut-ex-proc1}
\end{subfigure}
\hfill
\begin{subfigure}[b]{0.45\textwidth}
  \centering
  \begin{tikzpicture}
  [node distance=2cm,->,>=stealth,transform shape, scale=0.8]

  \tikzstyle{empty}=[]

  \node[state,initial] (s0) {};
  \node[state,right of=s0] (s1) {};
  \node[state,above of=s1,yshift=-10mm] (s2) {};
  \node[state,below of=s1,yshift=5mm] (s3) {};
  \node[state,right of=s1] (s4) {};
  \node[state,right of=s4,accepting] (s5) {};

  \draw (s0) edge[in=100, out=80] node[above] {$\twotrack{a}{\pad}$} (s5);
  \draw (s0) edge node[above] {$\twotrack{a}{x}$} (s2);
  \draw (s0) edge node[above,pos=0.9] {$\twotrack{x}{y}$} (s1);
  \draw (s0) edge node[above,right,yshift=2mm] {$\twotrack{a}{y}$} (s3);
  \draw (s1) edge node[above] {$\twotrack{a}{x}$} (s4);
  \draw (s3) edge node[above,yshift=-2mm,xshift=-4mm] {$\twotrack{x}{x}$} (s4);
  \draw (s3) edge[bend right] node[above] {$\twotrack{y}{\pad}$} (s5);
  \draw (s4) edge node[above,pos=0.2] {$\twotrack{y}{\pad}$} (s5);
  \draw (s2) edge[out=0] node[above] {$\twotrack{x}{\pad}$} (s5);
  \draw (s5) edge[loop above] node[above] {$\twotrack{\pad}{\pad}$} (s5);

\end{tikzpicture}
  \caption{$\aut_{\reachset_2}$}
  \label{fig:aut-ex-proc2}
\end{subfigure}
\caption{Examples of finite automata $\aut_{\reachset_1}$,
$\aut_{\reachset_2}$ representing sets of configurations reachable in one and
two steps, respectively, when solving the word equation $x a y = y x$
using~\cref{alg:rmc}.
  }
\end{figure}

\begin{example}
	Consider the word equation $\weq_{1}\colon x a y = y x$
	from~\cref{ssec:nielsenWordTransform}. We apply~\cref{alg:rmc} on the encoded
	equation $\I = \encodeof{x a y, y x} = \twotrack{x~a~y}{y~x~\pad} \twotrack
	\pad \pad ^{*}$.
  The inputs of the algorithm are in \cref{fig:rmc_setup}.
  In the first iteration of the main loop, the regular set
  $\procset$, represented by its minimal FA $\aut_{\reachset_1}$, is given
  in~\cref{fig:aut-ex-proc1} (the FA also corresponds to $\reachset_1$ and
  represents the set~$S_1$ from \cref{ex:lift_to_sets}).
  In particular, consider, e.g.,
	\[
	 	\Big(\twotrack{x~a~y}{y~x~\pad}, \twotrack{a~y~\pad}{y~\pad~\pad}\Big) \in \T^{\weq}_{\xtoepsilon}.
	\]
	After saturation we get that $\twotrack{a~y}{y~\pad} \in \reachset_1$.
	In the second (and also the last)
	iteration, the set $\procset$ is given by the minimal FA~$\aut_{\reachset_2}$
	in~\cref{fig:aut-ex-proc2} (which also corresponds to $\reachset_2$ and
	$\reachset_3$, and the set~$S_2$ from \cref{ex:lift_to_sets}).
  Since $\reachset_3 \subseteq \procset$, the algorithm
	terminates with $\false$, establishing the unsatisfiability of~$\weq_{1}$.
  \qed
\end{example}

\vspace{-0.0mm}
\section{Solving a System of Word Equations using RMC}\label{section:conjunction}
\vspace{-0.0mm}

In the previous section we described how to solve a single quadratic word equation in the
RMC framework. In this section we focus on an extension of this approach to
handle a system of word equations of the form
\begin{equation}
\sweq\colon \lhs^{1} =
\rhs^{1} \wedge \lhs^{2}=\rhs^{2} \wedge \ldots \wedge \lhs^{n} =\rhs^{n}.
\end{equation}
In the first step we need to encode the system $\sweq$ as a regular language.
For this we extend the $\encode$ function to a system of word equations by defining
\begin{equation}
	\encode(\sweq) = \encode(\lhs^{1}, \rhs^{1}). \big\{\twotrack{\deli}{\deli}\big\}.
		\ldots . \big\{\twotrack{\deli}{\deli}\big\}. \encode(\lhs^{n}, \rhs^{n}),
\end{equation}
where $\deli$ is a~delimiter symbol, $\deli \notin \msoalphpad$. From \cref{lem:eq-enc} we
know that $\encode(\lhs^{i}, \rhs^{i})$ is regular for all $1 \leq i\leq n$.
Moreover, since regular languages are closed under concatenation
(\cref{prop:rat-prop}), the set $\encode(\sweq)$ is also regular.
Because each equation is now separated by a~delimiter,
we need to extend the destination set to
$\left\{\twotrack{\pad}{\pad},\twotrack{\deli}{\deli}\right\}^{*}$.

For the transition relation, we need to extend $\tau^{\leq i}_{\xtoalphax}$ and
$\tau^{\leq i}_{\xtoepsilon}$ from the previous section to support delimiters.
An application of a~rule
$\xtoalphax$ on a~system of equations can be described as follows: the rule
$\xtoalphax$ is applied to the first non-empty equation and the rest of the equations
are modified according to the substitution $x\mapsto \alpha x$.
The substitution on the other equations
is performed regardless of their first symbols.
The procedure is analogous for the rule $\xtoepsilon$.
A~series of applications of the rules can reduce the number of
equations, which then leads to a~string in our encoding with a~prefix from
$\big\{\twotrack{\pad}{\pad},\twotrack{\deli}{\deli}\big\}^{*}$.
The relation implementing $\xtoalphax$ or $\xtoepsilon$ on an encoded system of
equations skips this prefix.
Formally, the rule $\xtoalphax$ for a~system of equations where every equation
has at most~$i$
occurrences of every variable is given by the following relation:
\begin{align}
	T_{\xtoalphax}^{\seq, i} &= T_{\pass}.T_{\xtoalphax}^{\leq i}.\Big(
		\big\{ \twotrack{\deli}{\deli}\mapsto\twotrack{\deli}{\deli} \big\} .
		(T_{\trim}\circ S_{\xsubalphax}^{\leq i}) \Big)^{*},
\end{align}
where $T_{\pass}=\big\{\twotrack{\pad}{\pad} \mapsto\twotrack{\pad}{\pad},
\twotrack{\deli}{\deli}\mapsto\twotrack{\deli}{\deli}\big\}^{*}$ and
$S_{\xsubalphax}^{\leq i}$ is the binary transducer representing the
formula $\psi'^{\leq i}_{x \mapsto \alpha x}$ from \cref{sssec:enc_rational}
(which does not look at the first symbol on the tape).
The relation
$T_{\xtoepsilon}^{\seq,i}$ is defined in a~similar manner.
By construction and from the closure properties of rational relations (cf.\
\cref{prop:rat-prop}), it is clear that~$T_{\xtoalphax}^{\seq,i}$
and~$T_{\xtoepsilon}^{\seq,i}$ are rational.

\vspace{-0.0mm}
\subsection{Quadratic Case}\label{sec:quad-system}
\vspace{-0.0mm}

When $\sweq$ is quadratic, its satisfiability problem can be reduced to an RMC
problem
$(\I^{\sqeq}_{\sweq},\T^{\sqeq}_{\xtoalphax}\cup\T^{\sqeq}_{\xtoepsilon},\D^{\sqeq})$
instantiating \cref{alg:rmc} where
\begin{align*}
	\I^{\sqeq}_{\sweq} = \encode(\sweq) \qquad \T^{\sqeq}_{\xtoalphax} &=
  \hspace*{-3.5mm}\bigcup_{y\in\vars, \alpha \in\svars}\hspace*{-3.5mm}T_{\ytoalphay}^{\seq, 2} \qquad \D^{\sqeq} = \big\{\twotrack{\pad}{\pad},\twotrack{\deli}{\deli}\big\}^{*}\\
	\T^{\sqeq}_{\xtoepsilon} &= \bigcup_{y\in\vars}T_{\ytoepsilon}^{\seq, 2}
\end{align*}
Rationality of $\T^{\sqeq}_{\xtoalphax}$ and $\T^{\sqeq}_{\xtoepsilon}$ follows
directly from \cref{prop:rat-prop}.
In addition, we also need to modify \cref{alg:rmc} such that in
Line~\ref{algln:tr}, we substitute~$\saturate$ with~$\saturatenew$ defined as follows:
\begin{equation*}
	\saturatenew(L) = \left\{ u_1.u_2\ldots u_k ~\middle|~ w \in L,
	w \in u_1.\twotrack \pad \pad^+ .u_2.\twotrack \pad \pad^+ \ldots u_k \text{ for some } k \in \nat  \right\}.
\end{equation*}
Intuitively, $\saturatenew(L)$ modifies $\saturate$ to take into account the
fact that we now work with several word equations encoded into a~single word,
where they are separated by delimiters.
Now, $\saturatenew$ does not only remove paddings at the end of the word (when
$k=1$), but also in the middle.

The soundness and completeness of
our procedure for a~system of quadratic word equations is summarized by the
following lemma.
\begin{lemma}\label{lem:quad-sweq-corr}
  \cref{alg:rmc} instantiated with
  $(\I^{\sqeq}_{\sweq},\T^{\sqeq}_{\xtoalphax}\cup\T^{\sqeq}_{\xtoepsilon},\D^{\sqeq})$
  is sound, complete, and terminating if $\sweq$ is quadratic.
\end{lemma}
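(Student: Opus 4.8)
The plan is to reduce correctness of the instantiated algorithm to \cref{lem:Nielsen}, exactly as in the single quadratic-equation case, by establishing a tight correspondence between the encoded strings manipulated by \cref{alg:rmc} and the nodes of the Nielsen proof graph of the system $\sweq$. Concretely, I would show three things: (a) that $\saturatenew$ composed with the transition relation $\T^{\sqeq}_{\xtoalphax}\cup\T^{\sqeq}_{\xtoepsilon}$ maps the encoding of a system of equations precisely onto the encodings of all systems obtained by one system-wide Nielsen step; (b) that every reachable system stays quadratic, so that the bound $i=2$ in the transducers is never exceeded; and (c) that, given (a) and (b), finiteness of the proof graph together with the breadth-first strategy yields soundness, completeness, and termination.

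The crux is step (a), and it amounts to unfolding the definition $T_{\xtoalphax}^{\seq, 2} = T_{\pass}.T_{\xtoalphax}^{\leq 2}.\big(\{ \twotrack{\deli}{\deli}\mapsto\twotrack{\deli}{\deli}\}.(T_{\trim}\circ S_{\xsubalphax}^{\leq 2})\big)^{*}$ and reading off its effect on an encoded system. I would argue that the prefix factor $T_{\pass}$ consumes an arbitrary block of already-solved equations (symbols $\twotrack{\pad}{\pad}$ and delimiters $\twotrack{\deli}{\deli}$), so that the factor $T_{\xtoalphax}^{\leq 2}$, which inspects the leading symbols of both tracks and trims the shared prefix, is applied to exactly the first non-empty equation; meanwhile the iterated factor $\big(\{\twotrack{\deli}{\deli}\mapsto\twotrack{\deli}{\deli}\}.(T_{\trim}\circ S_{\xsubalphax}^{\leq 2})\big)^{*}$ walks across each remaining equation in turn, applies the pure substitution $\xsubalphax$ via $S_{\xsubalphax}^{\leq 2}$ (which, unlike $T_{\xtoalphax}^{\leq 2}$, does not test leading symbols) and re-trims the resulting shared prefix. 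This matches the intended semantics ``apply the Nielsen rule to the first non-empty equation and propagate the substitution to all the others''; the analogous reading holds for $T_{\xtoepsilon}^{\seq,2}$. The role of $\saturatenew$ is then to normalise each successor by discarding padding both at the end of the word and before interior delimiters, so that the canonical padding-free encoding of each successor system is retained; as with $\saturate$, this operation preserves regularity and hence keeps the computation inside the regular fragment of RMC.

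For step (b), I would observe that quadraticity is an invariant of the Nielsen transformation: since each variable occurs at most twice in the whole system $\sweq$, it occurs at most twice in every individual equation, and the substitution $\xsubalphax$ replaces each occurrence of $x$ by $\alpha x$, preserving the number of occurrences of $x$ (one of them possibly being trimmed), while $\xsubepsilon$ only removes occurrences. Consequently every reachable system is again quadratic, so the transducers $T^{\leq 2}$ and $S^{\leq 2}$ faithfully realise the transformation on all reachable configurations, and the $\leq 2$ bound loses no edges of the proof graph.

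Finally, with (a) and (b) in place the argument is identical to the single-equation lemma: the root $\sweq$ is a string in $\I^{\sqeq}_{\sweq}$, a tautology $\emptyword=\emptyword\wedge\cdots\wedge\emptyword=\emptyword$ corresponds after saturation to a string in $\D^{\sqeq}=\{\twotrack{\pad}{\pad},\twotrack{\deli}{\deli}\}^{*}$, and by \cref{lem:Nielsen} the proof graph of a quadratic system is finite. \cref{alg:rmc} performs a breadth-first exploration of this finite graph through repeated $\saturatenew$-images of $\T^{\sqeq}_{\xtoalphax}\cup\T^{\sqeq}_{\xtoepsilon}$, so the fixpoint test $\reachset_i\subseteq\procset$ must eventually succeed (termination); a string of $\D^{\sqeq}$ is reached only along a genuine Nielsen derivation, from which $\extractmodel$ reconstructs a model by replaying the system-wide substitution recovered at each backward step (soundness); and soundness plus completeness of the Nielsen transformation guarantees that a tautology leaf exists and is reached whenever $\sweq$ is satisfiable (completeness). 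I expect the main obstacle to be step (a): rigorously proving that the composition of $T_{\pass}$, the first-equation transducer, the iterated substitution-and-trim block, and $\saturatenew$ together neither skips a non-empty equation nor misaligns delimiters, and that the padding bookkeeping across delimiters produces exactly the intended image and nothing spurious.
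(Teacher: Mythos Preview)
Your proposal follows exactly the paper's approach—encode nodes of the Nielsen proof graph as strings, check that the transducers implement the system-wide Nielsen rules, and appeal to \cref{lem:Nielsen} together with the BFS strategy—only with substantially more detail than the paper's brief sketch.

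One small gap in your step~(b): you argue only that the number of occurrences of~$x$ is preserved under $\xsubalphax$, but the non-trivial case is the count of~$\alpha$ when $\alpha\in\vars$. Since one occurrence of~$\alpha$ is the leading symbol on the opposite side (removed by the trim) and at most one of the at most two occurrences of~$x$ is non-leading (so the substitution introduces at most one extra~$\alpha$ that survives trimming), the $\alpha$-count also stays at most two; with that observation your invariance argument is complete.
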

\begin{proof}
	We encode the nodes of a Nielsen proof graph as strings. The initial node
	$\sweq$ corresponds to a string from $\I^{\sqeq}_{\sweq}$ (conjunction can be
	seen as the delimiter~$\deli$). Because of the padding, the final node $\epsilon =
	\epsilon\wedge\dots \wedge \epsilon = \epsilon$ corresponds to a string from
	$\D^{\sqeq}$. The relations $\T^{\sqeq}_{\xtoalphax}$ and
	$\T^{\sqeq}_{\xtoepsilon}$ implement the Nielsen rules $\xtoalphax$ and
	$\xtoepsilon$. From~\cref{lem:Nielsen} we have that the Nielsen proof graph is
	finite (and hence the potential final node is in a finite depth). Since our approach
	implements the BFS strategy, it is both sound, complete, and terminating.
\end{proof}

\vspace{-0.0mm}
\subsection{General Case}\label{sec:gen-case}
\vspace{-0.0mm}

Let us now consider the general case when the system $\sweq$ is not quadratic.
In this section, we show that this general case is also reducible to an extended
version of RMC.

We first apply \cref{alg:trans} to a~general system of string
constraints~$\sweq$ in order to get an equisatisfiable cubic system of word
equations~$\sweq'$.
If the input of the transformation is a~system of equations with~$n$
symbols, then the output of the transformation will, in the worst case, contain
$\frac n 2$ additional word equations and $\frac n 2$ additional literals, so
the transformation is linear.
Then, we can use
the transition relations $T_{\xtoalphax}^{\seq,3} $ and $T_{\xtoepsilon}^{\seq,3}$ to
construct transformations of the encoded system~$\sweq'$.

\begin{algorithm}[t]
  \caption{Transformation to a cubic system of equations}
  \label{alg:trans}
	\KwIn{System of word equations $\Phi$}
	\KwOut{Equisatisfiable cubic system of word equations $\Psi$}
  $\Psi := \Phi$\;
	\While{$\exists x \in \vars$ s.t.\ $x$ occurs more than three times in $\Psi$}{
		Replace two occurrences of~$x$ in $\sweq$ by a~fresh word variable~$x'$ to obtain a~new system $\Psi'$\;
		$\Psi := (\Psi' \land x = x')$\;
	}
	\Return{$\Psi$}\;
\end{algorithm}

\begin{lemma}
\label{lem:tocubic}
Any system of word equations can be transformed by \cref{alg:trans} to an equisatisfiable cubic
system of word equations.
\end{lemma}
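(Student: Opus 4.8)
The plan is to verify three properties of \cref{alg:trans}: that it \emph{terminates}, that the returned system~$\Psi$ is \emph{cubic}, and that every iteration of the while-loop \emph{preserves satisfiability}, so that the final~$\Psi$ is equisatisfiable with the input~$\Phi$. I would prove all three by analysing a single iteration of the loop and then lifting to the whole run by induction on the number of iterations.

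For termination I would introduce the measure $\mu(\Psi) = \sum_{x \in \vars} \max(0, \occurof x \Psi - 3)$, where $\occurof x \Psi$ denotes the number of occurrences of~$x$ in~$\Psi$. Consider one iteration that picks a variable~$x$ with $\occurof x \Psi = k \geq 4$. Replacing two of its occurrences by the fresh variable~$x'$ and then conjoining the equation $x = x'$ leaves~$x$ with $k - 1$ occurrences and gives~$x'$ exactly three occurrences (two from the replacement, one from the new equation); the counts of all other variables are unchanged. Hence $x$ contributes one less to~$\mu$, $x'$ contributes~$0$, and $\mu$ strictly decreases by exactly~$1$. Since $\mu(\Psi) \geq 0$ is a natural number, the loop executes at most $\mu(\Phi)$ times and terminates. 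When it stops, the loop guard is false, so no variable occurs more than three times; by definition this means the returned~$\Psi$ is cubic.

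It remains to show that a single iteration preserves satisfiability, i.e.\ that $\Psi$ and $\Psi' \land (x = x')$ are equisatisfiable. For the forward direction, given a model~$I$ of~$\Psi$ I would extend it by $I(x') := I(x)$; then $x = x'$ holds and, since the two positions relabelled from~$x$ to~$x'$ evaluate identically under~$I$, the model also satisfies~$\Psi'$. For the backward direction, any model~$J$ of $\Psi' \land (x = x')$ satisfies $J(x) = J(x')$, so undoing the relabelling (reading the $x'$-positions back as~$x$) does not change any evaluation, whence $J$ restricted to the original variables is a model of~$\Psi$. Composing these equivalences along the finitely many iterations yields that the output is equisatisfiable with~$\Phi$.

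The main obstacle, and the only genuinely non-obvious point, is the termination argument: a naive count of total occurrences does \emph{not} decrease, because the step both removes and adds occurrences and introduces a new equation. The key observation to get right is that the fresh variable~$x'$ always lands at \emph{exactly} three occurrences and thus never re-triggers the loop, while the processed variable~$x$ strictly loses one occurrence; this is precisely what makes $\mu$ a valid ranking function. Everything else is routine substitution bookkeeping.
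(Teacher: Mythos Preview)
Your proposal is correct and follows essentially the same approach as the paper: the paper's proof consists solely of the observation that each iteration decreases the number of occurrences of~$x$ by one while introducing~$x'$ with exactly three occurrences, which is precisely your key insight for the ranking function~$\mu$. Your write-up is considerably more detailed—you make the termination measure explicit and spell out the (routine) equisatisfiability argument, both of which the paper leaves implicit—but the underlying idea is identical.
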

\begin{proof}
	Let $\sweq$ be the input system of word equations. Observe that in every
	iteration of \cref{alg:trans}, the number of occurrences of
	a~variable~$x$ is decreased by one and a~new variable~$x'$ with exactly three
	occurrences is introduced.
\end{proof}

One more issue we need to solve is to make sure that we work with a~cubic system
of word equations in every step of our algorithm. It may happen that
a~transformation of the type $\xtoyx$ increases the number of occurrences of the
variable~$y$ by one, so if there had already been three occurrence of~$y$ before
the transformation, the result will not be cubic any more.
More specifically, assume a~cubic system of word equations $x\concat \lhs = y
\concat \rhs\wedge \sweq$, where $x$ and $y$ are distinct word variables and $\lhs$ and
$\rhs$ are word terms. If we apply the transformation~$\xtoyx$,  we will obtain
$x( \substof \lhs x {yx}) = \substof \rhs x {yx}\wedge \substof \sweq x {yx} $.
Observe that
\begin{inparaenum}[(i)]
  \item  the number of occurrences of $y$ is first \emph{reduced by one}
    because the first $y$ on the right-hand side of $x . \lhs = y . \rhs$ is
    removed and
  \item  the number of occurrences of~$y$ can be at most \emph{increased by
    two} because there exist at most two occurrences of~$x$ in~$\lhs$, $\rhs$,
    and~$\sweq$.
\end{inparaenum}
Therefore, after the transformation $\xtoyx$, a cubic system of
word equations might become a \emph{($y$-)quartic system of word equations} (at
most four occurrences of the variable~$y$ and at most three occurrences of any
other variable). For this reason, we need to apply the conversion to the cubic
system after each transformation.

Given a~fresh variable~$v$, we use $\cut_{v}$ to denote the transformation from
a~single-quartic system of word equations to a~cubic system of equations using
the fresh variable $v$.

\begin{lemma}
\label{lem:sq2cb}
	The relation $T_{\cut_{v}}$ performing the transformation $\cut_{v}$ on an encoded
  single-quartic system of equations is rational.
\end{lemma}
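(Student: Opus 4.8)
The plan is to exhibit a finite transducer that realizes $T_{\cut_{v}}$ directly, relying on the crucial fact that the quartic variable occurs only a \emph{bounded} number of times (at most four), so that all the bookkeeping the transformation requires fits into finitely many states. Recall that $\cut_{v}$ takes an encoded single-quartic system, locates the distinguished variable $y$ having four occurrences, replaces exactly two of them by the fresh symbol $v \in \vars$, and appends the new equation $v = y$. Since $v$ is a single symbol (and not a word that again contains $y$), the replacement is a symbol-to-symbol substitution and does not trigger the length blow-up that made $\transductof\xtoyx$ non-rational on general inputs; the only length change is the short appended suffix, which a (length non-preserving) rational relation handles with no difficulty.

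Concretely, I would build, for each candidate $y \in \vars$, a transducer $R_y$ over the alphabet $\msoalphpad^{2} \cup \{\twotrack{\deli}{\deli}\}$ whose state records two bounded counters: the number of occurrences of $y$ read so far (an integer in $\{0,\dots,4\}$) and the number of occurrences already rewritten to $v$ (in $\{0,1,2\}$). Reading the input cell by cell, $R_y$ copies each cell unchanged unless that cell carries $y$ on its top track, its bottom track, or both; for every such occurrence it nondeterministically either leaves $y$ in place or outputs $v$ instead, incrementing the counters accordingly (a cell $\twotrack{y}{y}$ is treated by enumerating the four keep/replace combinations). The counters are bounded and $\vars$ is finite, so each $R_y$ has finitely many states.

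It remains to fix acceptance and the appended suffix. I would declare final only those states in which the $y$-counter equals $4$ and the replacement counter equals $2$; this both guarantees that exactly two of the four occurrences are rewritten and causes $R_y$ to reject any input in which the guessed $y$ is not the quartic variable. From such a final state, a short block of input-$\emptyword$ transitions outputs the suffix $\twotrack{\deli}{\deli}\twotrack{v}{y}$, encoding the freshly introduced equation $v = y$ (both of whose sides have length one). Taking $T_{\cut_{v}} = \bigcup_{y \in \vars} \langof{R_y}$ and invoking closure of rational relations under finite union (\cref{prop:rat-prop}) then yields a rational relation.

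The main point to get right—rather than a genuine obstacle—is the argument that the counting stays finite-state: occurrences of $y$ may be scattered across both tracks and across several delimiter-separated equations, yet their total never exceeds four, which is exactly what lets a fixed finite counter suffice. Care is also needed for cells carrying $y$ on both tracks, so that the global replacement count lands at precisely two. Once these are in place, rationality follows immediately from the boundedness of the occurrences and the single-symbol nature of the substitution $y \mapsto v$.
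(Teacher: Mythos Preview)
Your approach is essentially the paper's: for each candidate variable, build a finite-state transducer that counts its occurrences (up to four), rewrites two of them to~$v$, appends the short suffix encoding $v = y$, and then take the finite union over~$\vars$. The key insight---that the occurrence count is bounded so the bookkeeping is finite-state---is exactly what the paper uses.

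There is, however, one genuine gap. The transformation $\cut_v$ must also act on systems that are \emph{already cubic}: the text preceding the lemma says a cubic system ``might become'' single-quartic after a Nielsen step, so it may equally well stay cubic, and $\cut_v$ should then be the identity. Your construction rejects any input in which no variable attains four occurrences, since every $R_y$ requires its $y$-counter to reach~$4$. The paper handles this by adding to the union a separate identity transducer~$\T_{\mathit{cub}}$ that accepts exactly the encodings of cubic systems and copies them unchanged; you need an analogous component.

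Two minor points that do not affect rationality but would be worth aligning with the paper: the paper deterministically replaces the \emph{first} two occurrences rather than choosing nondeterministically, and it appends the padded suffix $\twotrack{\deli}{\deli}\twotrack{x}{v}\twotrack{\pad}{\pad}^{*}$ so that the output conforms to the encoding convention (which allows arbitrary trailing padding).
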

\begin{proof}
  We show how we can create a~transducer for the transformation
	from a~single-quartic system of word equations to a cubic system of word
	equations.

	In the first step, we create the transducer~$\T^{sq}_{x,x_{i}}$ that accepts only
	input that is an encoding of a $x$-quartic system of word equations. This can
	be done by using states to trace the number of occurrences of variables (we only
	need to count up to four). For an encoding of a~$x$-quartic system of word
	equations, the transducer~$\T^{sq}_{x,x_{i}}$ returns an encoding that is
	obtained by replacing the first two occurrences of~$x$ from the input to~$x_{i}$ and
	concatenating the language
	$\twotrack{\deli}{\deli}\twotrack{x}{x_{i}}\twotrack{\pad}{\pad}^{*}$ at the end.

	In the second step, we create the transducer $\T_{\mathit{cub}}$ that accepts
	only encodings of a cubic system of word equations and returns the same
	encodings.
	Now we have
	\begin{equation}
		T_{\cut_{v}} =\langof{\T_{\mathit{cub}}} \cup \bigcup_{x\in\vars} \langof{\T^{sq}_{x,v}}.
	\end{equation}
	The lemma then follows by \cref{prop:rat-prop}.
\end{proof}
To express solving a~system of string constraints~$\sweq$ in the terms of
a~(modified) RMC, we first convert~$\sweq$ (using \cref{alg:trans}) to
an equisatisfiable cubic system~$\sweq'$. The satisfiability of a~system of word
equations~$\sweq$ can be reduced to a~modified RMC problem
$(\I^{\seq}_{\sweq},\T^{v_{i},\seq}_{\xtoalphax}\cup\T^{v_{i},\seq}_{\xtoepsilon},\D^{\seq})$
instantiating \cref{alg:rmc} with the following components:
\begin{align*}
	\I^{\seq}_{\sweq} &= \encode(\Phi') &
  \T^{v, \seq}_{\xtoalphax} &= T_{\cut_{v}} \circ  \hspace*{-3.5mm}\bigcup_{y\in\vars,
    \alpha\in\svars}\hspace*{-3.5mm}T_{\ytoalphay}^{\seq, 3} &
	\D^{\seq} &= \big\{\twotrack{\pad}{\pad},\twotrack{\deli}{\deli}\big\}^{*}\\
  && \T^{v, \seq}_{\xtoepsilon} &= T_{\cut_{v}} \circ  \bigcup_{y\in\vars}T_{\ytoepsilon}^{\seq, 3}
\end{align*}

\begin{algorithm}[t]
  \caption{Solving a~general string constraint $\varphi$ using RMC}
  \label{alg:rmc-modified}
	\SetKwComment{ctriang}{\scalebox{1.3}{\textcolor{gray}{$\triangleright$\,}}}{}
	\KwIn{Encoding $\I$ of a formula $\varphi$ (the initial set),\\
	\hspace*{13mm}ordered set of indices $\varsV = \{ v_1, v_2, \dots \}$,\\
  \hspace*{13mm}parameterized transformers $\T^{v}_{\xtoalphax}$, $\T^{v}_{\xtoepsilon}$ where $v \in \varsV$, and \\
  \hspace*{13mm}the destination set $\D$}
	\KwOut{A model of $\varphi$ if $\varphi$ is satisfiable, $\false$ otherwise}
  $\reachset_0 := \I$\;
  $\procset := \emptyset$\;
	$\T^{v} := \T^{v}_{\xtoalphax} \cup \T^{v}_{\xtoepsilon}$\;
  $i := 0$\;
  \While{$\reachset_{i} \not\subseteq \procset$}
  {
    \If{$\D \cap \reachset_{i} \neq \emptyset$\label{algln:touch-modified}}
    {
      \Return{$\extractmodel(\{\T^{v_j}\}_{j \in \{ 0, \dots, i \}}, \D, \reachset_0, \ldots, \reachset_{i})$}\;
    }
    $\procset := \procset \cup \reachset_{i}$\;
    $\reachset_{i+1} := \saturatenew\circ\T^{v_{i}}(\reachset_{i})$\label{algln:tr-modified}\;
		$\vars := \vars \cup \{ v_{i} \}$\label{algln:update-modified}\;
    $i$++\;
  }
  \Return{$\false$}\;
\end{algorithm}

For the modified RMC algorithm, we need to assume that $\varsV = \{ v_1, v_2,
\dots \} \cap \svars = \emptyset$, where~$\varsV$ is a~set of \emph{fresh index variables}.
We also need to update Line~\ref{algln:sett} of \cref{alg:rmc}
to $\T^{v} := \T^{v}_{\xtoalphax} \cup \T^{v}_{\xtoepsilon}$ and
Line~\ref{algln:tr} to $\reachset_{i+1} := \T^{v_{i}}(\reachset_{i});~\vars :=
\vars\cup\{v_{i}\};$ to allow using a~new variable~$v_{i}$ in every iteration
(here, in every iteration of the algorithm, $\T^{v}$~will be instantiated with
a~new value of the~$v$ parameter).
The entire algorithm is shown in~\cref{alg:rmc-modified}. Rationality of $\T^{v,
\seq}_{\xtoalphax}$ and $\T^{v, \seq}_{\xtoepsilon}$ follows directly from
\cref{prop:rat-prop}.

\begin{lemma}
\label{lem:sweq-sound}
  \cref{alg:rmc-modified} instantiated with
  $(\I^{\seq}_{\sweq},\varsV,\T^{v,\seq}_{\xtoalphax}\cup\T^{v,\seq}_{\xtoepsilon},\D^{\seq})$ is
  sound if $\sweq$ is cubic.
\end{lemma}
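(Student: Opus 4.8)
The plan is to derive soundness from the backward model-reconstruction carried out by Function \texttt{ExtractModel}, using the soundness of a single Nielsen step and of the cubification map~$\cut$ as the two elementary ingredients. Assume \cref{alg:rmc-modified} returns an assignment; this occurs because $\D^{\seq} \cap \reachset_i \neq \emptyset$ for some~$i$, whereupon \texttt{ExtractModel} builds a backward chain $w_i, w_{i-1}, \ldots, w_0$ with $w_\ell \in \reachset_\ell$ and $w_{\ell-1} \in \T_\ell^{-1}(w_\ell) \cap \reachset_{\ell-1}$, where $\T_\ell = \T^{v_{\ell-1},\seq}_{\xtoalphax} \cup \T^{v_{\ell-1},\seq}_{\xtoepsilon}$. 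Using \cref{lem:eq-enc} and the transducer-correctness lemma of \cref{sssec:enc_rational}, I would decode each~$w_\ell$---up to the trailing and internal padding handled by~$\saturatenew$, which does not affect the decoded system---as a cubic system $\Phi_\ell$, with $\Phi_0 = \sweq'$ the cubic system produced by \cref{alg:trans}. Since $w_i \in \D^{\seq} = \{\twotrack\pad\pad, \twotrack\deli\deli\}^{*}$, the system $\Phi_i$ is the tautology $\emptyword = \emptyword \wedge \cdots \wedge \emptyword = \emptyword$, and by construction of the transformers each forward transition $\Phi_{\ell-1} \to \Phi_\ell$ consists of a Nielsen rule $\rho_\ell \in \{\ytoalphay, \ytoepsilon\}$ (the rule recovered by \texttt{ExtractModel}) followed by the cubification $\cut_{v_{\ell-1}}$ realised by $T_{\cut_{v_{\ell-1}}}$.

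I would then isolate two elementary soundness facts. First, \emph{Nielsen soundness}: if $\bar\Phi$ arises from $\Phi$ by applying $\ytoalphay$ (substitution followed by trimming) and $\model \models \bar\Phi$, then the assignment agreeing with~$\model$ except for setting $y$ to $\model(\alpha)\concat\model(y)$ is a model of~$\Phi$; trimming preserves models because $\alpha u = \alpha v$ iff $u = v$, and this is exactly the sound direction underlying \cref{lem:Nielsen}. The case $\ytoepsilon$ is analogous, keeping $y = \emptyword$. Crucially, these are precisely the updates made in \texttt{ExtractModel}. Second, \emph{cubification soundness}: writing $\Phi_\ell = \cut_{v_{\ell-1}}(\bar\Phi_\ell)$, the map $\cut_{v_{\ell-1}}$ introduces the fresh variable~$v_{\ell-1}$, replaces two occurrences of some variable~$x$ by it, and conjoins the equation $x = v_{\ell-1}$ (cf.\ the proof of \cref{lem:sq2cb}); hence every $\model \models \Phi_\ell$ satisfies $\model(x) = \model(v_{\ell-1})$ and therefore also models the pre-cubification system $\bar\Phi_\ell$, in which $v_{\ell-1}$ does not occur.

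The heart of the argument is a downward induction. Let $\model^{(\ell)}$ denote the assignment held by \texttt{ExtractModel} after the loop iteration with counter~$\ell$ completes, so that $\model^{(i+1)}$ is the initial assignment sending every word variable to~$\emptyword$. I claim $\model^{(\ell)} \models \Phi_{\ell-1}$ for $\ell = i+1, \ldots, 1$. The base case $\ell = i+1$ holds because $\model^{(i+1)}$ maps all variables to~$\emptyword$ and $\Phi_i$ is the tautology above. For the inductive step, assume $\model^{(\ell+1)} \models \Phi_\ell$; cubification soundness yields $\model^{(\ell+1)} \models \bar\Phi_\ell$, and combining this with Nielsen soundness and the update performed in iteration~$\ell$ (prepending $\model(\alpha)$ to $\model(y)$ when $\rho_\ell = \ytoalphay$, and doing nothing when $\rho_\ell = \ytoepsilon$) gives $\model^{(\ell)} \models \Phi_{\ell-1}$. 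Instantiating at $\ell = 1$ yields $\model^{(1)} \models \Phi_0 = \sweq'$. Finally, since \cref{alg:trans} only conjoins equations of the form $x = x'$ with $x'$ fresh, the restriction of a model of $\sweq'$ to the variables of~$\sweq$ is a model of~$\sweq$ (\cref{lem:tocubic}); hence the returned assignment is a model of~$\sweq$, which is the desired soundness.

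The step I expect to be the main obstacle is the interleaving of cubification with the Nielsen steps. The delicate point is that each transition introduces a fresh variable~$v_{\ell-1}$ whose value is pinned down only while processing the later, higher-index transitions, even though \texttt{ExtractModel} ever only updates variables through $\ytoalphay$ rules. The invariant must therefore carry the conjunct $x = v_{\ell-1}$ so that undoing the cubification is automatic: it is exactly the inductive hypothesis $\model^{(\ell+1)} \models \Phi_\ell$ that guarantees $\model^{(\ell+1)}(x) = \model^{(\ell+1)}(v_{\ell-1})$ by the time iteration~$\ell$ is reached. A secondary and more routine technicality is to match the transducer-level words~$w_\ell$---with their padding and the effect of $\saturatenew$---to the system-level systems~$\Phi_\ell$, which follows from \cref{lem:eq-enc} and the transducer-correctness lemma of \cref{sssec:enc_rational}.
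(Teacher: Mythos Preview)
Your proposal is correct. It rests on the same two ingredients as the paper's proof---the soundness direction of the Nielsen transformation (\cref{lem:Nielsen}) and the fact that the cubification~$\cut_v$ preserves satisfiability (\cref{lem:sq2cb})---but develops them along a different line. The paper argues at the level of the Nielsen proof graph: encoded strings correspond to nodes, the transducers $\T^{v,\seq}_{\xtoalphax}$ and $\T^{v,\seq}_{\xtoepsilon}$ implement the edges, $\cut_v$ preserves satisfiability, and soundness then follows from \cref{lem:Nielsen} without ever unpacking \texttt{ExtractModel}. You instead work directly with the model-reconstruction loop, decode each $w_\ell$ to a system~$\Phi_\ell$, and run a downward induction showing that the running assignment $\model^{(\ell)}$ is a model of~$\Phi_{\ell-1}$. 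This buys you an explicit verification that the \emph{returned assignment} (not merely some assignment) is a model, which is exactly what the soundness definition asks for; the paper's brief proof leaves that step implicit in its appeal to \cref{lem:Nielsen}. Your handling of the fresh variables~$v_{\ell-1}$---noting that the conjunct $x = v_{\ell-1}$ is part of~$\Phi_\ell$ and hence pinned down by the inductive hypothesis before it is needed---is exactly the point the paper compresses into the phrase ``$\cut_v$ preserves satisfiability''. One minor redundancy: since the lemma already assumes~$\sweq$ is cubic, you have $\sweq' = \sweq$ and the final appeal to \cref{lem:tocubic} is unnecessary.
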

\begin{proof}
	We again encode the nodes of a Nielsen proof graph as strings. The initial and
	final node correspond to strings from the encoded initial $\I^{\seq}_{\sweq}$ and
	final language $\D^{\seq}$, respectively. The
	relations $\T^{v,\seq}_{\xtoalphax}$ and $\T^{v,\seq}_{\xtoepsilon}$ implement
	the Nielsen rules $\xtoalphax$ and $\xtoepsilon$. For an arbitrary system of
	word equations the Nielsen proof graph may be infinite. However, since the
	transformation $\cut_{v}$ preserves satisfiability, the procedure is sound.
\end{proof}

\paragraph{Completeness}
Since the previous approach can in each step introduce a new equation (due to
the transducer $T_{C_v}$ transforming the system of equations to a~cubic one),
completeness is not guaranteed in general. In order to get a~sound and
complete procedure for cubic equations, it is necessary to use transducers with
an increasing number of symbols being rewritten. In particular, we need to use transducers implementing the following relations:
\begin{align*}
  \T^{i, \seq\star}_{\xtoalphax} &= \hspace*{-3.5mm}\bigcup_{y\in\vars,
    \alpha\in\svars}\hspace*{-3.5mm}T_{\ytoalphay}^{\seq, 2^{i+1}} &
	 \T^{i, \seq\star}_{\xtoepsilon} &= \bigcup_{y\in\vars}T_{\ytoepsilon}^{\seq, 2^{i+1}}
\end{align*}
with~$i \in \nat_1$ being the number of iteration
(moreover, it is also necessary to skip Line~\ref{algln:update-modified} in \cref{alg:rmc-modified}). After each
step, the maximum number of occurrences of a~variable in the system is
in the worst case multiplied by two.
Therefore, in the~$i$-th step, the number of occurrence of a~variable in the
system can be up to~$2^{i+1}$, which can be handled by
$\T^{i, \seq\star}_{\xtoalphax}$
and
$\T^{i, \seq\star}_{\xtoepsilon}$.

\begin{lemma}
\label{lem:sweq-complete}
  \cref{alg:rmc-modified} instantiated with
  $(\I^{\seq}_{\sweq}, \nat,\T^{i,\seq\star}_{\xtoalphax}\cup\T^{i,\seq\star}_{\xtoepsilon},\D^{\seq})$
  and modified as described above is sound and complete if~$\sweq$ is cubic.
\end{lemma}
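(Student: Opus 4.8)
The plan is to prove the two claims separately, reusing the soundness argument of \cref{lem:sweq-sound} almost verbatim and isolating completeness as the genuinely new content. For \textbf{soundness}, I would note that each $\T^{i,\seq\star}_{\xtoalphax}$ and $\T^{i,\seq\star}_{\xtoepsilon}$ is a finite union of the rational relations $T_{\ytoalphay}^{\seq, 2^{i+1}}$ and $T_{\ytoepsilon}^{\seq, 2^{i+1}}$, each of which faithfully realises a single Nielsen step on an encoded system whose variable occurrences stay within the indicated bound. The decisive difference from \cref{lem:sweq-sound} is that these relations contain \emph{no} $\cut_{v}$ composition, so every word produced by \cref{alg:rmc-modified} encodes an \emph{actual} node of the Nielsen proof graph of the original cubic system $\sweq$ rather than a $\cut_v$-modified one. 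Hence, whenever the test $\D^{\seq} \cap \reachset_i \neq \emptyset$ succeeds, the witnessing word encodes a tautology $\emptyword = \emptyword \wedge \dots \wedge \emptyword = \emptyword$ reachable from $\sweq$; $\extractmodel$ then reconstructs an assignment by replaying the recorded $\ytoalphay$ substitutions backwards along the computed pre-images, and soundness of the Nielsen transformation (\cref{lem:Nielsen}) guarantees that this assignment is a model of $\sweq$.

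For \textbf{completeness}, the core is a growth bound on variable occurrences pitted against the growing capacity of the transducers. I would first prove, by induction on the BFS depth $i$, that every configuration in $\reachset_i$ encodes a system in which each variable occurs at most $m_i$ times, where $m_i = 2^{i} + 2$; in particular $m_0 = 3$ is exactly the cubic bound and $m_i \le 2^{i+1}$ for all $i \ge 1$. For the induction step, observe that $\xtoepsilon$ never increases any occurrence count, while applying $\xtoalphax$, i.e.\ a substitution $x \mapsto \alpha x$, prepends $\alpha$ to every non-leading occurrence of $x$ and consumes the shared leading symbol during trimming; a short count shows that the occurrence number of any variable is thereby bounded by $2 m_i - 2 = m_{i+1}$. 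This is precisely the invariant that keeps $T_{\ytoalphay}^{\seq, 2^{i+1}}$ wide enough: no node at depth $i$ falls silently outside the transducer's domain, so \emph{all} genuine Nielsen successors of every reachable node are actually computed. Skipping Line~\ref{algln:update-modified} is essential here, as it locks the transducer index to the BFS depth.

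With this invariant, completeness follows from \cref{lem:Nielsen}. If $\sweq$ is satisfiable, the proof graph contains a \emph{finite} path from $\sweq$ to the tautology; let $N$ be its length. By the invariant, the node on this path at depth $i \le N$ has every occurrence count within $m_i \le 2^{i+1}$ and is therefore processed correctly by $\T^{i,\seq\star}_{\xtoalphax} \cup \T^{i,\seq\star}_{\xtoepsilon}$. Since \cref{alg:rmc-modified} expands successors in breadth-first order, the encoding of the tautology lands in $\reachset_N$ up to padding and delimiters, i.e.\ it falls into $\D^{\seq}$, after at most $N$ iterations, whereupon the algorithm calls $\extractmodel$ and returns a model. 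I would stress that termination is deliberately \emph{not} part of the statement: on unsatisfiable cubic systems the occurrence counts may grow without bound, and no finite unsat-invariant need exist.

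The step I expect to be the main obstacle is making the occurrence-growth invariant airtight for \emph{all} reachable configurations at a given depth simultaneously, not merely along one chosen solution path. Because each $T_{\ytoalphay}^{\seq, 2^{i+1}}$ is by construction undefined on inputs that exceed its occurrence bound, an off-by-one slip in the doubling estimate would let a genuine successor be dropped at some depth, silently destroying completeness. I would therefore pin down the exact constants ($m_{i+1} \le 2 m_i - 2$, solved to $m_i = 2^{i} + 2$, whence $m_i \le 2^{i+1}$ only for $i \ge 1$) and separately verify the cubic base case, checking that the very first transducer applied has capacity at least $3$ and that, in general, the transducer used while producing $\reachset_{i+1}$ has capacity at least $m_i$, so that the whole input set $\reachset_i$ lies inside its domain.
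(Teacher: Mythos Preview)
Your proposal is correct and follows essentially the same route as the paper's proof: both argue soundness by observing that the $\T^{i,\seq\star}$ transducers faithfully implement Nielsen steps on encoded systems, and both establish completeness via the key growth estimate that one Nielsen step raises the maximal variable-occurrence count from $n$ to at most $2n-2$, so that the bound $2^{i+1}$ at depth $i$ keeps every reachable configuration inside the transducer's domain. Your treatment is more explicit---you solve the recurrence exactly ($m_i = 2^{i}+2$) and flag the base-case indexing issue that the paper leaves implicit through its $i\in\nat_1$ convention---but the underlying argument is the same.
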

\begin{proof}
	As in previous cases, we encode the nodes of a Nielsen proof graph as
	strings. Since the transducer $T_{\ytoalphay}^{\seq, \ell}$ correctly
	implements the Nielsen rule $\ytoalphay$ for systems where each variable
	occurs at most $\ell$~times, it suffices to show that in the $i$-th iteration
	in~\cref{alg:rmc-modified}, the number of occurrences of each variable is
	bounded by $2^{i+1}$. The soundness and completeness then follows from the BFS
	strategy of \cref{alg:rmc-modified} and \cref{lem:Nielsen}.

	Consider a system of equations $\sweq$ s.t.\ each variable occurs at most
	$n$~times. Application of the rule of the form $\ytoepsilon$ does not increase
	the number of occurrences of $y$. The situation is different for a rule of the
	form $\ytoalphay$ s.t. $\alpha \in \vars$.
  Furthermore, observe that there are at
	most $n$ occurrences of~$y$ and $(n-1)+(n-1) = 2n-2$ occurrences of~$\alpha$
	in the modified system (each~$y$ is replaced by~$\alpha y$ and the first
	occurrence of $\alpha$ is removed from the modified system). The maximum
	number of occurrences of variables in the $i$-th iteration of
	\cref{alg:rmc-modified} is hence bounded by $2^{i+1}$ (note that this is a 
	loose upper bound, since the number of occurrences increases to $2n-2\leq 2n$ 
	in the worst case), provided that~$\sweq$ is
	cubic. 
\end{proof}

\paragraph{Termination}
Since the Nielsen transformation does not guarantee termination for the general case,
neither does our algorithm.
Investigation of possible symbolic encodings of complete algorithms, e.g.\
Makanin's algorithm (\cite{makanin1977problem}), is our future work.

\vspace{-0.0mm}
\section{Handling Boolean Combination of String Constraints}\label{section:full}
\vspace{-0.0mm}
%
In this section, we will extend the procedure from handling a~\emph{conjunction}
of word equations into a~procedure that handles their arbitrary Boolean
combination.
An obvious approach is by combining the solutions we have given in
\cref{section:we_rmc,section:conjunction} with standard DPLL(T)-based solvers
and use our procedure to handle the string theory.
We can, however, solve the whole formula with our procedure by using the
encoding proposed in this section.
Although we do not have hope that the presented solution can compete with the
highly-optimized DPLL(T)-based solvers, it (also taking into account the extensions
from \cref{sec:rmc-extension}) makes our framework more robust, by having
a~homogeneous automata-based encoding for a~quite general class of constraints.
When one, e.g., tries to extend the approach by using abstraction
(cf.~\cite{BouajjaniHRV12}) to accelerate
termination or by \emph{learning} the invariant in the spirit
of~\cite{NeiderJ13}, they can then still treat the encoding of the whole system of
constraints uniformly within the framework of RMC.

The negation of word equations can be handled in the standard way. For instance,
we can use the approach given by~\cite{abdulla2014string} to convert a~negated word
equation $\lhs \neq \rhs$ to the following string constraint:
\begin{equation}\label{eq:neq}
	\bigvee_ {c\in \Sigma} (\lhs= \rhs \cdot c x  \vee \lhs \cdot c x = \rhs)
	\vee \bigvee_ {c_1, c_2\in \Sigma, c_1\neq c_2} (\lhs = y c_1 x_1 \wedge
	\rhs = y c_2 x_2)
\end{equation}
The first part of the constraint says that either $\lhs$ is a strict prefix of
$\rhs$ or the other way around. The second part says that $\lhs$ and $\rhs$
have a~common prefix~$y$ and start to differ in the next characters $c_1$ and
$c_2$.
For word equations connected using $\land$ and $\lor$, we apply distributive
laws to obtain an equivalent formula in the conjunctive normal form (CNF) whose
size is at worst exponential in the size of the original formula.
Note that we cannot use the Tseitin transformation~\cite{tseitin1983complexity},
since it may introduce fresh negated variables and their removal
using~\cref{eq:neq} would destroy the CNF form.

Let us now focus on how to express solving a string constraint $\sweq$ composed
of an arbitrary Boolean combination of word equations using a~(modified) RMC.
We start by
removing inequalities in $\sweq$ using \cref{eq:neq}, then we convert the system without
inequalities into CNF, and, finally, we apply the procedure in
\cref{lem:tocubic} to convert the CNF formula to an equisatisfiable and
cubic CNF~$\sweq'$.
For deciding satisfiability of $\sweq'$ in the terms of RMC, both the transition
relations and the destination set remain the same as in the previous
section (general case). The only difference is the initial configuration because
the system is not a conjunction of terms any more but rather a~general formula in
CNF.
For this, we extend the definition of $\encode$ to a~clause $c \colon
(\lhs^1=\rhs^1 \vee \ldots \vee \lhs^{n} = \rhs^{n})$ as
$\encodeof{c}=\bigcup_{1 \leq j \leq n }\encodeof{\lhs^j,\rhs^j}$.
Then the initial configuration for~$\sweq'$ is given as
\begin{equation}
 \I^{\sct}_{\sweq'} = \encode(c_1) . \big\{\twotrack{\deli}{\deli}\big\} . \ldots . \big\{\twotrack{\deli}{\deli}\big\} . \encode(c_m),
\end{equation}
where $\sweq'$ is 
of the form $\sweq'\colon c_1 \land \ldots \land c_m$ and each clause $c_{i}\colon
(\lhs^1=\rhs^1 \vee \ldots \vee \lhs^{n_{i}} = \rhs^{n_{i}})$.
We obtain the following lemma directly from \cref{prop:rat-prop}.
\begin{lemma}
	The initial set $\I^{\sct}_{\sweq'}$ is regular.
\end{lemma}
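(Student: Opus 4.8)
The plan is to assemble $\I^{\sct}_{\sweq'}$ from its constituents using only the standard closure properties of regular languages, exactly paralleling the regularity argument already given for $\encode(\sweq)$ in \cref{section:conjunction}. First I would note that all the languages involved live over a single fixed alphabet, namely the composed alphabet $\msoalphpad^2$ extended with the delimiter letter $\twotrack{\deli}{\deli}$; fixing this common alphabet is what makes the subsequent union and concatenation operations well-defined and keeps their regularity-preserving closure properties applicable.

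Next I would handle the building blocks. By \cref{lem:eq-enc}, each individual equation encoding $\encodeof{\lhs^j,\rhs^j}$ is a regular language. Since a clause encoding is defined as the finite union $\encodeof{c_i}=\bigcup_{1 \leq j \leq n_i}\encodeof{\lhs^j,\rhs^j}$, closure of regular languages under finite union (cf.\ \cref{prop:rat-prop}) yields that $\encodeof{c_i}$ is regular for every $1 \leq i \leq m$. The delimiter language $\big\{\twotrack{\deli}{\deli}\big\}$ is a singleton, hence trivially regular.

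Finally, $\I^{\sct}_{\sweq'}=\encode(c_1) . \big\{\twotrack{\deli}{\deli}\big\} . \ldots . \big\{\twotrack{\deli}{\deli}\big\} . \encode(c_m)$ is a finite concatenation of the regular languages listed above, so closure of regular languages under concatenation (again \cref{prop:rat-prop}) delivers that $\I^{\sct}_{\sweq'}$ is regular, completing the argument.

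I do not expect a genuine obstacle here: the statement is an accounting exercise over closure properties rather than a substantive construction, and it differs from the corresponding lemma in \cref{section:conjunction} only in that each conjunct is now a finite union of equation encodings rather than a single one. The one point deserving a line of care is confirming that the cited closure operations are taken within the same ambient alphabet (so that, e.g., the delimiter symbol is available uniformly), after which the result is immediate.
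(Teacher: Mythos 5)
Your proof is correct and follows essentially the same route as the paper, which derives the lemma directly from \cref{prop:rat-prop}: each $\encodeof{\lhs^j,\rhs^j}$ is regular by \cref{lem:eq-enc}, clause encodings are finite unions, and $\I^{\sct}_{\sweq'}$ is a finite concatenation of these regular pieces with singleton delimiter languages. The paper leaves these closure-property steps implicit, so your write-up is simply a more explicit version of the same argument.
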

The transition relation and the destination set are the same as the ones in the
previous section, i.e., $\T^{v,\sct}_{\xtoalphax} =
\T^{v,\seq}_{\xtoalphax}$, $\T^{v,\sct}_{\xtoepsilon}=\T^{v,\seq}_{\xtoepsilon}$,
and $\D^{\sct}= \D^{\seq}$. The soundness of our procedure for 
a~Boolean combination of word equations is summarized by the following lemma.
The completeness can be achieved, as in~\cref{sec:gen-case}, by transducers
with an increasing number of rewritten symbols.
\begin{lemma}
	Given a Boolean combination of word equations $\sweq$,
  \cref{alg:rmc-modified} instantiated with
  $(\I^{\sct}_{\sweq'},\varsV,\T^{v,\sct}_{\xtoalphax}\cup\T^{v,\sct}_{\xtoepsilon},\D^{\sct})$ is
  sound.
\end{lemma}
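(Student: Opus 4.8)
The plan is to reduce soundness of the Boolean case to the already-established soundness of the conjunctive (cubic-system) case in \cref{lem:sweq-sound}, isolating the single genuinely new ingredient: the disjunctive union encoding of clauses. First I would check that each of the three preprocessing steps producing $\sweq'$ from $\sweq$ preserves models in the direction needed for soundness. Removal of inequalities via \cref{eq:neq} is a logical equivalence modulo the fresh auxiliary variables it introduces, so any model of the rewritten formula projects (after forgetting the fresh variables) to a model of the original; the conversion to CNF is a strict equivalence; and the cubic conversion of \cref{lem:tocubic} is equisatisfiable in the strong sense that a model of the cubic system, restricted to the variables present before conversion, is a model of that formula (the fresh $x'$ merely carry copies forced by the conjuncts $x = x'$). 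Composing these, any assignment that models $\sweq'$ yields, after projecting away all fresh variables, a model of $\sweq$. Hence it suffices to prove that whenever \cref{alg:rmc-modified} returns an assignment, that assignment is a model of $\sweq'$.

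Next I would pin down the semantics of the initial set. Because $\encode$ is lifted to a clause $c_i$ as the union $\bigcup_j \encodeof{\lhs^j, \rhs^j}$ over its disjuncts, the language $\I^{\sct}_{\sweq'} = \encodeof{c_1}.\big\{\twotrack{\deli}{\deli}\big\}.\cdots.\big\{\twotrack{\deli}{\deli}\big\}.\encodeof{c_m}$ is exactly the set of encodings of all \emph{selected conjunctions} $\psi_\sigma \colon \bigwedge_{1 \leq i \leq m} (\lhs^{\sigma(i)} = \rhs^{\sigma(i)})$, where $\sigma$ chooses one disjunct from each clause. The key semantic observation is that an assignment $\model$ models the CNF $\sweq'$ precisely when some selection $\sigma$ yields a conjunction $\psi_\sigma$ that $\model$ models: a model of every clause picks a true disjunct in each clause, giving a satisfied $\psi_\sigma$, and conversely a model of a single $\psi_\sigma$ satisfies every clause of $\sweq'$.

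With this in hand, I would analyse the returned assignment. When \cref{alg:rmc-modified} succeeds, \texttt{ExtractModel} selects a word $w_i \in \D^{\sct} \cap \reachset_i$ (an encoding of the tautology $\emptyword = \emptyword \land \cdots \land \emptyword = \emptyword$) and performs a backward run, at each step choosing a pre-image $w_{\ell-1} \in (\T^{v_\ell})^{-1}(w_\ell) \cap \reachset_{\ell-1}$ and recording the Nielsen rule applied. Since $\reachset_0 = \I^{\sct}_{\sweq'}$, the endpoint $w_0$ of the backward run is the encoding of one concrete selected conjunction $\psi_\sigma$. As $\T^{v,\sct}_{\xtoalphax}$ and $\T^{v,\sct}_{\xtoepsilon}$ coincide with $\T^{v,\seq}_{\xtoalphax}$ and $\T^{v,\seq}_{\xtoepsilon}$, they act on a delimiter-separated word as a genuine step of the Nielsen transformation on the whole system (including the cubic-restoring $\cut_v$), exactly as in \cref{sec:gen-case}; the union encoding of clauses only multiplies the admissible starting words and never mixes disjuncts along a single backward trace. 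Therefore the recorded rule sequence is a valid Nielsen derivation from $\psi_\sigma$ to the tautology, and by the reasoning of \cref{lem:sweq-sound} (soundness of Nielsen, \cref{lem:Nielsen}, correctness of the transducers, and satisfiability-preservation of $\cut_v$) the assignment $\model$ built by the backward run models $\psi_\sigma$. By the observation above $\model \models \sweq'$, and projecting away the fresh variables gives a model of $\sweq$.

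The step I expect to be the main obstacle is justifying rigorously that the backward trace lands on a single, internally consistent selection $\psi_\sigma$ rather than on a word that blends disjuncts across clauses. This relies on the initial language being a concatenation of the per-clause unions, so that every word of $\I^{\sct}_{\sweq'}$ already commits to exactly one disjunct in each delimiter-separated block; combined with the pre-image computation being restricted to the sets $\reachset_\ell$, this guarantees the endpoint $w_0$ is a legitimately encoded conjunction. Making this precise---and confirming that $\T^{v,\sct}$ produces no pre-image that crosses a delimiter in a way corrupting the per-clause structure---is the technical heart of the argument, after which the appeal to \cref{lem:sweq-sound} is routine.
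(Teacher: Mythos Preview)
Your proposal is correct and follows essentially the same approach as the paper: both reduce to \cref{lem:sweq-sound} via the observation that $\I^{\sct}_{\sweq'}$ is the union, over all selections $\sigma$ of one disjunct per clause, of the encodings $\encodeof{\psi_\sigma}$, so that any accepting run of the algorithm witnesses satisfiability of some concrete conjunction $\psi_\sigma$ and hence of the CNF. Your write-up is considerably more detailed than the paper's---in particular, your explicit analysis of the backward trace in \texttt{ExtractModel} landing on a single selection, and your tracking of the model-preservation direction through the three preprocessing steps, fill in points the paper leaves implicit under the phrase ``from the BFS strategy of RMC''---but the underlying argument is the same.
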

\begin{proof}
  A system of full word equations can be converted according to the
	steps described above to an equisatisfiable system in CNF $\Psi \colon \bigwedge_{i
	= 1}^{n} c_{i}$ where every~$c_{i}$ is a~disjunction of equalities.
  Then, $\Psi$ is satisfiable
	if there is some $\phi\colon \bigwedge_{i = 1}^{n} \lhs^{i} = \rhs^{i}$ where $(\lhs^{i} =
	\rhs^{i}) \in c_{i}$ for all~$i$.
  Moreover, we have $\encodeof{\phi} \in \I^{\sct}_{\Phi}$. From
	\cref{lem:sweq-sound} (and from the BFS strategy of RMC), we get that our
	algorithm is sound in proving $\sweq$ is satisfiable.
\end{proof}

Completeness is guaranteed if we consider the transducers $\T^{i, \seq\star}$ for $\sweq$ in the same way as in \cref{sec:gen-case}.
Regarding termination, it cannot be guaranteed, due to the corresponding result in \cref{sec:gen-case} that holds for a special case of the Boolean combination of string equations we consider here.

\vspace{-0.0mm}
\section{Extensions}
\label{sec:rmc-extension}
\vspace{-0.0mm}
In this section, we discuss how to extend our RMC-based framework to support the following
two types of \emph{atomic constraints}:
\begin{enumerate}[(i)]
  \item  A \emph{length constraint} $\varphi_{i}$ is a formula of Presburger
    arithmetic over the values of $|x|$ for $x \in \vars$, where $|\,\cdot\,|
    \colon \vars \to \nat$ is the word length function (to simplify the notation
    we use a formula of Presburger arithmetic with free variables~$\vars$ and we
    keep in mind that the value assigned to $x\in\vars$ corresponds in fact to
    $|x|$).

  \item  A \emph{regular constraint} $\varphi_r$ is a conjunction of atoms of the
		form $x \in \langof{\aut}$ (or their negation) where $x$ is a word variable
		and $\aut$ is an FA representing a regular language.
\end{enumerate}

\vspace{-0.0mm}
\subsection{Length Constraints}
\label{sec:length-constraints}
\vspace{-0.0mm}
%
In order to extend our framework to solve word equations with length
constraints, we encode them as regular languages, and we encode the effect of
Nielsen transformations on the lengths of variables as regular relations.
Let us start with defining \emph{atomic length constraints}:
\begin{align*}
  \varphi_{\mathit{len}} ::=
  a_1 x_1 + \cdots + a_n x_n \leq c
\end{align*}
for string variables $x_1, \ldots, x_n \in \vars$ and integers $a_1, \ldots,
a_n, c \in \mathbb{Z}$ (we will also use formulae in a~less restricted form,
which can always be translated to the defined one using standard arithmetic
rules).
Given a~variable assignment $I\colon \vars \to \Sigma^{*}$, it~holds that~$I$ is a~model
of~$\varphi_{\mathit{len}}$, written as $I \models \varphi_{\mathit{len}}$, iff
$a_1 \cdot |I(x_1)| + \cdots + a_n \cdot |I(x_n)| \leq c$.
We note that the satisfiability of a~string constraint with only atomic length constraints connected via
Boolean connectives (i.e., no word equations) corresponds to the satisfiability of
a~Boolean combination of constraints in \emph{integer linear programming},
which is an \clNP-complete problem~(\cite{Karp72}).

We will show that length constraints can be encoded into our framework using
standard automata-based techniques for dealing with constraints
in Presburger arithmetic~(\cite{Presburger29,WolperB00,BoigelotW02}).
First, let us define how a~first-order variable ranging over~$\nat$ is
represented in \msostr.
Let $\lsbf\colon \nat \to 2^\nat$ be a~function representing the
\emph{least-significant bit first} binary encoding of a~number such that for $n
\in \nat$, we define $\lsbfof n$ to be the finite set~$S \subseteq \nat$ for
which $n = \sum_{i\in S} 2^{i}$.
For instance, $\lsbfof{42} = \{1, 3, 5\}$ because $42 = 2^1 + 2^3 + 2^5$.
Moreover, we define the \emph{positional} least-significant bit first binary encoding of a number $n\in\nat$ as $\lsbfp(n) = \{ \ell + 1 \mid \ell \in \lsbf(n) \}$.

\begin{proposition}\label{prop:presb-as-mso}
Let $\varphi_{\mathit{len}}(x_1, \ldots, x_n)$ be an atomic length constraint.
Then there exists an \msostr formula $\psi_{\mathit{len}}(X_1, \ldots, X_n)$
with free position variables $X_1, \ldots, X_n$
such that an assignment $\sigma\colon \{x_1, \ldots, x_n\} \to \nat$ is
a~model of~$\varphi_{\mathit{len}}$ iff the assignment $\sigma' =
\{X_{i} \mapsto \lsbfp(v_{i}) \mid \sigma(x_{i}) = v_{i}\}$ is a~model
of~$\psi_{\mathit{len}}$.
\end{proposition}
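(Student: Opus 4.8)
The plan is to follow the classical automata-theoretic treatment of Presburger arithmetic (\cite{Presburger29,WolperB00,BoigelotW02}) and reduce the claim to the construction of a single finite automaton that reads the binary encodings of $v_1, \dots, v_n$ synchronously, least-significant bit first. Concretely, I will work over the stacked alphabet $\{0,1\}^n$, where the letter at position $p$ is the column $(b_1, \dots, b_n)$ with $b_i$ the bit of $v_i$ of weight $2^{p-1}$. This column representation matches the $\lsbfp$ encoding exactly: since $\lsbfp(v_i) = \{\ell+1 \mid \ell \in \lsbf(v_i)\}$, the condition $p \in X_i$ holds iff the summand $2^{p-1}$ occurs in $v_i$, so reading positions $1, 2, 3, \dots$ is precisely an LSB-first scan. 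Having produced such an automaton~$\aut$, I will observe that it is length-preserving (it reads one column per position), hence recognizes a regular relation over $(\{0,1\}^{*})^n$; by \cref{prop:msostr-power} this relation is defined by some \msostr formula with $n$ free string variables, and reinterpreting each binary string variable $w_i$ as its characteristic position set $X_i$ (via $w_i[p] = 1 \liff p \in X_i$) yields the desired $\psi_{\mathit{len}}(X_1, \dots, X_n)$.

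The automaton itself is the standard ``carry automaton'' for a linear inequality. Its states are integers (the running bound), with initial state $c$; on reading a column $(b_1, \dots, b_n)$ from a state $k$ it moves to $\lfloor (k - \sum_i a_i b_i)/2 \rfloor$, and a state $k$ is accepting iff $k \geq 0$. The correctness invariant, proved by induction on the number of processed positions, is that $\sum_i a_i v_i \leq c$ iff after consuming the whole encoding the automaton rests in a non-negative state; this follows from the identity $\sum_i a_i v_i = \sum_i a_i b_i^{(1)} + 2 \sum_i a_i \lfloor v_i/2 \rfloor$ together with the integrality of the left-hand side, which turns $\leq c$ into $\leq \lfloor (c - \sum_i a_i b_i^{(1)})/2 \rfloor$ on the higher-order bits.

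The first point requiring care is the finiteness of the state set: although the states are drawn from $\mathbb{Z}$, I will show that all states reachable from $c$ lie in a bounded interval $[-M, M]$ with $M$ depending only on $c$ and $\sum_i |a_i|$, because the update $k \mapsto \lfloor (k - s)/2 \rfloor$ with $|s| \leq \sum_i |a_i|$ contracts towards this interval; this yields a genuine finite automaton. The second, and in my view the main, obstacle is matching the semantics of \msostr, in which every model fixes a length $\lenofint \sigma$ and the sets $X_i$ live in $\{1, \dots, \lenofint \sigma\}$: the formula must give the same verdict regardless of how many trailing high-order zero positions a model carries, so that it depends only on the numbers $v_i$ and not on $\lenofint \sigma$. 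I will establish this padding-robustness from the shape of the update on the all-zero column, namely $k \mapsto \lfloor k/2 \rfloor$, whose iterates converge to $0$ when $k \geq 0$ and to $-1$ when $k < 0$; hence acceptance is preserved under appending (or removing) zero columns, and the equivalence between $\sigma \models \varphi_{\mathit{len}}$ and $\sigma' \models \psi_{\mathit{len}}$ holds for every sufficiently long model, which is exactly what the statement requires.
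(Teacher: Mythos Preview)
Your proposal is correct and follows essentially the same approach as the paper, which simply cites \cite{GlennG96} for the standard carry-automaton construction for linear inequalities in LSBF binary; you have spelled out that construction and its correctness, finiteness, and padding-invariance arguments in detail, together with the passage to \msostr via \cref{prop:msostr-power}.
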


\begin{proof}
A~possible encoding of $\varphi_{\mathit{len}}$ into $\psi_{\mathit{len}}$ is
given, e.g., in~\cite{GlennG96}.
\end{proof}

Recall that in automata-based approaches to Presburger arithmetic (such as
\cite{GlennG96,WolperB00,BoigelotW02}), a~formula~$\varphi$ with~$k$ free variables is
translated into an automaton~$\autof \varphi$ over the alphabet~$\bbB^k$ for
$\bbB = \{0,1\}$.
A~model of~$\varphi$ is represented as a~word $w \in \big(\bbB^k\big)^{\!*}$ in the
language of~$\autof \varphi$ such that projecting the track for variable~$x_{i}$
from~$w$ gives us the $\lsbf$ encoding of the value of~$x_{i}$ in the model.
For instance, if $\bintrackfst x y {1~0~0~1}{1~0~1~0} \in \langof{\autof
\varphi}$, then the assignment $\{x \mapsto 9 , y \mapsto 5\}$ is a~model
of~$\varphi$ because $1001$ is a~LSBF binary encoding of the number~$9$ and
$1010$ encodes the number~$5$.

In order to encode dealing with length constraints into our framework,
\cref{prop:presb-as-mso} is not sufficient: we also need to be able to represent
how the transformations modify those constraints and how the constraints
restrict the space of possible solutions.
In the following paragraphs, we provide the details about our approach.

Consider a~word~$w_\sigma$ encoding an~assignment~$\sigma \colon \vars \to \nat$. 
The transformation
$\xtoyx$ for $x,y\in\vars$ applied to $w_\sigma$ produces a word $w_{\sigma'}$
encoding the assignment $\sigma' = \sigma \triangleleft \{ x \mapsto \sigma(x) -
\sigma(y) \}$ if $\sigma(x) \geq \sigma(y)$, where $\sigma' = \sigma \triangleleft \{x \mapsto n\}$ is defined as $\sigma'(x) = n$ and $\sigma'(y) = \sigma(y)$ for all $y \neq x$.
The transformation $\xtoax$, for
$a\in\Sigma$ produces a word $w_{\sigma'}$ encoding the assignment $\sigma' =
\sigma \triangleleft \{ x \mapsto \sigma(x) - 1 \}$ if $\sigma(x) \geq 1$.
Finally, the transformation $\xtoepsilon$ does not change the word $w_\sigma$,
but imposes the restriction $\sigma(x) = 0$.
Formally, the transformations are described using the following formulae:
\begin{align}
  \begin{split}
    \varphi^{\mathit{len}}_{\xtoyx}(x, y, x') \defiff{}& x \geq y \land x' = x - y, \\
    \varphi^{\mathit{len}}_{\xtoax}(x, x') \defiff{}& x \geq 1 \land x' = x -1, \text{ and} \\
    \varphi^{\mathit{len}}_{\xtoepsilon}(x) \defiff {}& x = 0.
  \end{split}
\end{align}

From \cref{prop:presb-as-mso} and \cref{prop:msostr-power}, it follows that the
relations denoted by the formulae are regular.
We will denote the transducers encoding those relations as
$T^{\mathit{len}}_{\xtoyx}$, $T^{\mathit{len}}_{\xtoax}$, and
$T^{\mathit{len}}_{\xtoepsilon}$ respectively.

Let us now focus on how to adjust the initial and destination sets for an
equation with a~length constraint~$\varphi_{i}(\vars)$ with free variables~$\vars$.
The initial set is extended by all encoded models of~$\varphi_{i}$.
Formally, the part of the initial set related
to the length constraint is given as $\I_{\varphi_{i}} = \langof{\varphi_{i}}$
(which is a~subset of~$\big(\bbB^{|\vars|}\big)^{\!*}$) and the part of the
destination set as $\D_{\mathit{len}} = \big(\bbB^{|\vars|}\big)^{\!*}$.

Satisfiability of a~quadratic equation $\weq\colon \lhs = \rhs$ with the length
constraint~$\varphi_{i}$ can then be expressed as the RMC problem
$(\I_{\varphi_{i}}^{\len}, \T^{\len}_{\xtoalphax}\cup \T^{\len}_{\xtoepsilon},
\D_{\varphi_{i}}^{\len})$ instantiating \cref{alg:rmc} with items given
as follows (note the use of a~fresh delimiter~$\lensep$ for length constraints):
\begin{align*}
  \I_{\varphi_{i}}^{\len} ={} & \I^{\weq}\concat\{ \lensep \}\concat\I_{\varphi_{i}} \quad\quad\quad\quad
		\D_{\varphi_{i}}^{\len} = \D^{\weq} \concat \{ \lensep \} \concat \big(\bbB^{|\vars|}\big)^{\!*}\\
  \T^{\len}_{\xtoalphax} ={} & \hspace*{-3.5mm}\bigcup_{y\in\vars, z\in\vars}\hspace*{-3.5mm}
  T_{\ytozy}^{\leq 2} \concat \{ \lensep \mapsto\lensep \}
  \concat T^{\mathit{len}}_{\ytozy} \cup {}\\
	&\hspace*{-3.5mm}\bigcup_{y\in\vars, a\in\Sigma}\hspace*{-3.5mm}
  T_{\ytoay}^{\leq 2} \concat \{ \lensep \mapsto\lensep \} \concat T^{\mathit{len}}_{\ytoay} \\
  \T^{\len}_{\xtoepsilon} ={} & \bigcup_{y\in\vars} T_{\ytoepsilon}^{\leq 2} \concat \{ \lensep \mapsto\lensep \} \concat T^{\mathit{len}}_{\ytoepsilon}
\end{align*}
Extensions to a~system of equations and (more generally) a Boolean combination of constraints can
be done in the same manner as in \cref{section:conjunction,section:full}.

Rationality of $\T^{\len}_{\xtoalphax}$ and $\T^{\len}_{\xtoepsilon}$
follows directly from \cref{prop:rat-prop}.
The soundness and completeness of our algorithm is summarized by \cref{lem:len-sound}.
Termination is an open problem even for quadratic
equations (cf.~\cite{buchi1990definability}).
\begin{lemma}\label{lem:len-sound}
  Given a quadratic word equation $\weq\colon \lhs = \rhs$ with the length
  constraint~$\varphi_{i}$, \cref{alg:rmc} instantiated with
  $(\I_{\varphi_{i}}^{\len}, \T^{\len}_{\xtoalphax}\cup \T^{\len}_{\xtoepsilon},
  \D_{\varphi_{i}}^{\len})$ is sound and complete.
\end{lemma}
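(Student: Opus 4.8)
The plan is to reduce this statement to the already-established correctness of the word-equation machinery (the quadratic-case lemma for $(\I^{\weq},\T^{\weq}_{\xtoalphax}\cup\T^{\weq}_{\xtoepsilon},\D^{\weq})$) by arguing that the length component runs in lockstep with the equation component and faithfully tracks the lengths of the variables along every Nielsen-transformation step. The key observation is that the configurations of the RMC problem are now of the form $u \concat \{\lensep\} \concat v$, where $u$ is an encoding of (a set of) word equations over $\msoalphpad^2$ and $v$ is an encoding of an assignment $\sigma\colon\vars\to\nat$ in the $\lsbf$ format; the transducers $\T^{\len}_{\xtoalphax}$ and $\T^{\len}_{\xtoepsilon}$ are concatenations of the equation transducer, the identity on the separator $\lensep$, and the length transducer $T^{\mathit{len}}_{\cdot}$. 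Because of this syntactic separation and \cref{prop:rat-prop}, the two halves never interfere, so a~single step of $\T^{\len}$ simultaneously performs the Nielsen step on the equation and applies the corresponding length update $\varphi^{\mathit{len}}_{\cdot}$ to the tracked lengths.

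First I would set up a~correspondence between runs of the Nielsen transformation on $\weq$ and paths in the RMC reachability search, exactly as in the proof of the quadratic word-equation lemma: the initial node $\lhs = \rhs$ together with a~candidate length assignment corresponds to a~string in $\I^{\len}_{\varphi_i}$, and a~final node $\emptyword = \emptyword$ paired with any valid length encoding corresponds to a~string in $\D^{\len}_{\varphi_i}$. Second, I would establish the central invariant: if a~configuration reachable after $k$ steps has equation-part encoding a~system $\weq^{(k)}$ and length-part encoding an assignment $\sigma^{(k)}$, then $\sigma^{(k)}$ records exactly the lengths that the variables must have in any solution of the original $\weq$ consistent with the partial substitutions applied so far. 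This is where the definitions of $\varphi^{\mathit{len}}_{\xtoyx}$, $\varphi^{\mathit{len}}_{\xtoax}$, and $\varphi^{\mathit{len}}_{\xtoepsilon}$ do the work: the substitution $x\mapsto yx$ decreases $|x|$ by $|y|$ (guarded by $x\geq y$), $x\mapsto ax$ decreases $|x|$ by one (guarded by $x\geq 1$), and $x\mapsto\emptyword$ forces $|x|=0$—precisely mirroring how the Nielsen step changes the length of a~solution of the equation. I would verify this invariant by induction on $k$, the inductive step being an immediate consequence of the fact that $T^{\mathit{len}}_{\rho}$ realizes $\varphi^{\mathit{len}}_{\rho}$ and that the corresponding equation transducer $T^{\leq 2}_{\rho}$ realizes the Nielsen rule (from the earlier lemma on the $T^{\leq n}$ transducers).

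For \textbf{soundness}, I would argue that whenever the algorithm reaches $\D^{\len}_{\varphi_i}$, the backward run of $\extractmodel$ reconstructs both a~word-equation model and a~consistent chain of length updates; the length guards ($x\geq y$, $x\geq 1$, $x=0$) ensure that at every backward step the reconstructed length is non-negative and compatible with the constraint, so the extracted assignment simultaneously satisfies $\weq$ and $\varphi_i$. For \textbf{completeness}, I would take a~model $I$ of $\weq\land\varphi_i$, follow the finite Nielsen proof path that $I$ induces (which exists and is finite by \cref{lem:Nielsen} since $\weq$ is quadratic), and check that the matching length encodings form a~valid $\T^{\len}$-path from $\I^{\len}_{\varphi_i}$ to $\D^{\len}_{\varphi_i}$; since \cref{alg:rmc} performs a~BFS and the path has finite length, the destination is reached. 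The main obstacle I anticipate is the careful bookkeeping in the invariant: I must be sure that the guards in the length formulae exactly match the enabling conditions of the Nielsen rules (so that no spurious length-only step is taken when the equation step is disabled, and vice versa), and that the $\lsbf$ encoding of intermediate length values is never truncated by the shared padding/saturation mechanism. Establishing that the $\lensep$-separated tracks evolve independently yet remain synchronized—so that the product reachability set is precisely the intersection of the equation reachability set and the admissible length tuples—is the technically delicate point; once it is in place, soundness, completeness, and termination follow directly from the quadratic word-equation lemma together with the regularity of all the involved relations guaranteed by \cref{prop:rat-prop,prop:presb-as-mso}.
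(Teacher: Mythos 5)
Your argument for soundness and completeness follows essentially the same route as the paper's proof: you generalize the Nielsen proof graph to nodes carrying both the current equation and a~length assignment, you argue that $\T^{\len}_{\xtoalphax}$ and $\T^{\len}_{\xtoepsilon}$ faithfully mirror the effect of each Nielsen step on lengths (via the guards $x \geq y$, $x \geq 1$, $x = 0$), and you obtain completeness by following the finite path induced by a~model of $\weq \land \varphi_{i}$ together with the BFS strategy of \cref{alg:rmc}. Up to the level of detail (your explicit induction on the number of steps is just an unfolding of what the paper states directly), this is the paper's argument.

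However, your closing claim that \emph{termination} also ``follows directly from the quadratic word-equation lemma'' is wrong, and the error is not cosmetic. The finiteness guaranteed by \cref{lem:Nielsen} for quadratic systems concerns the equation component only; once length assignments are adjoined, the generalized proof graph is infinite, because the initial set $\I_{\varphi_{i}}^{\len}$ contains one configuration for \emph{every} model of~$\varphi_{i}$ (an infinite, though regular, set), and the sequence of reachable length components need never saturate. This is precisely why \cref{lem:len-sound} claims only soundness and completeness: the paper's proof explicitly notes that the generalized graph ``is not necessarily finite even for quadratic equations,'' and the surrounding text states that termination is an open problem even in the quadratic case (cf.~\cite{buchi1990definability}); a~proof of termination here would resolve that long-standing question. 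The misstep originates in your picture of the reachability set as ``precisely the intersection of the equation reachability set and the admissible length tuples'': because the two tracks are synchronized, reachability of a~pair is not a~product condition, and finiteness of the equation-side reachability set places no bound on the number of distinct reachable length parts. If you drop the termination claim, the remainder of your argument stands and coincides with the paper's proof.
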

\begin{proof}
  We can generalize nodes of the Nielsen proof graph to pairs of the
	form $(\lhs' = \rhs', f)$ where~$f$ is a mapping assigning lengths to
	variables from~$\vars$. The transformation rules can be straightforwardly
	generalized to take into account also the lengths. The initial nodes are pairs
	$(\lhs = \rhs, f)$ where $f$ is a~model of~$\varphi_{i}$. The final nodes are nodes
	$(\epsilon = \epsilon, g)$ where $g$ is arbitrary. Note that the
	generalized graph is not necessarily finite even for quadratic equations.
	Nevertheless, if the equation is satisfiable then there is a finite path from an
	initial node to a final node.

	Directly from the definition of $\I_{\varphi_{i}}^{\len}$ we have that the
	initial nodes of the generalized proof graph are encoded strings from
	$\I_{\varphi_{i}}^{\len}$ and the final nodes correspond to
	$\D_{\varphi_{i}}^{\len}$. We can also see that the transformation rules
	correspond to the encoded relations $\T^{\len}_{\xtoalphax}$ and
	$\T^{\len}_{\xtoepsilon}$. Since the search in \cref{alg:rmc}
	implements a BFS strategy, we get that our (semi-)algorithm is sound and complete in proving
	satisfiability.
\end{proof}

For the general (non-quadratic) case and the case of a~Boolean combination of
constraints, we can obtain a~sound and complete (though non-terminating)
procedure by using the transducers $\T^{i, \seq\star}$ in the same
way as in \cref{sec:gen-case} and modifying the proof of \cref{lem:len-sound}
accordingly.

\vspace{-0.0mm}
\subsection{Regular Constraints}
\label{sec:regular-constraints}
\vspace{-0.0mm}
%

Our second extension of the framework is the support of regular constraints as a
conjunction of atoms of the form $x\in\langof{\A}$ for an FA~$\A$ over~$\Sigma$ (note that
the negation of an atom $x\notin\langof{\A}$ can be converted to the positive
atom $x\in\mathcal{L}(\A^\complement)$ where $\A^\complement$ is a~complement of the FA~$\A$).
In particular, we assume that regular
constraints are represented by a~conjunction~$\varphi_r$ of~$\ell$ atoms of the
form
\begin{equation}
	\varphi_r \defiff \bigwedge_{i=1}^\ell (x_{i}\in \langof{\aut_{i}}),
\end{equation}
where~$\aut_{i}$ is an FA for each $1\leq i \leq \ell$. Without loss of
generality, we assume that the automata occurring in~$\varphi_r$ have pairwise
disjoint sets of states and, further, we use $\A_r = (Q_r,\Sigma,\delta_r,I_r,F_r)$ to
denote the automaton constructed as the disjoint union of all automata
occurring in formula~$\varphi_r$. 
Note that the disjoint union of two FAs 
$\aut_1 = (Q_1, \Sigma, \Delta_1, Q_i^1, Q_f^1) $ and 
$\aut_1 = (Q_2, \Sigma, \Delta_2, Q_i^2, Q_f^2)$ is the FA 
$\aut_1 \uplus \aut_2 = (Q_1 \uplus Q_2, \Sigma, \Delta_1 \uplus \Delta_2, Q_i^1 \uplus Q_i^2, Q_f^1 \uplus Q_f^2)$.

The approach we developed here is inspired by the approach in Norn (cf.\
\cite{abdulla2015norn}), but the idea needed to be significantly modified to
fit in our (more proof-based) framework.
In particular, we encode regular constraints as words over symbols of the form $\regsym{x,p,q}$
where $x\in\vars$ and $p,q \in Q_r$. We denote the set of all such symbols
as~$\allsymreg$.
Moreover, we treat the words as sets of symbols and hence we
assume a fixed linear order~$\preccurlyeq$ over symbols to allow a~unique
representation. In particular, for a word $w\in\allsymreg^{*}$ we use
$w_\preccurlyeq$ to denote the string containing symbols sorted by
$\preccurlyeq$ with no repetitions of symbols.
A single atom $x\in\langof{\A_{i}}$ for $\aut = (Q_{i}, \Sigma, \delta_{i}, I_{i},
F_{i})$ can be encoded as a set of words
$\encode(x\in\langof{\A_{i}}) = \{ \regsym{x,p,q} \mid p\in I_{i}, q \in
F_{i} \}$. 
The set represents all possible combinations of initial and final states in~$\A_{i}$. 
The initial set $\I_{\varphi_r}$ is then defined as
\begin{equation}
	\I_{\varphi_r} = \{ w_\preccurlyeq \in \allsymreg^{*} \mid w\in\encode(x_1\in \langof{\aut_1})\dots\encode(x_\ell\in \langof{\aut_\ell})\}.
\end{equation}
Note that~$\I_{\varphi_r}$ is finite for a~finite~$\vars$, therefore it is a
regular language.

Let us now describe the effect of the Nielsen transformation on the regular
constraint part.
Consider a~word~$w$ encoding a~set of symbols from~$\allsymreg$.
Then, the transformation $\xtoyx$ for $x,y\in\vars$ applied to~$w$
produces words $w'$ encoding sets where each occurrence of a~symbol
$\regsym{x,p,q}$ is replaced with all possible pairs of symbols~$\regsym{y,p,s}$
and~$\regsym{x,s,q}$ where $p\leadsto s$ and $s\leadsto q$ in~$\A_r$ (we use $q
\leadsto q'$ to denote that there is a~path from~$q$ to~$q'$ in the transition
diagram of~$\A_r$).
Similarly,
the transformation $\xtoax$ for $x\in\vars, a \in\Sigma$ applied to $w$ produces
words $w'$ encoding sets where each occurrence of a~symbol $\regsym{x,p,q}$ is
replaced with all possible symbols $\regsym{x,r,q}$ where $p\ltr{a} r$ in
$\A_r$.
Finally, by the transformation $\xtoepsilon$ we obtain a string $w' = w$
only if all symbols of $w$ related to the variable~$x$ are of the form
$\regsym{x,q,q}$ for $q\in Q$.
Formally, we first define the~function expanding a~single symbol for variables
$x$ and $y$ as
\begin{equation}
\expfuncof x y (\sigma) =
\begin{cases}
  \{\,\regsym{y,p,r}{.}\regsym{x,r,q} \mid p\runto r\runto q \text{ in }\A_r \,\} & \text{if } \sigma= \regsym{x,p,q}, \\
  \{\sigma\} & \text{otherwise.}
\end{cases}
\end{equation}
Similarly, we define the expansion
\begin{equation}
  \expfuncof x a (\sigma) =
  \begin{cases}
    \{\,\regsym{x,r,q} \mid p\ltr{a} r\runto q \text{ in }\A_r \,\} & \text{if } \sigma= \regsym{x,p,q}, \\
    \{\sigma\} & \text{otherwise.}
  \end{cases}
\end{equation}
Then, the transformations
$\xtoyx$, $\xtoax$, and $\xtoepsilon$ can be described by the following relations:
\begin{align}
  \begin{split}
    T_{\xtoyx}^{\mathit{reg}} & = \{\,(w,u_\preccurlyeq) \mid
    	u\in\expfuncof x y(w[1])\dots\expfuncof x y (w[|w|])\,\},\\
		T_{\xtoax}^{\mathit{reg}} & = \{\,(w,u_\preccurlyeq) \mid
	    u\in\expfuncof x a(w[1])\dots\expfuncof x a (w[|w|])\,\}, \text{ and}\\
    T_{\xtoepsilon}^{\mathit{reg}} & = \Big\{\,(w,w) \mid
    \forall 1\leq i \leq |w|\colon \forall p,q \in Q\colon w[i] = \regsym{x,p,q} \Rightarrow p=q \,\Big\} .
  \end{split}
\end{align}

\begin{example}
  Consider the regular constraint $x \in \langof{\aut_x}$ with $\aut_x$
  given below.
  \begin{center}
    \begin{tikzpicture}
  [node distance=2cm,->,>=stealth,transform shape, scale=0.8]

  \tikzstyle{empty}=[]

  \node[state,initial,accepting,circle] (q1) {1};
  \node[state,right of=q1,circle] (q2) {2};

  \draw (q1) edge[bend right] node[below] {$a$} (q2);
  \draw (q2) edge[bend right] node[above] {$b$} (q1);

\end{tikzpicture}

  \end{center}
  Then, the corresponding values will be as follows:
  \begin{align*}
    \encode(x\in\langof{\A_x}) = {}& \{ \regsym{x, 1, 1} \} \\
    \expfuncof x y (\regsym{x,1,1}) = {}& \{
      \regsym{y, 1, 1}.\regsym{x, 1, 1},~
      \regsym{y, 1, 2}.\regsym{x, 2, 1} \}\\
    \expfuncof x a (\regsym{x,1,1}) = {}& \{\regsym{x, 2, 1}\}
  \end{align*}
  Moreover, 
    $T_{\xtoyx}^{\mathit{reg}}$ will, e.g., contain pairs
    $(\regsym{x, 1, 1}.\regsym{y, 1, 2}, u_\preccurlyeq)$ with
    \[
    u_\preccurlyeq \in \left\{
        \regsym{x, 1, 1}.\regsym{y, 1, 1}.\regsym{y, 1, 2},~
        \regsym{x, 2, 1}.\regsym{y, 1, 2}
    \right\}
    \]
    (we assume that~$\preccurlyeq$ is a~lexicographic ordering on the
    components).
    Note that symbols in the words are sorted by~$\preccurlyeq$ with duplicates
    removed.
    Similarly, $T_{\xtoax}^{\mathit{reg}}$ will, e.g., contain the pair
    $(\regsym{x, 1, 1}.\regsym{y, 1, 2},~\regsym{x, 2, 1}.\regsym{y, 1, 2})$,
    and $T_{\xtoepsilon}^{\mathit{reg}}$ will contain
   $(\regsym{x, 1, 1}.\regsym{y, 1, 2},~\regsym{x, 1, 1}.\regsym{y, 1, 2})$.
\qed
\end{example}

The following lemma shows that the transformations are rational. In the proof, we
first construct \msostr formulae realizing necessary set operations on strings
and the effect of the expanding function. Based on them, we construct formulae
realizing the transformations, by means of the relation $\padfnc_{\pad}(T)$ 
appending to~$w$ and~$w'$ with $(w,w') \in T$ an arbitrary number of symbols $\pad$ 
(cf.~\cref{ssec:symbolicEncodingQuadratictoRMC}).
\begin{lemma}
	The relations $\padfnc_{\pad}(T_{\xtoyx}^{\mathit{reg}})$,
	$\padfnc_{\pad}(T_{\xtoax}^{\mathit{reg}})$, and
	$\padfnc_{\pad}(T_{\xtoepsilon}^{\mathit{reg}})$ are rational.
\end{lemma}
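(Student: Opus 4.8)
The plan is to exploit the finiteness of the alphabet $\allsymreg$ (which is finite because $\vars$ and $Q_r$ are finite) and to reduce the claim to the closure properties of \cref{prop:rat-prop}. The crucial observation is that every sorted, duplicate-free word $w_\preccurlyeq$ has length at most $|\allsymreg|$, so the range of each of the three relations is a \emph{finite} set of words, and the relevant data in a configuration is \emph{which} symbols occur, not their multiplicity or position. I would therefore first show that the unpadded relations $T_{\xtoyx}^{\reg}$, $T_{\xtoax}^{\reg}$, and $T_{\xtoepsilon}^{\reg}$ are rational, and then recover the padded versions by a single concatenation step.

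For the first part I would argue one output at a time. Fix one of the finitely many possible outputs $v = u_\preccurlyeq$ and write $S_v$ for its (finite) set of symbols. The preimage $\{\,w \mid (w,v)\in T_{\xtoyx}^{\reg}\,\}$ is accepted by a nondeterministic automaton that scans $w$ left to right, for each symbol $\regsym{x,p,q}$ guesses an intermediate state $r$ with $p\runto r\runto q$ in $\A_r$ (the reachability relation $\runto$ of the fixed automaton $\A_r$ is precomputed and enters the construction as constant data), immediately rejects if any produced symbol lies outside $S_v$, and otherwise records in its control state the subset of $S_v$ produced so far; it accepts iff that subset equals $S_v$. This automaton has at most $2^{|S_v|}$ states, so the preimage is regular. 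Consequently $T_{\xtoyx}^{\reg} = \bigcup_v \big(\{\,w \mid (w,v)\in T_{\xtoyx}^{\reg}\,\}\times\{v\}\big)$ is a finite union of products of a regular language with a singleton, each of which is rational, and hence $T_{\xtoyx}^{\reg}$ is rational by closure under finite union (\cref{prop:rat-prop}). The relation $T_{\xtoax}^{\reg}$ is handled identically using $\expfuncof{x}{a}$ in place of $\expfuncof{x}{y}$, and $T_{\xtoepsilon}^{\reg}$ is simply the identity relation restricted to the regular set of words in which every $x$-labelled symbol has the form $\regsym{x,q,q}$, which is trivially rational.

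The step I expect to be the main obstacle is getting the semantics of the nondeterministic expansion exactly right, namely that a \emph{single, globally consistent} choice of intermediate states must reproduce the output set precisely, neither more nor less. A naive specification that merely requires ``every symbol of $w$ expands into $S_v$'' together with ``every symbol of $S_v$ can be produced by some symbol of $w$'' over-approximates the relation, because the two requirements may demand incompatible choices at the same position. The automaton above resolves this cleanly: a single accepting run fixes one choice per occurrence, the per-symbol rejection enforces containment of the produced set in $S_v$, and the coverage recorded in the control state enforces the reverse inclusion along that very run. Equivalently, when inputs are the sorted, duplicate-free configurations actually manipulated by the framework, each $x$-symbol occurs once, so the choice amounts to picking one intermediate state per occurring $x$-symbol; as there are at most $|Q_r|^2$ such symbols with at most $|Q_r|$ choices each, one may instead write a \emph{finite} disjunction of \msostr membership conditions, whence \cref{prop:msostr-power} gives a regular, hence rational, relation.

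Finally, the padding is purely syntactic: by definition $\padfnc_{\pad}(T) = T \concat P$, where $P = \{\,(\pad^i,\pad^j) \mid i,j\geq 0\,\}$ appends independently many $\pad$'s to the two tracks. The relation $P$ is rational, being the concatenation of the iterations $\{(\pad,\emptyword)\}^{*}$ and $\{(\emptyword,\pad)\}^{*}$, and rational relations are closed under concatenation (\cref{prop:rat-prop}). Therefore $\padfnc_{\pad}(T_{\xtoyx}^{\reg})$, $\padfnc_{\pad}(T_{\xtoax}^{\reg})$, and $\padfnc_{\pad}(T_{\xtoepsilon}^{\reg})$ are all rational, as claimed.
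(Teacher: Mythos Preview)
Your proof is correct but takes a genuinely different route from the paper. The paper expresses each relation by an \msostr formula built from auxiliary predicates ($\filter_x$, $\mathit{expand}_x^y$, $\union$, $\ordset$, \ldots), invokes \cref{prop:msostr-power} to obtain a length-preserving transducer, and then composes with explicit rational padding/unpadding relations $\transductof{+\padfnc}$, $\transductof{-\padfnc}$ to reconcile the length-preservation of \msostr with the variable-length padding. You instead exploit the finiteness of $\allsymreg$ directly: since every output $u_\preccurlyeq$ is a sorted duplicate-free word over a finite alphabet, there are only finitely many possible outputs, and for each fixed one you exhibit an explicit NFA for its preimage (tracking, in a subset state, which output symbols have been produced by the nondeterministically chosen expansion). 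Your approach is more elementary---it avoids the \msostr machinery entirely and yields concrete state bounds---and your treatment of padding via a single concatenation $T\concat P$ is cleaner than the paper's composition chain. The paper's approach, on the other hand, stays uniform with the \msostr encodings used throughout the rest of the development and gives reusable predicates. Your discussion of the ``main obstacle'' is apt: the single accepting run of your NFA is precisely what enforces a globally consistent choice of intermediate states, which is what the paper's $\mathit{expand}_x^y$ predicate accomplishes via its three-way conjunction.
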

\begin{proof}
	In this proof, we extend the total order $\preccurlyeq$ on $\allsymreg$ to a
	total order on~$\allsymreg \cup \{\pad\}$ such that $\forall \sigma\in\allsymreg\colon
	\sigma\preccurlyeq\pad$.
  (We note that encoding $\preccurlyeq$ in \msostr{} is simple.)
  We define the relations $T_{\xtoyx}^{\mathit{reg}}$
	and $T_{\xtoepsilon}^{\mathit{reg}}$ using \msostr{}. The relation
	$T_{\xtoax}^{\mathit{reg}}$ can be defined analogously to
	$T_{\xtoyx}^{\mathit{reg}}$.
	\begin{align}
    \begin{split}
		    \psireg_{\xtoyx}(w,w') \defiff&~\existsw u_{1},u_{2},u_3\Big(\filter_x(u_{1},u_{2},w) \land \mathit{expand}_x^y(u_{1},u_3) \land {}\\
			     &\hspace*{2.15cm}\union(u_{2},u_3,w') \wedge \ordset(w')\Big)
    \end{split}
    \\
		\psireg_{\xtoepsilon}(w,w') \defiff&~\forallp i\Big(w[i] = w'[i] \wedge
			\bigvee_{\substack{\xi\in \allsymregx \cup\\
			\{ (x,q,q) \mid q \in Q_r \} }} \hspace*{-5mm}w'[i]=\xi\Big)
	\end{align}
	where $\filter_x(u,v,w)$ partitions the symbols of $w$ to $u$ and $v$ such that $u$
	contains symbols that are of the form 
	$\regsym{x,-,-}$, i.e., of the form $\regsym{x,q,s}$ for arbitrary $q$ and $s$,
	and $v$ contains the remaining ones; 
	$\mathit{expand}_x^y(u,v)$ replaces each symbol
	$\regsym{x,p,q}$ in $u$ with $\regsym{y,p,s}$ and $\regsym{x,s,q}$ in~$v$; 
	and
	$\union$ is a~set-like union.
  These predicates (including auxiliary predicates) are defined as follows:
	\begin{align}
		\sigma \in w \defiff&~\existsp i(w[i] = \sigma)\\
		\set(u) \defiff& \neg\existsp i,j(i\neq j \land  u[i] = u[j])\\
		\ordset(u) \defiff&~\set(u) \wedge \forallp i,j(i < j \rightarrow u[i] \preccurlyeq u[j])
	\end{align}
	\begin{align}
    \begin{split}
		   \filter_x(u,v,w) \defiff&~\forallp i\Bigg(\bigwedge_{q,s\in Q_r}\hspace*{-3mm}(w[i]=\regsym{x,q,s} \rightarrow (u[i] = \regsym{x,q,s} \land v[i] = \pad)) \\&
			   \wedge \bigwedge_{\substack{z\in\vars \setminus \{x\}\\ q,s\in Q_r}}\hspace*{-3mm}(w[i]=\regsym{z,q,s} \rightarrow (u[i] = \pad \land v[i] = \regsym{z,q,s}))\Bigg)
    \end{split}
	\end{align}
	\begin{align}
    \begin{split}
		\mathit{expand}_x^y(u,v) \defiff& \bigwedge_{\substack{s,q\in Q_r, s\leadsto q\\\xi'=\regsym{x,s,q}}}
			\Big( \xi' \in v \rightarrow \bigvee_{\substack{p\in Q_r, p\leadsto s\\ \xi = \regsym{x,p,q}\\
			\xi''=\regsym{y,p,s}}} \xi \in u \wedge \xi''\in v \Big) \land {}\\
			&\bigwedge_{\substack{p,s\in Q_r,p\leadsto s \\\xi''=\regsym{y,p,s}}}
				\Big( \xi'' \in v \rightarrow \bigvee_{\substack{q\in Q_r, s\leadsto q \\\xi = \regsym{x,p,q}\\
				\xi'=\regsym{x,s,q}}} \xi \in u \wedge \xi'\in v \Big) \land {}\\
			&\bigwedge_{\substack{p,q\in Q_r\\\xi = \regsym{x,p,q}}}
				\Big( \xi \in u \rightarrow \bigvee_{\substack{s\in Q_r, p\leadsto s\leadsto q\\\xi''=\regsym{y,p,s}\\
				\xi'=\regsym{x,s,q}}} \xi'' \in u \wedge \xi'\in v \Big)
    \end{split}
    \\
		\union(u,v,w) \defiff& \bigwedge_{\xi \in \allsymreg}
			\xi\in w \leftrightarrow (\xi\in u \vee \xi\in v)
	\end{align}
  Intuitively, in the definition of $\filter_x(u,v,w)$, the first part picks
  from~$w$ symbols containing~$x$ and adds them into~$u$ and the second part
  picks from~$w$ the other symbols and adds them into~$v$.
  On the other hand, in the definition of $\mathit{expand}_x^y(u,v)$, we
  use~$s$ to denote the \emph{splitting} state on the path from state~$p$ to
  state~$q$.
  Then, the first and the second parts of the formula denote that 
	$\regsym{y,p,s}$ and $\regsym{x,s,q}$ are in~$v$ while the last part denotes that
  $\regsym{x,p,q}$ is in~$u$.

	We further consider the relations $\transductof{+\padfnc} = \{ (w,w') \mid
	w\in(\allsymreg\cup\{\pad\})^{*}, w' \in w.\{\pad\}^{*} \}$ and
	$\transductof{-\padfnc} = \{ (w,w') \mid w'\in(\allsymreg\cup\{\pad\})^{*}, w
	\in w'.\{\pad\}^{*} \}$ appending and removing padding, respectively. These
	relations are rational. Then, observe that
	$\padfnc_{\pad}(T_{\xtoyx}^{\mathit{reg}}) =
	\transductof{+\padfnc}\circ\transductof{-\padfnc}\circ
	\langof{\psireg_{\xtoyx}}\circ \transductof{+\padfnc}$. 
	Recall the relation $\padfnc_{\pad}(T)$ 
	appends to $w'$ in $(w,w') \in T$ an arbitrary number of symbols $\pad$ 
	(cf. \cref{ssec:symbolicEncodingQuadratictoRMC}).
	From
	\cref{prop:rat-prop,prop:msostr-power}, we have that
	$\padfnc_{\pad}(T_{\xtoyx}^{\mathit{reg}})$ is rational (the same for
	$\padfnc_{\pad}(T_{\xtoepsilon}^{\mathit{reg}})$).
\end{proof}

The last missing piece is a definition of the destination set containing all
satisfiable regular constraints. For a~variable $x\in\vars$, we define the set
of satisfiable $x$-constraints as $L^x=\{\, w_\preccurlyeq \mid w =
\regsym{x,q_1,r_1}\cdots\regsym{x,q_n,r_n}\in\allsymreg^{*}, \bigcap_{i=1}^{n}
\langinof{\A_r}{q_{i}, r_{i}} \neq\emptyset \,\}$. Then, the destination set for
a~set of variables $\vars = \{ x_1,\dots,x_k \}$ is given as $\D_{\mathit{reg}}
= \{\, w_\preccurlyeq \mid w\in L^{x_1}\cdots L^{x_k}\,\}$. As in the case of
$\I_{\varphi_r}$, the set $\D_{\mathit{reg}}$ is finite and hence regular as well.

Satisfiability of a~quadratic word equation $\weq\colon \lhs = \rhs$ with a~regular
constraint~$\varphi_r$ can be expressed in the RMC framework as
$(\I_{\varphi_r}^{\reg},\T^{\reg}_{\xtoalphax}\cup\T^{\reg}_{\xtoepsilon},\D_{\varphi_r}^{\reg})$
instantiating \cref{alg:rmc} with items given 
as follows (note that we use a~fresh delimiter~$\regsep$):
\begin{align*}
	\I_{\varphi_r}^{\reg} &= \I^{\weq}\concat\{ \regsep \}\concat\padfnc_{\pad}(\I_{\varphi_r})\quad\quad\quad\quad\D_{\varphi_r}^{\reg} = \D^{\weq} \concat \{ \regsep \} \concat \padfnc_{\pad}(\D_{\mathit{reg}})\\
	\T^{\reg}_{\xtoalphax} &= \hspace*{-3.5mm}\bigcup_{y\in\vars, z \in\vars}\hspace*{-3.5mm}
  T_{\ytozy}^{\leq 2} \concat \{ \regsep \mapsto\regsep
  \} \concat \padfnc_{\pad}(T^{\mathit{reg}}_{\ytozy}) \cup{}\\
	& \bigcup_{y\in\vars, a \in\Sigma}\hspace*{-3.5mm} T_{\ytoay}^{\leq 2} \concat \{ \regsep \mapsto\regsep
  \} \concat \padfnc_{\pad}(T^{\mathit{reg}}_{\ytoay}) \\
	\T^{\reg}_{\xtoepsilon} &= \bigcup_{y\in\vars} T_{\ytoepsilon}^{\leq 2} \concat \{ \regsep \mapsto\regsep \} \concat \padfnc_{\pad}(T^{\mathit{reg}}_{\ytoepsilon})
\end{align*}
The rationality of $\T^{\reg}_{\xtoalphax}$ and $\T^{\reg}_{\xtoepsilon}$
follows directly from \cref{prop:rat-prop}. The soundness and
completeness of our procedure is summarized by the following lemma.
\begin{lemma}\label{lem:reg-sound}
	Given a quadratic word equation $\weq\colon \lhs = \rhs$ with a~regular constraint
	$\varphi_r$, \cref{alg:rmc} instantiated with
	$(\I_{\varphi_r}^{\reg},\T^{\reg}_{\xtoalphax}\cup\T^{\reg}_{\xtoepsilon},\D_{\varphi_r}^{\reg})$
	is sound, complete, and terminating.
\end{lemma}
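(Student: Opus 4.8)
The plan is to follow the same template used for the single quadratic equation and for \cref{lem:len-sound}, generalizing the Nielsen proof graph so that each node carries, in addition to a word equation, a \emph{regular state} recording for every variable the set of state pairs of $\A_r$ between which that variable must realize a path. First I would make a node a pair $(\weq', R)$ where $\weq'$ is a word equation and $R$ associates to each variable a set of pairs in $Q_r \times Q_r$, with the initial node being $(\weq, R_0)$, where $R_0$ reflects the atoms $x_i \in \langof{\aut_i}$ through the initial and final states of each $\aut_i$. The rules $\xtoyx$, $\xtoax$, and $\xtoepsilon$ act on both components simultaneously, updating $R$ exactly as prescribed by $\expfuncof x y$, $\expfuncof x a$, and the collapsing condition $p = q$, respectively. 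A node is \emph{accepting} when its equation is the tautology $\epsilon = \epsilon$ and, for every variable, the intersection $\bigcap \langinof{\A_r}{p,q}$ over its required state pairs is nonempty.

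Next I would establish the correspondence between this generalized graph and the RMC instance. Encoding a node as $\encodeof{\weq'} \concat \{\regsep\} \concat w_R$, with $w_R$ the sorted, duplicate-free listing of the $\regsym{x,p,q}$ symbols, sends the initial node into $\I_{\varphi_r}^{\reg}$ and the accepting nodes exactly into $\D_{\varphi_r}^{\reg}$, the latter by the definitions of $L^x$ and $\D_{\mathit{reg}}$. Crucially, the relations $\T^{\reg}_{\xtoalphax}$ and $\T^{\reg}_{\xtoepsilon}$ apply the \emph{same} variable rule to the equation track (via $T^{\leq 2}$) and to the regular track (via $\padfnc_{\pad}(T^{\mathit{reg}})$), synchronized by the delimiter $\regsep$, so one RMC step corresponds to one edge of the generalized graph. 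Soundness and completeness then follow exactly as in \cref{lem:len-sound}: since \cref{alg:rmc} performs BFS, any accepting node reachable in the generalized graph is found, and conversely any RMC path back-projects to a genuine Nielsen derivation that both solves $\weq$ and, by the invariant maintained by the $\expfunc$ functions together with the residual satisfiability guaranteed by $\D_{\mathit{reg}}$, yields an assignment satisfying $\varphi_r$.

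The genuinely new ingredient, and the step I expect to be the main obstacle, is termination, which here holds even though it fails for the length constraints of \cref{lem:len-sound}. The argument I would give combines two finiteness sources. The equation component ranges over the Nielsen proof graph of a \emph{quadratic} equation, which is finite by \cref{lem:Nielsen}. For the regular component I must show that, despite the $\xtoyx$ expansion splitting one symbol into two, the reachable regular states stay within a fixed finite set: because the quadratic transformations reuse variables and introduce no fresh ones, the alphabet $\allsymreg$ is fixed and finite, and because every configuration is canonicalized as a set via the $\preccurlyeq$-normalization (discarding repetitions), each reachable regular word is a repetition-free word over $\allsymreg$, hence of length at most $|\allsymreg| = |\vars| \cdot |Q_r|^2$. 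Only finitely many such words exist, so the product state space explored by \cref{alg:rmc} is finite; the monotone sequence $\reachset_0, \reachset_1, \dots$ therefore stabilizes and the loop condition $\reachset_i \subseteq \procset$ is eventually met, giving termination. The subtle point to get right is verifying that canonicalization genuinely bounds the word length (so expansion cannot cause unbounded growth) and that the $\preccurlyeq$-normalization is compatible with the transducer composition defining $\T^{\reg}$.
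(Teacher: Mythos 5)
Your proposal is correct and follows essentially the same route as the paper's proof: generalize the Nielsen proof graph nodes to pairs carrying the equation together with the regular-constraint information (your per-variable map of state pairs is just a repackaging of the paper's subset $S \subseteq \allsymreg$), identify initial/final nodes and transformation rules with $\I_{\varphi_r}^{\reg}$, $\D_{\varphi_r}^{\reg}$, $\T^{\reg}_{\xtoalphax}$, $\T^{\reg}_{\xtoepsilon}$, and conclude soundness, completeness, and termination from the BFS strategy and finiteness of the generalized graph. Your termination argument merely spells out in detail what the paper compresses into the sentence ``since $\allsymreg$ is finite and $\weq$ is quadratic, the generalized proof graph is finite,'' namely that the $\preccurlyeq$-canonicalized, repetition-free regular words range over a set of size bounded by $2^{|\vars|\cdot|Q_r|^2}$.
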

\begin{proof}
  Similarly to proof of \cref{lem:len-sound}, we can
	generalize nodes of the Nielsen proof graph to pairs of the form $(\lhs' =
	\rhs', S)$ where $S\subseteq\allsymreg$. The transformation rules can be
	straightforwardly generalized to take into account also the regular
	constraints represented by a subset of $\allsymreg$. Since $\allsymreg$ is
	finite and $\weq$ is quadratic, the generalized proof graph is finite. The
	initial nodes of the generalized proof graph are exactly encoded strings from
	$\I_{\varphi_r}^{\reg}$, the final nodes correspond to $\D_{\varphi_r}^{\reg}$, and
	the transformation rules correspond to the encoded relations
	$\T^{\reg}_{\xtoalphax}$ and $\T^{\reg}_{\xtoepsilon}$. Since our RMC
	framework implements the BFS strategy, from the previous we get that our procedure
	is sound, complete, and terminating in proving satisfiability.
\end{proof}

As in the case of length constraints, the satisfiability of a word equation
$\weq$ with regular constraints can be generalized to a system of equations
$\sweq$ with regular constraints. The languages/relations corresponding to
$\weq$ are replaced by languages/relations corresponding to $\sweq$.
For a~system of word equations with regular constraints our algorithm is still
sound (and complete if we consider the transducers $\T^{i, \seq\star}$ for $\sweq$).

\paragraph{Discussion}
In the previous sections, we elaborated the proposed RMC framework for various kinds of string constraints including 
length and regular extensions. In \cref{tab:summary}, we give a summary of the achieved results. 
If the system of equations is quadratic (and even enriched with regular extensions), then our RMC approach 
is sound, complete, and terminating. It is basically due to the fact that the language $\procset$ 
in \cref{alg:rmc} for the transformations tailored for quadratic equations becomes saturated after 
a~finite number of steps. In all other cases, our RMC approach is sound and complete (but not generally terminating). 
For suitable encoded transformations, we are able to reach a solution after a finite number of steps 
(if the system is satisfiable). But in general, the language $\procset$ in \cref{alg:rmc-modified}
for these transformations is not guaranteed to eventually become saturated.

\begin{table}[t]
	\caption{Summary of the proposed approach on various types of string constraints with extensions. 
	\texttt{S} stands for \emph{sound}, \texttt{C} stands for \emph{complete}, and \texttt{T} stands for 
	\emph{terminating}.}
	\centering
	\begin{tabular}{|l|l|l|l|}
		\hline
			& \emph{Quadratic system} & \emph{General system} & \emph{Boolean combination} \\\hline\hline
	\emph{pure}    & \texttt{SCT} (\cref{sec:quad-system})      & \texttt{SC} (\cref{sec:gen-case})      & \texttt{SC} (\cref{section:full})              \\\hline
	\emph{length}  & \texttt{SC}  (\cref{sec:length-constraints})      & \texttt{SC} (\cref{sec:length-constraints})     & \texttt{SC} (\cref{sec:length-constraints})                 \\\hline
	\emph{regular} & \texttt{SCT} (\cref{sec:regular-constraints})      & \texttt{SC} (\cref{sec:regular-constraints})     & \texttt{SC} (\cref{sec:regular-constraints}) \\      \hline         
	\end{tabular}
	\label{tab:summary}
\end{table}

\newcommand{
\begin{figure}[t]
	\begin{subfigure}{0.99\textwidth}
		\centering
		\begin{tikzpicture}
    [node distance=4.0cm,->,>=stealth,transform shape,scale=0.6]
  
    \tikzstyle{state} = [draw=black,minimum height=0.5cm, inner sep=4pt,rectangle, rounded corners=1mm]
  
    \node[state, initial below, accepting, initial text={}] (s0) at (0,0) {$\langle \rangle$};
    \node[state, right of=s0] (s1) {$\langle {r} \rangle^\downarrow$};
    \node[state, accepting, node distance=4.0cm, right of=s1] (s2) {$\langle \rangle^\downarrow$};
    \node[state, left of=s0] (s3) {$\langle {r} \rangle^\uparrow$};
    \node[state, accepting, node distance=4.0cm, left of=s3] (s4) {$\langle \rangle^\uparrow$};

    \draw (s0) edge [loop above] node[above] {$\twotrack{\pad}{\pad}/\twotrack{\pad}{\pad}; \top; \nil$} (s0);
    \draw (s0) to node[above] {$\twotrack{\alpha}{\beta}/\emptyword; \alpha = x; r \gets\beta$} (s1);
    \draw (s0) to node[above] {$\twotrack{\alpha}{\beta}/\emptyword; \beta = x; r \gets\alpha$} (s3);

    \draw (s1) edge [loop above] node[above, text width=3.5cm, align=center] {$\twotrack{\alpha}{\beta}/\twotrack{\alpha}{r};$ \\$\beta\neq\pad \wedge \alpha \neq x \wedge \beta\neq x;$ \\$r \gets\beta$} (s1);
    \draw (s1) to node[above, text width=2.5cm, align=center] {$\twotrack{\alpha}{\beta}/\twotrack{\alpha}{r};$ \\$\beta=\pad \wedge \alpha \neq x;$\\$\nil$} (s2);
    \draw (s2) edge [loop above] node[above, text width=2.5cm, align=center] {$\twotrack{\alpha}{\beta}/\twotrack{\alpha}{\beta};$ \\$\beta=\pad \wedge \alpha \neq x;$ \\$\nil$} (s2);

    \draw (s3) edge [loop above] node[above, text width=3.5cm, align=center] {$\twotrack{\alpha}{\beta}/\twotrack{r}{\beta};$\\$\alpha\neq\pad \wedge \alpha \neq x \wedge \beta\neq x;$\\$r \gets\alpha$} (s3);
    \draw (s3) to node[above,text width=2.5cm, align=center] {$\twotrack{\alpha}{\beta}/\twotrack{r}{\beta};$\\$\alpha=\pad \wedge \beta \neq x;$\\$\nil$} (s4);
    \draw (s4) edge [loop above] node[above, text width=2.5cm, align=center] {$\twotrack{\alpha}{\beta}/\twotrack{\alpha}{\beta}; $\\$\alpha=\pad \wedge \beta \neq x; $\\$\nil$} (s4);
  
  \end{tikzpicture}
    \vspace{-2mm}

    \caption{An FRT for the relation $T_{x\mapsto \epsilon}^{\leq 1}$ for an
    arbitrary set of variables $\vars$ and alphabet $\Sigma$}
		\label{fig:trans-a}
	\end{subfigure}

  \vspace{2mm}
	\begin{subfigure}{0.99\textwidth}
      \centering
			\begin{tikzpicture}
    [node distance=4.0cm,->,>=stealth,transform shape,scale=0.6]
  
    \tikzstyle{state} = [draw=black,minimum height=0.5cm, inner sep=4pt,rectangle, rounded corners=1mm]
  
    \node[state, initial below, accepting, initial text={}] (s0) at (0,0) {$\langle \rangle$};
    \node[state, right of=s0] (s1) {$\langle {r_1} \rangle^\downarrow$};
    \node[state, accepting, node distance=4.0cm, right of=s1] (s2) {$\langle \rangle^\downarrow$};
    \node[state, left of=s0] (s3) {$\langle {r_1} \rangle^\uparrow$};
    \node[state, accepting, node distance=4.0cm, left of=s3] (s4) {$\langle \rangle^\uparrow$};

    \draw (s0) edge [loop above] node[above] {$\twotrack{\pad}{\pad}/\twotrack{\pad}{\pad}; \top; \nil$} (s0);
    \draw (s0) to node[above,text width=2.5cm, align=center] {$\twotrack{\alpha}{\beta}/\emptyword;$ \\$\alpha \in \vars;$ \\$r_1 \gets\beta, r_2 \gets \alpha$} (s1);
    \draw (s0) to node[above,text width=2.5cm, align=center] {$\twotrack{\alpha}{\beta}/\emptyword;$ \\$\beta \in \vars;$ \\$r_1 \gets\alpha, r_2\gets\beta $} (s3);

    \draw (s1) edge [loop above] node[above, text width=3.5cm, align=center] {$\twotrack{\alpha}{\beta}/\twotrack{\alpha}{r_1};$ \\$\beta\neq\pad \wedge \alpha \neq r_2 \wedge \beta\neq r_2;$ \\$r_1 \gets\beta$} (s1);
    \draw (s1) to node[above, text width=2.5cm, align=center] {$\twotrack{\alpha}{\beta}/\twotrack{\alpha}{r_1};$ \\$\beta=\pad \wedge \alpha \neq r_2;$\\$\nil$} (s2);
    \draw (s2) edge [loop above] node[above, text width=2.5cm, align=center] {$\twotrack{\alpha}{\beta}/\twotrack{\alpha}{\beta};$ \\$\beta=\pad \wedge \alpha \neq r_2;$ \\$\nil$} (s2);

    \draw (s3) edge [loop above] node[above, text width=3.5cm, align=center] {$\twotrack{\alpha}{\beta}/\twotrack{r_1}{\beta};$\\$\alpha\neq\pad \wedge \alpha \neq r_2 \wedge \beta\neq r_2;$\\$r_1 \gets\alpha$} (s3);
    \draw (s3) to node[above,text width=2.5cm, align=center] {$\twotrack{\alpha}{\beta}/\twotrack{r_1}{\beta};$\\$\alpha=\pad \wedge \beta \neq r_2;$\\$\nil$} (s4);
    \draw (s4) edge [loop above] node[above, text width=2.5cm, align=center] {$\twotrack{\alpha}{\beta}/\twotrack{\alpha}{\beta}; $\\$\alpha=\pad \wedge \beta \neq r_2; $\\$\nil$} (s4);
  
  \end{tikzpicture}
      \vspace{-2mm}
			\caption{An FRT for the relation $\bigcup_{x\in\vars}T_{x\mapsto \epsilon}^{\leq 1}$ 
	for an arbitrary set of variables $\vars$ and alphabet $\Sigma$}
			\label{fig:trans-b}
		\end{subfigure}

    \caption{FRT~(\subref{fig:trans-a}) 
	implementing the relation $T_{x\mapsto \epsilon}^{\leq 1}$ for an arbitrary set of variables $\vars$ and alphabet $\Sigma$.
	The register $r$ is used to store a shifted symbol.
	FRT~(\subref{fig:trans-b}) implementing the relation $\bigcup_{x\in\vars}T_{x\mapsto \epsilon}^{\leq 1}$ 
	for an arbitrary set of variables $\vars$ and alphabet $\Sigma$. The register
  $r_1$ is used to store a shifted symbol and the register $r_2$ is used 
	to store a variable that should be removed. In the figures, $\alpha$ and $\beta$ are
    variables representing the input symbol, $r, r_1$, and $r_2$ are registers, and the
    transitions are of the form \emph{action;condition;register update}.}
\end{figure}
}[0]{
\begin{figure}[t]
	\begin{subfigure}{0.99\textwidth}
		\centering
		\begin{tikzpicture}
    [node distance=4.0cm,->,>=stealth,transform shape,scale=0.6]
  
    \tikzstyle{state} = [draw=black,minimum height=0.5cm, inner sep=4pt,rectangle, rounded corners=1mm]
  
    \node[state, initial below, accepting, initial text={}] (s0) at (0,0) {$\langle \rangle$};
    \node[state, right of=s0] (s1) {$\langle {r} \rangle^\downarrow$};
    \node[state, accepting, node distance=4.0cm, right of=s1] (s2) {$\langle \rangle^\downarrow$};
    \node[state, left of=s0] (s3) {$\langle {r} \rangle^\uparrow$};
    \node[state, accepting, node distance=4.0cm, left of=s3] (s4) {$\langle \rangle^\uparrow$};

    \draw (s0) edge [loop above] node[above] {$\twotrack{\pad}{\pad}/\twotrack{\pad}{\pad}; \top; \nil$} (s0);
    \draw (s0) to node[above] {$\twotrack{\alpha}{\beta}/\emptyword; \alpha = x; r \gets\beta$} (s1);
    \draw (s0) to node[above] {$\twotrack{\alpha}{\beta}/\emptyword; \beta = x; r \gets\alpha$} (s3);

    \draw (s1) edge [loop above] node[above, text width=3.5cm, align=center] {$\twotrack{\alpha}{\beta}/\twotrack{\alpha}{r};$ \\$\beta\neq\pad \wedge \alpha \neq x \wedge \beta\neq x;$ \\$r \gets\beta$} (s1);
    \draw (s1) to node[above, text width=2.5cm, align=center] {$\twotrack{\alpha}{\beta}/\twotrack{\alpha}{r};$ \\$\beta=\pad \wedge \alpha \neq x;$\\$\nil$} (s2);
    \draw (s2) edge [loop above] node[above, text width=2.5cm, align=center] {$\twotrack{\alpha}{\beta}/\twotrack{\alpha}{\beta};$ \\$\beta=\pad \wedge \alpha \neq x;$ \\$\nil$} (s2);

    \draw (s3) edge [loop above] node[above, text width=3.5cm, align=center] {$\twotrack{\alpha}{\beta}/\twotrack{r}{\beta};$\\$\alpha\neq\pad \wedge \alpha \neq x \wedge \beta\neq x;$\\$r \gets\alpha$} (s3);
    \draw (s3) to node[above,text width=2.5cm, align=center] {$\twotrack{\alpha}{\beta}/\twotrack{r}{\beta};$\\$\alpha=\pad \wedge \beta \neq x;$\\$\nil$} (s4);
    \draw (s4) edge [loop above] node[above, text width=2.5cm, align=center] {$\twotrack{\alpha}{\beta}/\twotrack{\alpha}{\beta}; $\\$\alpha=\pad \wedge \beta \neq x; $\\$\nil$} (s4);
  
  \end{tikzpicture}
    \vspace{-2mm}

    \caption{An FRT for the relation $T_{x\mapsto \epsilon}^{\leq 1}$ for an
    arbitrary set of variables $\vars$ and alphabet $\Sigma$}
		\label{fig:trans-a}
	\end{subfigure}

  \vspace{2mm}
	\begin{subfigure}{0.99\textwidth}
      \centering
			\begin{tikzpicture}
    [node distance=4.0cm,->,>=stealth,transform shape,scale=0.6]
  
    \tikzstyle{state} = [draw=black,minimum height=0.5cm, inner sep=4pt,rectangle, rounded corners=1mm]
  
    \node[state, initial below, accepting, initial text={}] (s0) at (0,0) {$\langle \rangle$};
    \node[state, right of=s0] (s1) {$\langle {r_1} \rangle^\downarrow$};
    \node[state, accepting, node distance=4.0cm, right of=s1] (s2) {$\langle \rangle^\downarrow$};
    \node[state, left of=s0] (s3) {$\langle {r_1} \rangle^\uparrow$};
    \node[state, accepting, node distance=4.0cm, left of=s3] (s4) {$\langle \rangle^\uparrow$};

    \draw (s0) edge [loop above] node[above] {$\twotrack{\pad}{\pad}/\twotrack{\pad}{\pad}; \top; \nil$} (s0);
    \draw (s0) to node[above,text width=2.5cm, align=center] {$\twotrack{\alpha}{\beta}/\emptyword;$ \\$\alpha \in \vars;$ \\$r_1 \gets\beta, r_2 \gets \alpha$} (s1);
    \draw (s0) to node[above,text width=2.5cm, align=center] {$\twotrack{\alpha}{\beta}/\emptyword;$ \\$\beta \in \vars;$ \\$r_1 \gets\alpha, r_2\gets\beta $} (s3);

    \draw (s1) edge [loop above] node[above, text width=3.5cm, align=center] {$\twotrack{\alpha}{\beta}/\twotrack{\alpha}{r_1};$ \\$\beta\neq\pad \wedge \alpha \neq r_2 \wedge \beta\neq r_2;$ \\$r_1 \gets\beta$} (s1);
    \draw (s1) to node[above, text width=2.5cm, align=center] {$\twotrack{\alpha}{\beta}/\twotrack{\alpha}{r_1};$ \\$\beta=\pad \wedge \alpha \neq r_2;$\\$\nil$} (s2);
    \draw (s2) edge [loop above] node[above, text width=2.5cm, align=center] {$\twotrack{\alpha}{\beta}/\twotrack{\alpha}{\beta};$ \\$\beta=\pad \wedge \alpha \neq r_2;$ \\$\nil$} (s2);

    \draw (s3) edge [loop above] node[above, text width=3.5cm, align=center] {$\twotrack{\alpha}{\beta}/\twotrack{r_1}{\beta};$\\$\alpha\neq\pad \wedge \alpha \neq r_2 \wedge \beta\neq r_2;$\\$r_1 \gets\alpha$} (s3);
    \draw (s3) to node[above,text width=2.5cm, align=center] {$\twotrack{\alpha}{\beta}/\twotrack{r_1}{\beta};$\\$\alpha=\pad \wedge \beta \neq r_2;$\\$\nil$} (s4);
    \draw (s4) edge [loop above] node[above, text width=2.5cm, align=center] {$\twotrack{\alpha}{\beta}/\twotrack{\alpha}{\beta}; $\\$\alpha=\pad \wedge \beta \neq r_2; $\\$\nil$} (s4);
  
  \end{tikzpicture}
      \vspace{-2mm}
			\caption{An FRT for the relation $\bigcup_{x\in\vars}T_{x\mapsto \epsilon}^{\leq 1}$ 
	for an arbitrary set of variables $\vars$ and alphabet $\Sigma$}
			\label{fig:trans-b}
		\end{subfigure}

    \caption{FRT~(\subref{fig:trans-a}) 
	implementing the relation $T_{x\mapsto \epsilon}^{\leq 1}$ for an arbitrary set of variables $\vars$ and alphabet $\Sigma$.
	The register $r$ is used to store a shifted symbol.
	FRT~(\subref{fig:trans-b}) implementing the relation $\bigcup_{x\in\vars}T_{x\mapsto \epsilon}^{\leq 1}$ 
	for an arbitrary set of variables $\vars$ and alphabet $\Sigma$. The register
  $r_1$ is used to store a shifted symbol and the register $r_2$ is used 
	to store a variable that should be removed. In the figures, $\alpha$ and $\beta$ are
    variables representing the input symbol, $r, r_1$, and $r_2$ are registers, and the
    transitions are of the form \emph{action;condition;register update}.}
\end{figure}
}

\section{Implementation}\label{sec:rmc-implementation}

We created a~prototype Python tool called \retro\footnote{available at \texttt{\url{https://github.com/VeriFIT/retro}}}, where we implemented the
symbolic procedure for solving systems of word equations. \retro implements
a~modification of the RMC loop from \cref{alg:rmc}. In particular,
instead of standard transducers defined in \cref{sec:preliminaries}, it
uses the so-called \emph{finite-alphabet register transducers} (FRTs), which
allow a~more concise representation of a~rational relation.

Informally, an FRT is a~register automaton (in the sense of~\cite{KaminskiF94,DemriL09})
where the alphabet is finite. The finiteness of the alphabet implies that the
expressive power of FRTs coincides with the class of rational languages, but the
advantage of using FRTs is that they allow a~more concise representation than
ordinary transducers.
One can consider FRTs to be the restriction of symbolic transducers with
registers of \cite{VeanesHLMB12} to finite alphabets.
Operations for dealing with these transducers are then straightforward
restrictions of the operations considered by \cite{VeanesHLMB12} and therefore
do not elaborate on it here.

In particular, transducers (without registers) corresponding to the transformers
$\T_{\xtoalphax}$ and~$\T_{\xtoepsilon}$ contain branching at the beginning for
each choice of~$x$ and~$\alpha$. Especially in the case of huge alphabets, this
yields huge transducers (consider for instance the Unicode alphabet with over 1
million symbols). The use of FRTs yields much smaller automata because the
choice of~$x$ and~$\alpha$ is stored into registers and then processed
symbolically. 
To illustrate the effect of using registers, consider the transducer shown in \cref{fig:transducer-ex} implementing the
encoded relation $T_{x\mapsto \epsilon}^{\leq 1}$ for $\vars = \{ x\}$ and $\Sigma = \{ a\}$. 
The full transducer for large alphabets would
require a branching for each $\twotrack{u}{v}$, with $u,v \in \msoalph$, and a~lot of states to store
the concrete shifted symbols. In particular, it requires at least one pair of states $\langle v \rangle^\uparrow$ 
and $\langle v \rangle^\downarrow$ for each $v\in\msoalph$, which is unfeasible for very large~$\msoalph$.
On the other hand, the FRT in
\cref{fig:trans-a} stores the shifted symbols in the register~$r$,
the branching is replaced by a symbolic transition, and hence it requires just a couple of
states and transitions. Moreover, using an additional register $r_2$ to store the variable to replace, we are able to 
efficiently represent the relation $\bigcup_{x\in\vars}T_{x\mapsto
\epsilon}^{\leq 1}$, as shown in \cref{fig:trans-b}.


\begin{figure}[t]
	\begin{subfigure}{0.99\textwidth}
		\centering
		\begin{tikzpicture}
    [node distance=4.0cm,->,>=stealth,transform shape,scale=0.6]
  
    \tikzstyle{state} = [draw=black,minimum height=0.5cm, inner sep=4pt,rectangle, rounded corners=1mm]
  
    \node[state, initial below, accepting, initial text={}] (s0) at (0,0) {$\langle \rangle$};
    \node[state, right of=s0] (s1) {$\langle {r} \rangle^\downarrow$};
    \node[state, accepting, node distance=4.0cm, right of=s1] (s2) {$\langle \rangle^\downarrow$};
    \node[state, left of=s0] (s3) {$\langle {r} \rangle^\uparrow$};
    \node[state, accepting, node distance=4.0cm, left of=s3] (s4) {$\langle \rangle^\uparrow$};

    \draw (s0) edge [loop above] node[above] {$\twotrack{\pad}{\pad}/\twotrack{\pad}{\pad}; \top; \nil$} (s0);
    \draw (s0) to node[above] {$\twotrack{\alpha}{\beta}/\emptyword; \alpha = x; r \gets\beta$} (s1);
    \draw (s0) to node[above] {$\twotrack{\alpha}{\beta}/\emptyword; \beta = x; r \gets\alpha$} (s3);

    \draw (s1) edge [loop above] node[above, text width=3.5cm, align=center] {$\twotrack{\alpha}{\beta}/\twotrack{\alpha}{r};$ \\$\beta\neq\pad \wedge \alpha \neq x \wedge \beta\neq x;$ \\$r \gets\beta$} (s1);
    \draw (s1) to node[above, text width=2.5cm, align=center] {$\twotrack{\alpha}{\beta}/\twotrack{\alpha}{r};$ \\$\beta=\pad \wedge \alpha \neq x;$\\$\nil$} (s2);
    \draw (s2) edge [loop above] node[above, text width=2.5cm, align=center] {$\twotrack{\alpha}{\beta}/\twotrack{\alpha}{\beta};$ \\$\beta=\pad \wedge \alpha \neq x;$ \\$\nil$} (s2);

    \draw (s3) edge [loop above] node[above, text width=3.5cm, align=center] {$\twotrack{\alpha}{\beta}/\twotrack{r}{\beta};$\\$\alpha\neq\pad \wedge \alpha \neq x \wedge \beta\neq x;$\\$r \gets\alpha$} (s3);
    \draw (s3) to node[above,text width=2.5cm, align=center] {$\twotrack{\alpha}{\beta}/\twotrack{r}{\beta};$\\$\alpha=\pad \wedge \beta \neq x;$\\$\nil$} (s4);
    \draw (s4) edge [loop above] node[above, text width=2.5cm, align=center] {$\twotrack{\alpha}{\beta}/\twotrack{\alpha}{\beta}; $\\$\alpha=\pad \wedge \beta \neq x; $\\$\nil$} (s4);
  
  \end{tikzpicture}
    \vspace{-2mm}

    \caption{An FRT for the relation $T_{x\mapsto \epsilon}^{\leq 1}$ for an
    arbitrary set of variables $\vars$ and alphabet $\Sigma$}
		\label{fig:trans-a}
	\end{subfigure}

  \vspace{2mm}
	\begin{subfigure}{0.99\textwidth}
      \centering
			\begin{tikzpicture}
    [node distance=4.0cm,->,>=stealth,transform shape,scale=0.6]
  
    \tikzstyle{state} = [draw=black,minimum height=0.5cm, inner sep=4pt,rectangle, rounded corners=1mm]
  
    \node[state, initial below, accepting, initial text={}] (s0) at (0,0) {$\langle \rangle$};
    \node[state, right of=s0] (s1) {$\langle {r_1} \rangle^\downarrow$};
    \node[state, accepting, node distance=4.0cm, right of=s1] (s2) {$\langle \rangle^\downarrow$};
    \node[state, left of=s0] (s3) {$\langle {r_1} \rangle^\uparrow$};
    \node[state, accepting, node distance=4.0cm, left of=s3] (s4) {$\langle \rangle^\uparrow$};

    \draw (s0) edge [loop above] node[above] {$\twotrack{\pad}{\pad}/\twotrack{\pad}{\pad}; \top; \nil$} (s0);
    \draw (s0) to node[above,text width=2.5cm, align=center] {$\twotrack{\alpha}{\beta}/\emptyword;$ \\$\alpha \in \vars;$ \\$r_1 \gets\beta, r_2 \gets \alpha$} (s1);
    \draw (s0) to node[above,text width=2.5cm, align=center] {$\twotrack{\alpha}{\beta}/\emptyword;$ \\$\beta \in \vars;$ \\$r_1 \gets\alpha, r_2\gets\beta $} (s3);

    \draw (s1) edge [loop above] node[above, text width=3.5cm, align=center] {$\twotrack{\alpha}{\beta}/\twotrack{\alpha}{r_1};$ \\$\beta\neq\pad \wedge \alpha \neq r_2 \wedge \beta\neq r_2;$ \\$r_1 \gets\beta$} (s1);
    \draw (s1) to node[above, text width=2.5cm, align=center] {$\twotrack{\alpha}{\beta}/\twotrack{\alpha}{r_1};$ \\$\beta=\pad \wedge \alpha \neq r_2;$\\$\nil$} (s2);
    \draw (s2) edge [loop above] node[above, text width=2.5cm, align=center] {$\twotrack{\alpha}{\beta}/\twotrack{\alpha}{\beta};$ \\$\beta=\pad \wedge \alpha \neq r_2;$ \\$\nil$} (s2);

    \draw (s3) edge [loop above] node[above, text width=3.5cm, align=center] {$\twotrack{\alpha}{\beta}/\twotrack{r_1}{\beta};$\\$\alpha\neq\pad \wedge \alpha \neq r_2 \wedge \beta\neq r_2;$\\$r_1 \gets\alpha$} (s3);
    \draw (s3) to node[above,text width=2.5cm, align=center] {$\twotrack{\alpha}{\beta}/\twotrack{r_1}{\beta};$\\$\alpha=\pad \wedge \beta \neq r_2;$\\$\nil$} (s4);
    \draw (s4) edge [loop above] node[above, text width=2.5cm, align=center] {$\twotrack{\alpha}{\beta}/\twotrack{\alpha}{\beta}; $\\$\alpha=\pad \wedge \beta \neq r_2; $\\$\nil$} (s4);
  
  \end{tikzpicture}
      \vspace{-2mm}
			\caption{An FRT for the relation $\bigcup_{x\in\vars}T_{x\mapsto \epsilon}^{\leq 1}$ 
	for an arbitrary set of variables $\vars$ and alphabet $\Sigma$}
			\label{fig:trans-b}
		\end{subfigure}

    \caption{FRT~(\subref{fig:trans-a}) 
	implementing the relation $T_{x\mapsto \epsilon}^{\leq 1}$ for an arbitrary set of variables $\vars$ and alphabet $\Sigma$.
	The register $r$ is used to store a shifted symbol.
	FRT~(\subref{fig:trans-b}) implementing the relation $\bigcup_{x\in\vars}T_{x\mapsto \epsilon}^{\leq 1}$ 
	for an arbitrary set of variables $\vars$ and alphabet $\Sigma$. The register
  $r_1$ is used to store a shifted symbol and the register $r_2$ is used 
	to store a variable that should be removed. In the figures, $\alpha$ and $\beta$ are
    variables representing the input symbol, $r, r_1$, and $r_2$ are registers, and the
    transitions are of the form \emph{action;condition;register update}.}
\end{figure}


Concretely, \retro implements the decision procedure for a~system of general
word equations and length constraints (i.e., the procedures covered in
\cref{section:quadratic,section:conjunction,sec:length-constraints}).
It does not implement
\begin{inparaenum}[(i)]
  \item  Boolean combinations of constraints (\cref{section:full}) and
  \item  regular constraints (\cref{sec:regular-constraints}),
\end{inparaenum}
which are quite inefficient and are provided in order for our approach to have
a~more robust theoretical formal basis for a~large fragment of input constraints
(cf.\ the discussion at the beginning of \cref{section:full}).

As another feature, \retro uses deterministic finite automata to represent
configurations in \cref{alg:rmc}. It also uses eager automata
minimization, since it has a big impact on the performance, especially on
checking the termination condition of the RMC algorithm, which is done by
testing language inclusion between the current configuration and all so-far
processed configurations.

\section{Experimental Evaluation}\label{sec:rmc-experiments}

\newcommand{\figretroscatter}[0]{
\begin{figure}[t]
  \begin{subfigure}[b]{0.49\linewidth}
  \begin{center}
  \includegraphics[width=\linewidth,keepaspectratio]{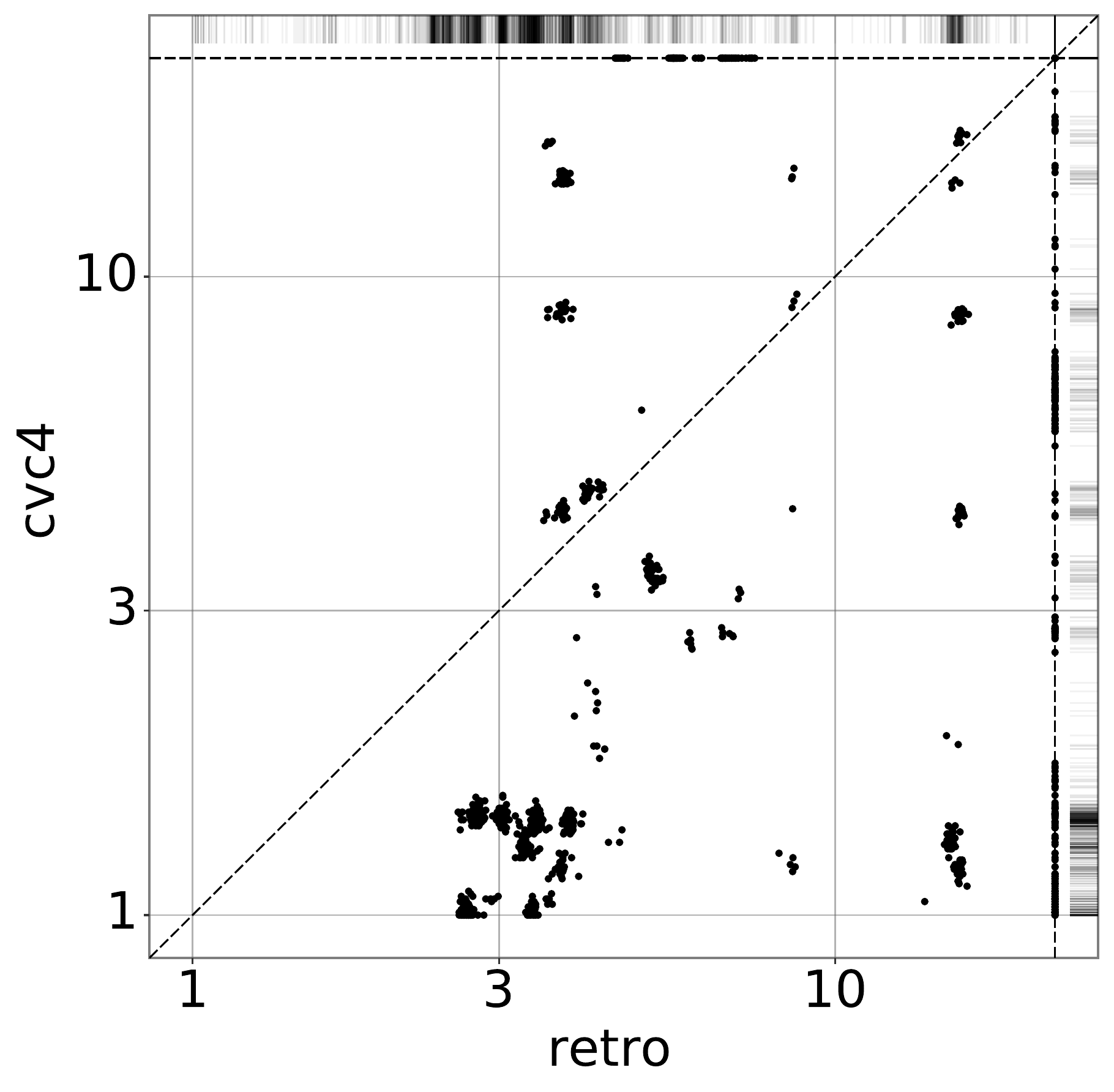}
  \end{center}
  \vspace{-0mm}
  \caption{\retro vs CVC4}
  \label{fig:schewe}
  \end{subfigure}
  \begin{subfigure}[b]{0.49\linewidth}
  \begin{center}
  \includegraphics[width=\linewidth,keepaspectratio]{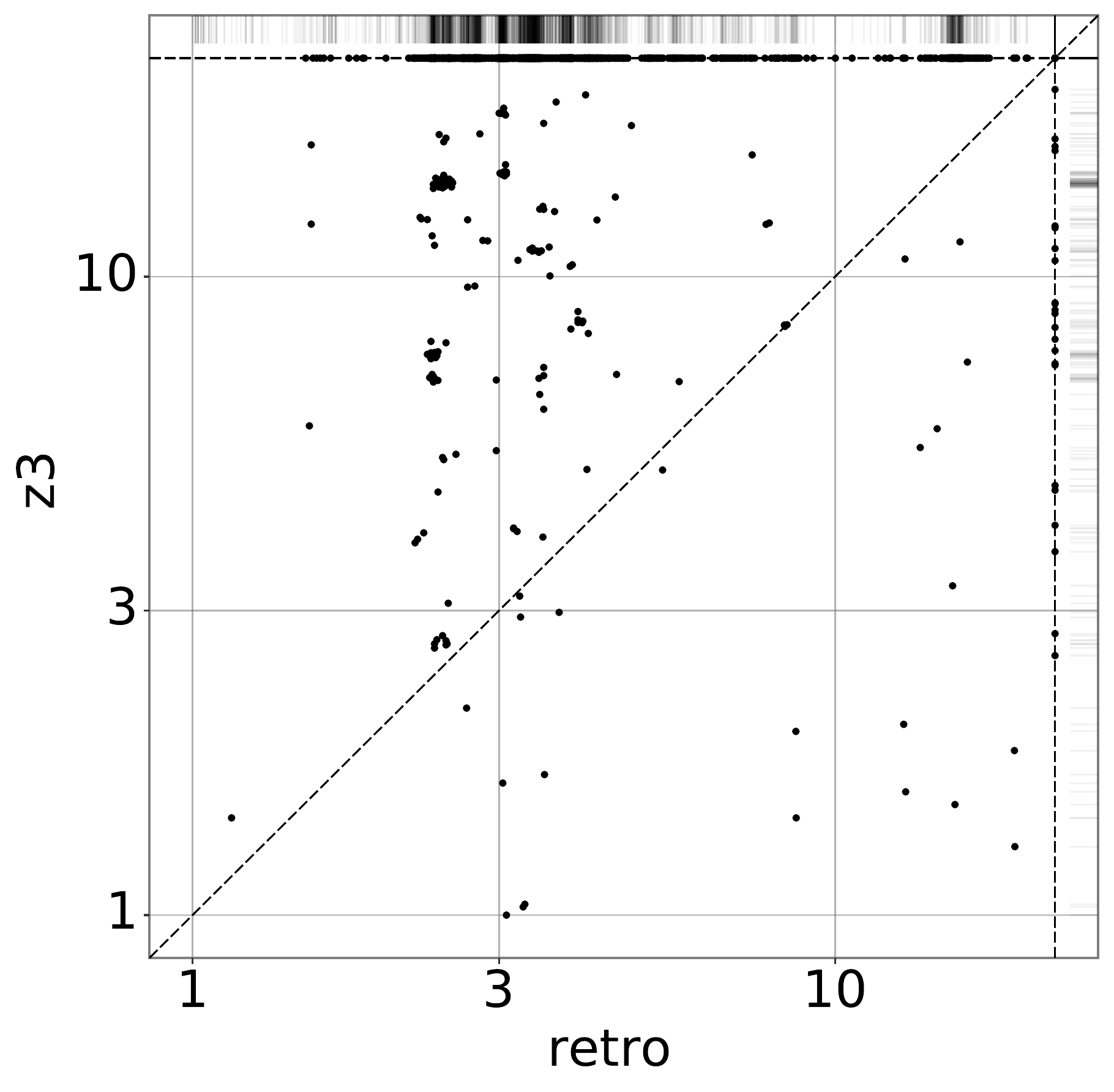}
  \end{center}
  \vspace{-0mm}
  \caption{\retro vs Z3}
  \label{fig:rankerold}
  \end{subfigure}
  \vspace{-0mm}
  \caption{Comparison of \retro with CVC4 and Z3 on \pyexhard. We show difficult 
  instances that took more than 1\,s to finish. Times are given in seconds, axes
  are logarithmic.}
\label{fig:retroscatter}
\end{figure}
}

\newcommand{\figvbs}[0]{
\begin{figure}[t]
   \centering
   \begin{subfigure}[b]{0.49\linewidth}
    \begin{center}
      \includegraphics[width=1.0\linewidth,keepaspectratio]{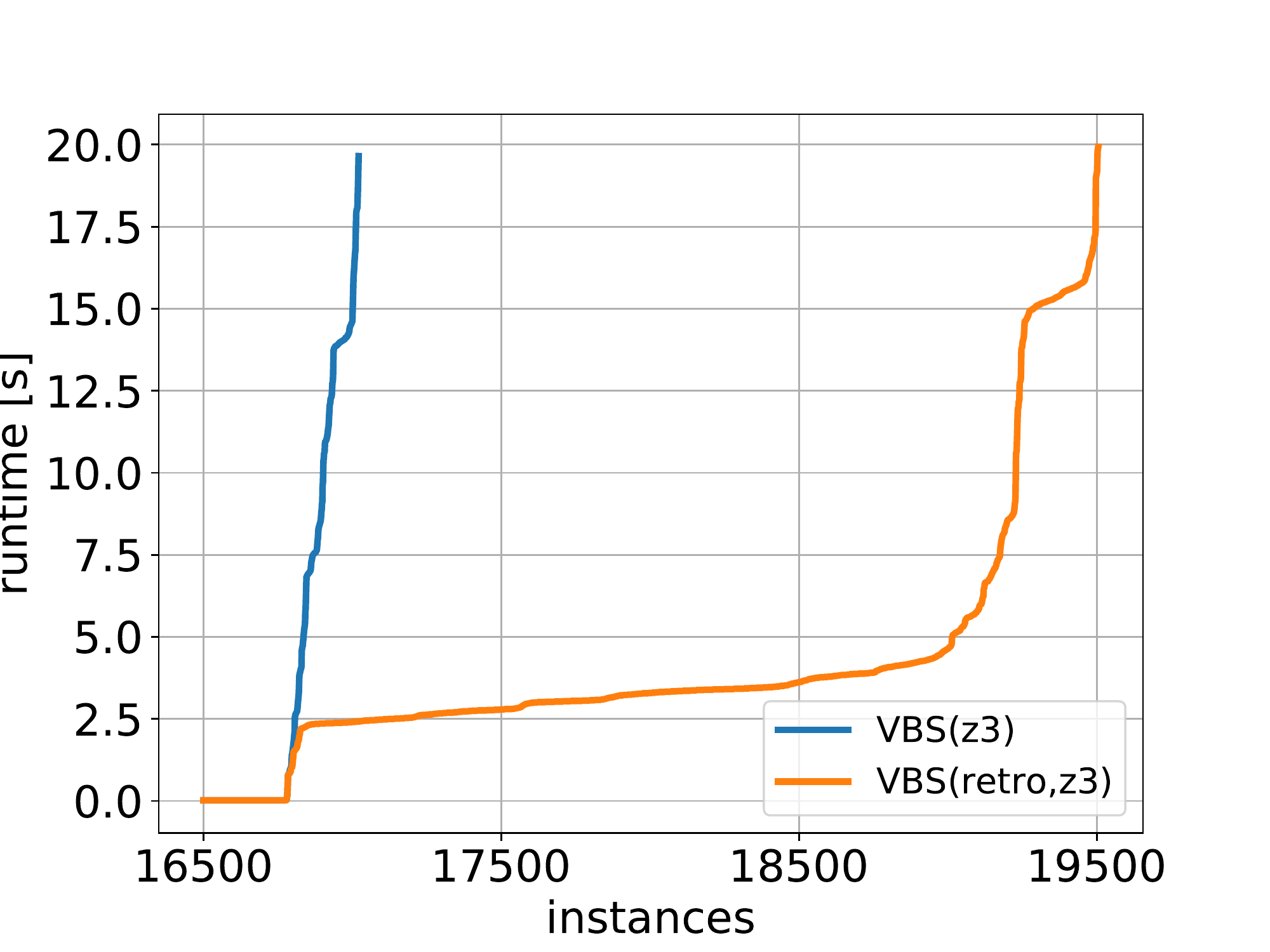}
    \end{center}
    \vspace{-0mm}
    \caption{\textit{VBS}(\retro, Z3)}
    \label{fig:vbs-z3}
    \end{subfigure}
    \begin{subfigure}[b]{0.49\linewidth}
    \begin{center}
      \includegraphics[width=1.0\linewidth,keepaspectratio]{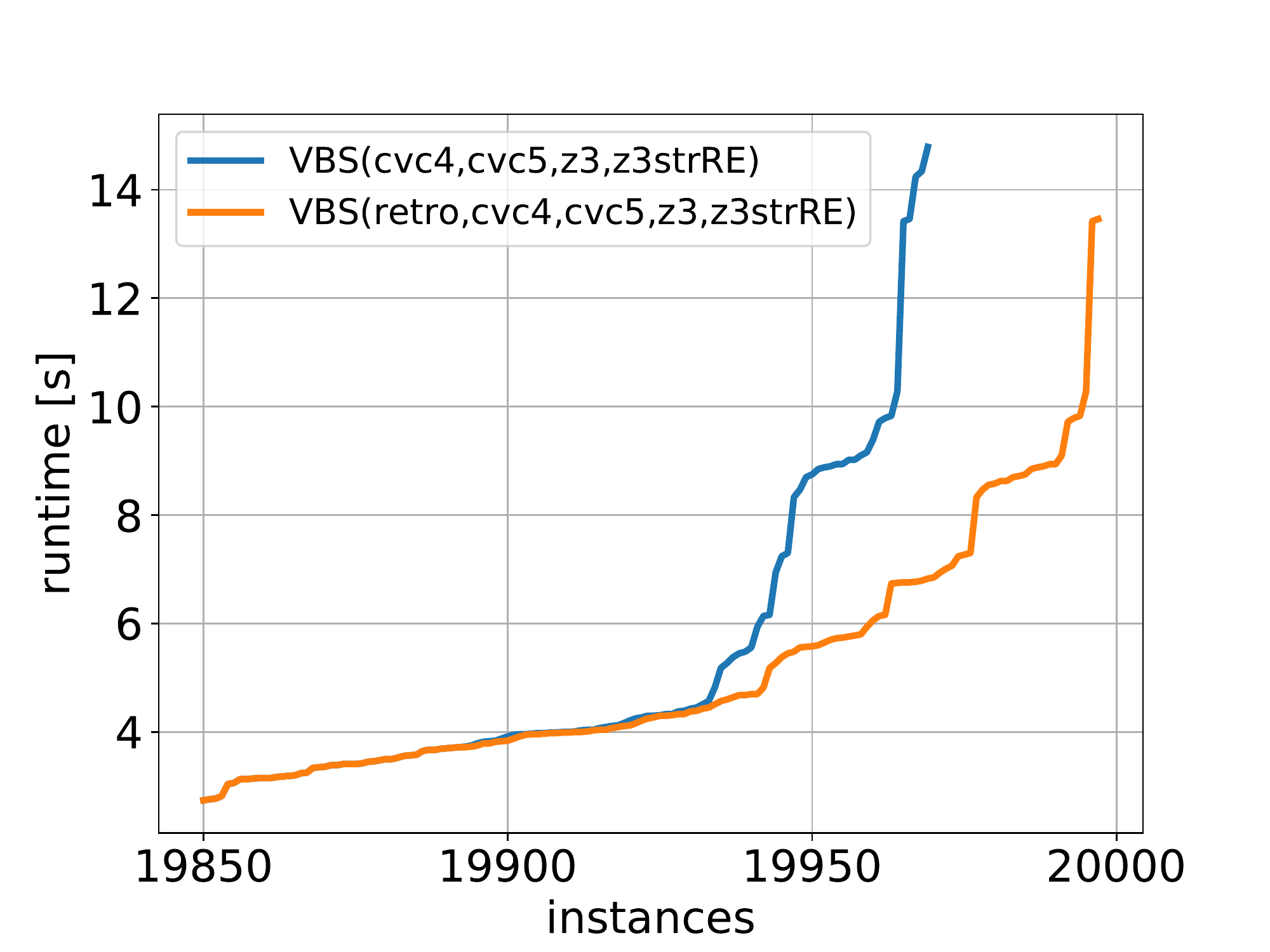}
    \end{center}
    \vspace{-0mm}
    \caption{\textit{VBS}(\retro, CVC4, CVC5, Z3, \zstrre)}
    \label{fig:vbs}
    \end{subfigure}
    \vspace{-0mm}
   \caption{A cactus plot comparing the Virtual Best Solver of
   (\subref{fig:vbs-z3}) Z3 with and without \retro  and (\subref{fig:vbs}) all
   tools with and without \retro  on the \pyexhard benchmark.
    We show only the most difficult benchmarks (out of 20,020).}
  \label{fig:vbsplots}
\end{figure}
}

\newcommand{\figkepler}[0]{
\begin{figure}[t]
   \centering
   \includegraphics[width=0.8\linewidth,keepaspectratio]{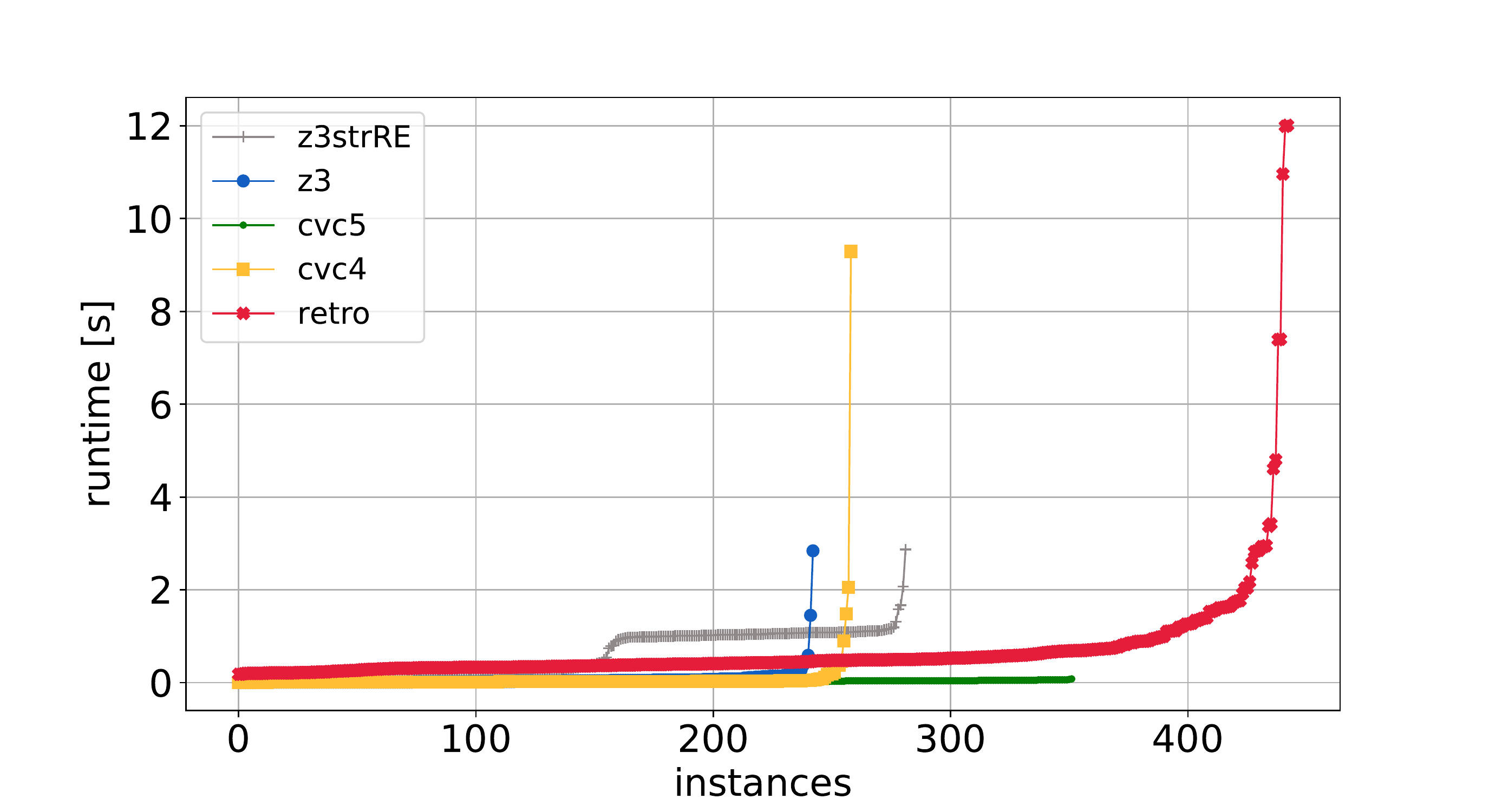}
   \caption{A cactus plot comparing \retro, CVC4, CVC5, Z3, and \zstrre on \kepler}
   \label{fig:cactus-kepler}
\end{figure}
}

\newcommand{
\begin{table}[b]
\caption{A break-down of solved cases on the \kepler benchmark.  A~number on row~$X$ in column~$Y$ denotes the number of cases that
  solver~$X$ could solve and solver~$Y$ could not solve.
  }
  \vspace{-5mm}
\begin{center}
\small
\begin{tabular}{lrrrrr}
  \toprule
  & \retro & CVC4 & CVC5 & Z3 & \zstrre \\
  \midrule
  \retro &  \multicolumn{1}{c}{---} & 185 & 92 & 201 & 161 \\
  CVC4 & 1 &\multicolumn{1}{c}{---}   & 2 & 104 & 63  \\
  CVC5 &1 & 95 & \multicolumn{1}{c}{---} & 197 & 145 \\
  Z3 & 1 & 88 & 88 & \multicolumn{1}{c}{---} & 23 \\
  \zstrre & 0 & 86 & 75 & 62 & \multicolumn{1}{c}{---}\\
\bottomrule
\end{tabular}

\end{center}
\label{tab:kepler}
\end{table}
}[0]{
\begin{table}[b]
\caption{A break-down of solved cases on the \kepler benchmark.  A~number on row~$X$ in column~$Y$ denotes the number of cases that
  solver~$X$ could solve and solver~$Y$ could not solve.
  }
  \vspace{-5mm}
\begin{center}
\small
\begin{tabular}{lrrrrr}
  \toprule
  & \retro & CVC4 & CVC5 & Z3 & \zstrre \\
  \midrule
  \retro &  \multicolumn{1}{c}{---} & 185 & 92 & 201 & 161 \\
  CVC4 & 1 &\multicolumn{1}{c}{---}   & 2 & 104 & 63  \\
  CVC5 &1 & 95 & \multicolumn{1}{c}{---} & 197 & 145 \\
  Z3 & 1 & 88 & 88 & \multicolumn{1}{c}{---} & 23 \\
  \zstrre & 0 & 86 & 75 & 62 & \multicolumn{1}{c}{---}\\
\bottomrule
\end{tabular}

\end{center}
\label{tab:kepler}
\end{table}
}

We compared the performance of our approach (implemented in \retro) with four
current state-of-the-art SMT solvers that support the string theory:
Z3~4.8.14 (\cite{MouraB08}), \zstrre (\cite{berzish2021smt}), CVC4~1.8 (\cite{cvc4Tool}), and CVC5~1.0.1 (\cite{cvc5}).
Regarding other solvers that we are aware of, the performance of \norn from
\cite{abdulla2015norn} and \ostrich from \cite{chen2019decision} was much worse than the
considered tools, the performance of \zthreestrfour (\cite{MoraBKNG21}) was similar to that of
\zstrre, and \sloth of \cite{HolikJLRV18} was unsound on the considered fragment
(it supports only the so-called \emph{straight-line fragment}).

The first set of benchmarks is \kepler, obtained from~\cite{LeH18}.
\kepler contains 600 hand-crafted string constraints composed of quadratic word
equations with length constraints.
In \cref{fig:cactus-kepler}, we give a~cactus plot of the results of the
solvers on the \kepler benchmark set with the timeout of 20\,s.
In cactus plots, the closer a solver's plot is to the right and bottom borders, the better is the corresponding solver.
The total numbers of solved benchmarks within the timeout were:
243 for Z3,
282 for \zstrre,
259 for CVC4,
352 for CVC5, and
443 for \retro.
We can refine these numbers by comparing the cases solved by each pair of tools, as shown in \cref{tab:kepler}.
From the table we can see that \retro solves significantly more benchmarks than all other state-of-the-art tools, from 92 when compared with CVC5 to 201 with Z3.
Only in one case \retro failed (as did \zstrre) while CVC4, CVC5, and Z3 succeeded.
Except for CVC4 vs.\@ CVC5, the comparison between the other tools is inconclusive, as the entries $(X,Y)$ and $(Y,X)$ of the table both contain large values.

\figkepler


\begin{table}[b]
\caption{A break-down of solved cases on the \kepler benchmark.  A~number on row~$X$ in column~$Y$ denotes the number of cases that
  solver~$X$ could solve and solver~$Y$ could not solve.
  }
  \vspace{-5mm}
\begin{center}
\small

\end{center}
\label{tab:kepler}
\end{table}


The other set of benchmarks that we tried is \pyexhard. Here we wanted to see the
potential of integrating \retro with DPLL(T)-based string solvers, like Z3 or
CVC4, as a~specific string theory solver.
The input of this component is a conjunction of atomic string formulae (e.g.,
$xy=zb \wedge z=ax$) that is a~model of the Boolean structure of the top-level
formula.
The conjunction of atomic string formulae is then, in several layers, processed
by various string theory solvers, which either add more conflict clauses or
return a model.
To~evaluate whether \retro is suitable to be used as ``one of the layers'' of
Z3 or CVC4's string solver, we analyzed the PyEx
benchmarks from \cite{reynolds2017scaling} and extracted from it 967 difficult instances that neither CVC4 nor Z3 could solve in 10 seconds.
From those instances, we obtained 20,020 conjunctions of word equations that
Z3's DPLL(T) algorithm sent to its string theory solver when trying to solve
them.
We call those 20,020 conjunctions of word equations \pyexhard.
We then evaluated the other solvers on \pyexhard with the timeout of 20\,s.
Out of these, Z3 could not solve 3,001, \zstrre 814, CVC4 152, CVC5 171, and
\retro could not solve 3,079 instances.

\figretroscatter

Let us now look closely at the hard instances in the \pyexhard benchmark set,
in particular, on the instances that the other tools could not solve.
These benchmarks cannot be handled by the (several layers of) fast heuristics
implemented in these tools, which are sufficient to solve many benchmarks
without the need to start applying the case-split rule.%
\footnote{%
For instance, when Z3 receives the word equation $xy=yax$, it infers the length
constraint $|x|+|y| = |y|+1+|x|$, which implies unsatisfiability of the word
equation without the need to start applying the case-split rule at all.%
}
In \cref{fig:retroscatter} we give a comparison of the running times of \retro 
with CVC4 and Z3 with a particular focus on the difficult instances 
(running time above 1\,s). 
The plots about CVC5 and \zstrre show similar trends.
From the figure we can see that on many hard instances \retro can 
provide the answer much faster than the other tool, in particular for Z3.
When we compare the solvers on the examples that the other tools failed to solve, 
we have that
\retro could solve 86 examples (56.2\,\%) out of those where CVC4 failed, 
130 examples (76.02\,\%) where CVC5 failed,
2,484 examples (82.7\,\%) where Z3 failed, and
519 examples (63.75\,\%) where \zstrre failed. 
Moreover, \retro solved 28 instances as the only tool. 
Lastly, we consider how \retro affects the \emph{Virtual Best Solver}: 
given a set of solvers~$S$, we use $\mathit{VBS(S)}$ to denote the solver that would be
obtained by taking, for each benchmark, the fastest solver on the given benchmark.
In \cref{fig:vbsplots}, we provide cactus plots showing the impact of \retro on two instances of Virtual Best
Solvers; 
the plots for \zstrre, CVC4, and CVC5 are similar, with the one for \zstrre
being closer to \cref{fig:vbs-z3} while the ones for CVC4/5 being closer to \cref{fig:vbs}.
As we can see, the plots show that our approach can significantly help solvers deal with hard equations.

\figvbs

\paragraph{Discussion}

From the obtained results, we see that our approach works well in \emph{difficult
cases}, where the fast heuristics implemented in state-of-the-art solvers are
not sufficient to quickly discharge a~formula, which happens in particular when the
(un)satisfiability proof is complex.
Our approach can exploit the symbolic representation of the proof tree and
use it to reduce the redundancy of performing transformations.
Note that we can still beat the heavily optimized Z3, \zstrre, CVC4, and CVC5 written in C++ by
a~Python prototype in those cases.
We believe that implementing our symbolic algorithm as a~part of a~state-of-the-art SMT
solver would push the applicability of string solving even further, especially
for cases of string constraints with a~complex structure, which need to solve
multiple DPLL(T) queries in order to establish the (un)satisfiability of
a~string formula.

\vspace{-0.0mm}
\section{Related Work}\label{sec:related}
\vspace{-0.0mm}

The study of solving string constraint traces back to 1946, when
\cite{quine1946concatenation} showed that the first-order theory of word
equations is undecidable.
\cite{makanin1977problem} achieved a~milestone result
by showing that the class of quantifier-free word equation is decidable.
Since then, several works, e.g.,
\cite{plandowski1999satisfiability,plandowski2006efficient,matiyasevich2008computation,robson1999quadratic,schulz1990makanin,ganesh2012word,ganesh2016undecidability,abdulla2014string,barcelo2013graph,lin2016string,ChenCHLW18WhatIsDecidable,chen2019decision,abdulla2019chain},
consider the decidability and complexity of different classes of string
constraints.
Efficient solving of satisfiability of string constraints is a~challenging problem.
Moreover, decidability of the problem of satisfiability of word equations
combined with length constraints of the form $|x|=|y|$ has already been open for
over 20 years~\cite{buchi1990definability}.

The strong practical motivation led to the rise of several string constraint
solvers that concentrate on solving practical problem instances.
The typical procedure implemented within \emph{DPLL(T)-based} string
solvers
is to split the constraints into simpler sub-cases based on how the solutions
are aligned, combining with powerful techniques for Boolean reasoning to
efficiently explore the resulting exponentially-sized search space.
The case-split rule is usually performed explicitly.
Examples of solvers implementing this approach are 
\norn (\cite{abdulla2014string,abdulla2015norn}),
\trau (\cite{abdulla2018trau}),
\ostrich (\cite{chen2019decision}),
\sloth (\cite{HolikJLRV18}),
CVC4 (\cite{cvc4Tool}),
CVC5 (\cite{cvc5}),
\textsc{Z3str2} (\cite{zheng2017z3str2}),
\textsc{Z3str3} (\cite{BerzishGZ17Z3str3}),
\zthreestrfour (\cite{MoraBKNG21}),
\zstrre (\cite{berzish2021smt}),
S3 (\cite{trinh2014s3}),
S3P (\cite{trinh2016progressive}).
In contrast, our approach performs case-splits symbolically.

Automata and transducers have been used in many approaches and tools for string
solving, such as in
\norn (\cite{abdulla2014string,abdulla2015norn}),
\trau (\cite{abdulla2018trau}),
\ostrich (\cite{chen2019decision}),
\sloth (\cite{HolikJLRV18}),
\slog (\cite{WangTLYJ16StringAutomata}),
\slent (\cite{WangCYJ18}), or
\zstrre (\cite{berzish2021smt}), and also in
string solvers for analyzing string-manipulating programs, such as
ABC (\cite{aydin2018parameterized}) and Stranger (\cite{yu2010stranger}), which
soundly over-approximate string constraints using
transducers~(\cite{yu2016optimal}).
The main difference of these approaches to ours is that they use transducers to
encode possible models (solutions) to the string constraints, while we use
automata and transducers to encode the string constraint transformations.

Other approaches for solving string constraints include reducing the constraints
to the SMT theory of bit vectors (e.g.,
\cite{zheng2017z3str2,BerzishGZ17Z3str3,MoraBKNG21,KiezunGAGHE12HAMPI}),
the theory of arrays (e.g., \cite{LiG13}), or SAT-solving
(e.g., \cite{DayEKMNP19,AmadiniGST17,ScottFPS17}).
Not so many approaches are based on algebraic approaches, such as the Nielsen
transformation.
In addition to our approach, it is also used as the basis of the work of \cite{LeH18}.
On the other hand, the Nielsen transformation (\cite{Nielsen17}) is used by some
tools that implement different approaches to discharge
quadratic equations (e.g., \ostrich of \cite{chen2019decision}).
Complex rewriting rules are used, e.g., when dealing with regular constraints in
CVC5 (\cite{NotzliRBBT22}).

\vspace{-0.0mm}
\section*{Acknowledgements}\label{sec:label}
\vspace{-0.0mm}

We thank
Mohamed Faouzi Atig for discussing the topic.
This work has been partially supported by
the National Natural Science Foundation of China (grant no.\@ 61836005),
the CAS Project for Young Scientists in Basic Research (grant no.\@ YSBR-040),
the Czech Science Foundation project 19-24397S,
the FIT BUT internal project FIT-S-20-6427, and
the project of Ministry of Science and Technology, Taiwan (grant no.\ MOST-109-2628-E-001-001-MY3).
\newline\protect\includegraphics[height=8pt]{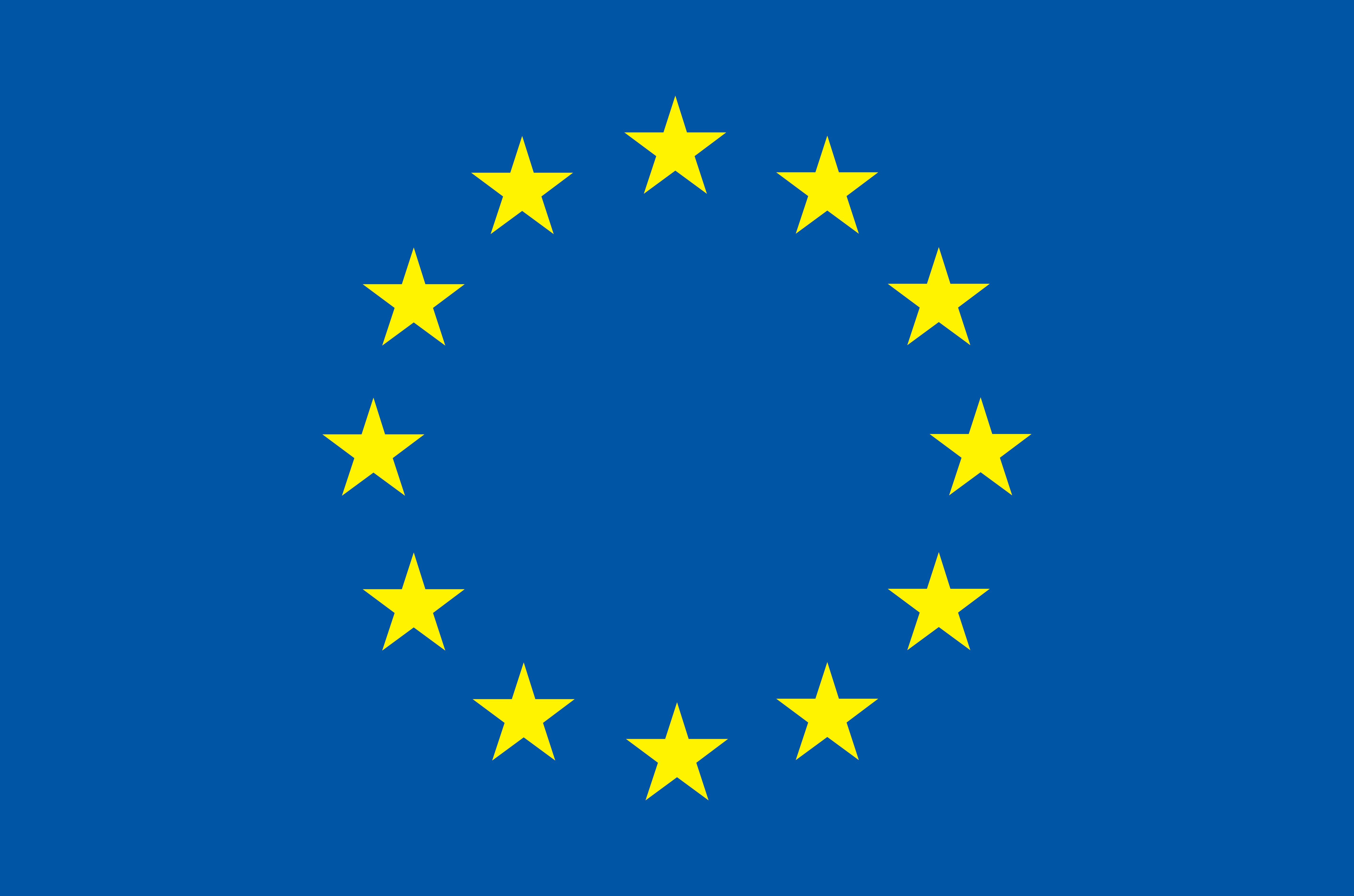} This work is part of the European Union’s 
Horizon 2020 research and innovation programme under the 
Marie Sk\l{}odowska-Curie 
grant no.\@ 101008233.

\bibliographystyle{elsarticle-harv}
\bibliography{bibliography}

\begin{thebibliography}{87}
\expandafter\ifx\csname natexlab\endcsname\relax\def\natexlab#1{#1}\fi
\providecommand{\url}[1]{\texttt{#1}}
\providecommand{\href}[2]{#2}
\providecommand{\path}[1]{#1}
\providecommand{\DOIprefix}{doi:}
\providecommand{\ArXivprefix}{arXiv:}
\providecommand{\URLprefix}{URL: }
\providecommand{\Pubmedprefix}{pmid:}
\providecommand{\doi}[1]{\href{http://dx.doi.org/#1}{\path{#1}}}
\providecommand{\Pubmed}[1]{\href{pmid:#1}{\path{#1}}}
\providecommand{\bibinfo}[2]{#2}
\ifx\xfnm\relax \def\xfnm[#1]{\unskip,\space#1}\fi
\bibitem[{Abdulla(2012)}]{Abdulla12}
\bibinfo{author}{Abdulla, P.A.}, \bibinfo{year}{2012}.
\newblock \bibinfo{title}{Regular model checking}.
\newblock \bibinfo{journal}{{STTT}} \bibinfo{volume}{14},
  \bibinfo{pages}{109--118}.
\newblock \DOIprefix\doi{10.1007/s10009-011-0216-8}.
\bibitem[{Abdulla et~al.(2017)Abdulla, Atig, Chen, Diep, Hol{\'{\i}}k, Rezine
  and R{\"{u}}mmer}]{AbdullaACDHRR17FlattenConquer}
\bibinfo{author}{Abdulla, P.A.}, \bibinfo{author}{Atig, M.F.},
  \bibinfo{author}{Chen, Y.}, \bibinfo{author}{Diep, B.P.},
  \bibinfo{author}{Hol{\'{\i}}k, L.}, \bibinfo{author}{Rezine, A.},
  \bibinfo{author}{R{\"{u}}mmer, P.}, \bibinfo{year}{2017}.
\newblock \bibinfo{title}{Flatten and conquer: a framework for efficient
  analysis of string constraints}, in: \bibinfo{booktitle}{Proceedings of the
  38th {ACM} {SIGPLAN} Conference on Programming Language Design and
  Implementation, {PLDI} 2017}, \bibinfo{publisher}{{ACM}}. pp.
  \bibinfo{pages}{602--617}.
\newblock \DOIprefix\doi{10.1145/3062341.3062384}.
\bibitem[{Abdulla et~al.(2018)Abdulla, Atig, Chen, Diep, Hol{\'\i}k, Rezine and
  R{\"u}mmer}]{abdulla2018trau}
\bibinfo{author}{Abdulla, P.A.}, \bibinfo{author}{Atig, M.F.},
  \bibinfo{author}{Chen, Y.F.}, \bibinfo{author}{Diep, B.P.},
  \bibinfo{author}{Hol{\'\i}k, L.}, \bibinfo{author}{Rezine, A.},
  \bibinfo{author}{R{\"u}mmer, P.}, \bibinfo{year}{2018}.
\newblock \bibinfo{title}{Trau: {SMT} solver for string constraints}, in:
  \bibinfo{booktitle}{2018 Formal Methods in Computer Aided Design (FMCAD)},
  \bibinfo{organization}{IEEE}. pp. \bibinfo{pages}{1--5}.
\bibitem[{Abdulla et~al.(2014)Abdulla, Atig, Chen, Hol{\'\i}k, Rezine,
  R{\"u}mmer and Stenman}]{abdulla2014string}
\bibinfo{author}{Abdulla, P.A.}, \bibinfo{author}{Atig, M.F.},
  \bibinfo{author}{Chen, Y.F.}, \bibinfo{author}{Hol{\'\i}k, L.},
  \bibinfo{author}{Rezine, A.}, \bibinfo{author}{R{\"u}mmer, P.},
  \bibinfo{author}{Stenman, J.}, \bibinfo{year}{2014}.
\newblock \bibinfo{title}{String constraints for verification}, in:
  \bibinfo{booktitle}{International Conference on Computer Aided Verification},
  \bibinfo{organization}{Springer}. pp. \bibinfo{pages}{150--166}.
\bibitem[{Abdulla et~al.(2015)Abdulla, Atig, Chen, Hol{\'\i}k, Rezine,
  R{\"u}mmer and Stenman}]{abdulla2015norn}
\bibinfo{author}{Abdulla, P.A.}, \bibinfo{author}{Atig, M.F.},
  \bibinfo{author}{Chen, Y.F.}, \bibinfo{author}{Hol{\'\i}k, L.},
  \bibinfo{author}{Rezine, A.}, \bibinfo{author}{R{\"u}mmer, P.},
  \bibinfo{author}{Stenman, J.}, \bibinfo{year}{2015}.
\newblock \bibinfo{title}{Norn: An {SMT} solver for string constraints}, in:
  \bibinfo{booktitle}{International Conference on Computer Aided Verification},
  \bibinfo{organization}{Springer}. pp. \bibinfo{pages}{462--469}.
\bibitem[{Abdulla et~al.(2019)Abdulla, Atig, Diep, Hol{\'\i}k and
  Jank\r{u}}]{abdulla2019chain}
\bibinfo{author}{Abdulla, P.A.}, \bibinfo{author}{Atig, M.F.},
  \bibinfo{author}{Diep, B.P.}, \bibinfo{author}{Hol{\'\i}k, L.},
  \bibinfo{author}{Jank\r{u}, P.}, \bibinfo{year}{2019}.
\newblock \bibinfo{title}{Chain-free string constraints}, in:
  \bibinfo{booktitle}{ATVA}, \bibinfo{organization}{Springer}. pp.
  \bibinfo{pages}{277--293}.
\bibitem[{Amadini et~al.(2017)Amadini, Gange, Stuckey and Tack}]{AmadiniGST17}
\bibinfo{author}{Amadini, R.}, \bibinfo{author}{Gange, G.},
  \bibinfo{author}{Stuckey, P.J.}, \bibinfo{author}{Tack, G.},
  \bibinfo{year}{2017}.
\newblock \bibinfo{title}{A novel approach to string constraint solving}, in:
  \bibinfo{editor}{Beck, J.C.} (Ed.), \bibinfo{booktitle}{Principles and
  Practice of Constraint Programming - 23rd International Conference, {CP}
  2017, Melbourne, VIC, Australia, August 28 - September 1, 2017, Proceedings},
  \bibinfo{publisher}{Springer}. pp. \bibinfo{pages}{3--20}.
\newblock \URLprefix \url{https://doi.org/10.1007/978-3-319-66158-2\_1},
  \DOIprefix\doi{10.1007/978-3-319-66158-2\_1}.
\bibitem[{Aydin et~al.(2018)Aydin, Eiers, Bang, Brennan, Gavrilov, Bultan and
  Yu}]{aydin2018parameterized}
\bibinfo{author}{Aydin, A.}, \bibinfo{author}{Eiers, W.},
  \bibinfo{author}{Bang, L.}, \bibinfo{author}{Brennan, T.},
  \bibinfo{author}{Gavrilov, M.}, \bibinfo{author}{Bultan, T.},
  \bibinfo{author}{Yu, F.}, \bibinfo{year}{2018}.
\newblock \bibinfo{title}{Parameterized model counting for string and numeric
  constraints}, in: \bibinfo{booktitle}{Proceedings of the 2018 26th ACM Joint
  Meeting on European Software Engineering Conference and Symposium on the
  Foundations of Software Engineering}, \bibinfo{organization}{ACM}. pp.
  \bibinfo{pages}{400--410}.
\bibitem[{Barbosa et~al.(2022)Barbosa, Barrett, Brain, Kremer, Lachnitt, Mann,
  Mohamed, Mohamed, Niemetz, N{\"o}tzli, Ozdemir, Preiner, Reynolds, Sheng,
  Tinelli and Zohar}]{cvc5}
\bibinfo{author}{Barbosa, H.}, \bibinfo{author}{Barrett, C.},
  \bibinfo{author}{Brain, M.}, \bibinfo{author}{Kremer, G.},
  \bibinfo{author}{Lachnitt, H.}, \bibinfo{author}{Mann, M.},
  \bibinfo{author}{Mohamed, A.}, \bibinfo{author}{Mohamed, M.},
  \bibinfo{author}{Niemetz, A.}, \bibinfo{author}{N{\"o}tzli, A.},
  \bibinfo{author}{Ozdemir, A.}, \bibinfo{author}{Preiner, M.},
  \bibinfo{author}{Reynolds, A.}, \bibinfo{author}{Sheng, Y.},
  \bibinfo{author}{Tinelli, C.}, \bibinfo{author}{Zohar, Y.},
  \bibinfo{year}{2022}.
\newblock \bibinfo{title}{{CVC5}: A versatile and industrial-strength {SMT}
  solver}, in: \bibinfo{editor}{Fisman, D.}, \bibinfo{editor}{Rosu, G.} (Eds.),
  \bibinfo{booktitle}{Tools and Algorithms for the Construction and Analysis of
  Systems}, \bibinfo{publisher}{Springer International Publishing},
  \bibinfo{address}{Cham}. pp. \bibinfo{pages}{415--442}.
\bibitem[{Barcel{\'o} et~al.(2013)Barcel{\'o}, Figueira and
  Libkin}]{barcelo2013graph}
\bibinfo{author}{Barcel{\'o}, P.}, \bibinfo{author}{Figueira, D.},
  \bibinfo{author}{Libkin, L.}, \bibinfo{year}{2013}.
\newblock \bibinfo{title}{Graph logics with rational relations}.
\newblock \bibinfo{journal}{arXiv preprint arXiv:1304.4150} .
\bibitem[{Barrett et~al.(2011)Barrett, Conway, Deters, Hadarean, Jovanovic,
  King, Reynolds and Tinelli}]{cvc4Tool}
\bibinfo{author}{Barrett, C.W.}, \bibinfo{author}{Conway, C.L.},
  \bibinfo{author}{Deters, M.}, \bibinfo{author}{Hadarean, L.},
  \bibinfo{author}{Jovanovic, D.}, \bibinfo{author}{King, T.},
  \bibinfo{author}{Reynolds, A.}, \bibinfo{author}{Tinelli, C.},
  \bibinfo{year}{2011}.
\newblock \bibinfo{title}{{CVC4}}, in: \bibinfo{editor}{Gopalakrishnan, G.},
  \bibinfo{editor}{Qadeer, S.} (Eds.), \bibinfo{booktitle}{Computer Aided
  Verification - 23rd International Conference, {CAV} 2011, Snowbird, UT, USA,
  July 14-20, 2011. Proceedings}, \bibinfo{publisher}{Springer}. pp.
  \bibinfo{pages}{171--177}.
\newblock \URLprefix \url{https://doi.org/10.1007/978-3-642-22110-1\_14},
  \DOIprefix\doi{10.1007/978-3-642-22110-1\_14}.
\bibitem[{Berstel(1979)}]{Berstel79}
\bibinfo{author}{Berstel, J.}, \bibinfo{year}{1979}.
\newblock \bibinfo{title}{Transductions and context-free languages}.
  volume~\bibinfo{volume}{38} of \textit{\bibinfo{series}{Teubner
  Studienb{\"{u}}cher : Informatik}}.
\newblock \bibinfo{publisher}{Teubner}.
\newblock \URLprefix \url{http://www.worldcat.org/oclc/06364613}.
\bibitem[{Berzish et~al.(2017)Berzish, Ganesh and Zheng}]{BerzishGZ17Z3str3}
\bibinfo{author}{Berzish, M.}, \bibinfo{author}{Ganesh, V.},
  \bibinfo{author}{Zheng, Y.}, \bibinfo{year}{2017}.
\newblock \bibinfo{title}{Z3str3: {A} string solver with theory-aware
  heuristics}, in: \bibinfo{booktitle}{2017 Formal Methods in Computer Aided
  Design, {FMCAD} 2017}, pp. \bibinfo{pages}{55--59}.
\newblock \DOIprefix\doi{10.23919/FMCAD.2017.8102241}.
\bibitem[{Berzish et~al.(2021)Berzish, Kulczynski, Mora, Manea, Day, Nowotka
  and Ganesh}]{berzish2021smt}
\bibinfo{author}{Berzish, M.}, \bibinfo{author}{Kulczynski, M.},
  \bibinfo{author}{Mora, F.}, \bibinfo{author}{Manea, F.},
  \bibinfo{author}{Day, J.D.}, \bibinfo{author}{Nowotka, D.},
  \bibinfo{author}{Ganesh, V.}, \bibinfo{year}{2021}.
\newblock \bibinfo{title}{An {SMT} solver for regular expressions and linear
  arithmetic over string length}, in: \bibinfo{booktitle}{International
  Conference on Computer Aided Verification}, \bibinfo{publisher}{Springer}.
  pp. \bibinfo{pages}{289--312}.
\bibitem[{Bj{\o}rner et~al.(2009)Bj{\o}rner, Tillmann and
  Voronkov}]{BjornerTV09PathFeasibility}
\bibinfo{author}{Bj{\o}rner, N.}, \bibinfo{author}{Tillmann, N.},
  \bibinfo{author}{Voronkov, A.}, \bibinfo{year}{2009}.
\newblock \bibinfo{title}{Path feasibility analysis for string-manipulating
  programs}, in: \bibinfo{booktitle}{Tools and Algorithms for the Construction
  and Analysis of Systems, 15th International Conference, {TACAS} 2009},
  \bibinfo{publisher}{Springer}. pp. \bibinfo{pages}{307--321}.
\newblock \DOIprefix\doi{10.1007/978-3-642-00768-2_27}.
\bibitem[{Blotsky et~al.(2018)Blotsky, Mora, Berzish, Zheng, Kabir and
  Ganesh}]{blotsky2018stringfuzz}
\bibinfo{author}{Blotsky, D.}, \bibinfo{author}{Mora, F.},
  \bibinfo{author}{Berzish, M.}, \bibinfo{author}{Zheng, Y.},
  \bibinfo{author}{Kabir, I.}, \bibinfo{author}{Ganesh, V.},
  \bibinfo{year}{2018}.
\newblock \bibinfo{title}{Stringfuzz: A fuzzer for string solvers}, in:
  \bibinfo{booktitle}{International Conference on Computer Aided Verification},
  \bibinfo{publisher}{Springer}. pp. \bibinfo{pages}{45--51}.
\bibitem[{Bouajjani et~al.(2012)Bouajjani, Habermehl, Rogalewicz and
  Vojnar}]{BouajjaniHRV12}
\bibinfo{author}{Bouajjani, A.}, \bibinfo{author}{Habermehl, P.},
  \bibinfo{author}{Rogalewicz, A.}, \bibinfo{author}{Vojnar, T.},
  \bibinfo{year}{2012}.
\newblock \bibinfo{title}{Abstract regular (tree) model checking}.
\newblock \bibinfo{journal}{{STTT}} \bibinfo{volume}{14},
  \bibinfo{pages}{167--191}.
\newblock \URLprefix \url{https://doi.org/10.1007/s10009-011-0205-y},
  \DOIprefix\doi{10.1007/s10009-011-0205-y}.
\bibitem[{Bouajjani et~al.(2000)Bouajjani, Jonsson, Nilsson and
  Touili}]{BouajjaniJNT00}
\bibinfo{author}{Bouajjani, A.}, \bibinfo{author}{Jonsson, B.},
  \bibinfo{author}{Nilsson, M.}, \bibinfo{author}{Touili, T.},
  \bibinfo{year}{2000}.
\newblock \bibinfo{title}{Regular model checking}, in:
  \bibinfo{editor}{Emerson, E.A.}, \bibinfo{editor}{Sistla, A.P.} (Eds.),
  \bibinfo{booktitle}{Computer Aided Verification},
  \bibinfo{publisher}{Springer Berlin Heidelberg}, \bibinfo{address}{Berlin,
  Heidelberg}. pp. \bibinfo{pages}{403--418}.
\bibitem[{B{\"u}chi(1960)}]{buchi1960weakSOAautomata}
\bibinfo{author}{B{\"u}chi, J.R.}, \bibinfo{year}{1960}.
\newblock \bibinfo{title}{Weak second-order arithmetic and finite automata}.
\newblock \bibinfo{journal}{Mathematical Logic Quarterly} \bibinfo{volume}{6},
  \bibinfo{pages}{66--92}.
\newblock \DOIprefix\doi{https://doi.org/10.1002/malq.19600060105}.
\bibitem[{B{\"u}chi and Senger(1990)}]{buchi1990definability}
\bibinfo{author}{B{\"u}chi, J.R.}, \bibinfo{author}{Senger, S.},
  \bibinfo{year}{1990}.
\newblock \bibinfo{title}{Definability in the existential theory of
  concatenation and undecidable extensions of this theory}, in:
  \bibinfo{booktitle}{The Collected Works of J. Richard B{\"u}chi}.
  \bibinfo{publisher}{Springer}, pp. \bibinfo{pages}{671--683}.
\bibitem[{Cadar et~al.(2006)Cadar, Ganesh, Pawlowski, Dill and
  Engler}]{CadarGPDE06Exe}
\bibinfo{author}{Cadar, C.}, \bibinfo{author}{Ganesh, V.},
  \bibinfo{author}{Pawlowski, P.M.}, \bibinfo{author}{Dill, D.L.},
  \bibinfo{author}{Engler, D.R.}, \bibinfo{year}{2006}.
\newblock \bibinfo{title}{Exe: Automatically generating inputs of death}, in:
  \bibinfo{booktitle}{Proceedings of the 13th ACM Conference on Computer and
  Communications Security}, \bibinfo{publisher}{Association for Computing
  Machinery}. pp. \bibinfo{pages}{322--335}.
\newblock \DOIprefix\doi{10.1145/1180405.1180445}.
\bibitem[{Chen et~al.(2018)Chen, Chen, Hague, Lin and
  Wu}]{ChenCHLW18WhatIsDecidable}
\bibinfo{author}{Chen, T.}, \bibinfo{author}{Chen, Y.}, \bibinfo{author}{Hague,
  M.}, \bibinfo{author}{Lin, A.W.}, \bibinfo{author}{Wu, Z.},
  \bibinfo{year}{2018}.
\newblock \bibinfo{title}{What is decidable about string constraints with the
  {ReplaceAll} function}.
\newblock \bibinfo{journal}{{PACMPL}} \bibinfo{volume}{2},
  \bibinfo{pages}{3:1--3:29}.
\newblock \DOIprefix\doi{10.1145/3158091}.
\bibitem[{Chen et~al.(2019a)Chen, Hague, Lin, R{\"{u}}mmer and
  Wu}]{ChenHLRW19DecisionProcedures}
\bibinfo{author}{Chen, T.}, \bibinfo{author}{Hague, M.}, \bibinfo{author}{Lin,
  A.W.}, \bibinfo{author}{R{\"{u}}mmer, P.}, \bibinfo{author}{Wu, Z.},
  \bibinfo{year}{2019}a.
\newblock \bibinfo{title}{Decision procedures for path feasibility of
  string-manipulating programs with complex operations}.
\newblock \bibinfo{journal}{{PACMPL}} \bibinfo{volume}{3},
  \bibinfo{pages}{49:1--49:30}.
\newblock \DOIprefix\doi{10.1145/3290362}.
\bibitem[{Chen et~al.(2019b)Chen, Hague, Lin, R{\"u}mmer and
  Wu}]{chen2019decision}
\bibinfo{author}{Chen, T.}, \bibinfo{author}{Hague, M.}, \bibinfo{author}{Lin,
  A.W.}, \bibinfo{author}{R{\"u}mmer, P.}, \bibinfo{author}{Wu, Z.},
  \bibinfo{year}{2019}b.
\newblock \bibinfo{title}{Decision procedures for path feasibility of
  string-manipulating programs with complex operations}.
\newblock \bibinfo{journal}{Proceedings of the ACM on Programming Languages}
  \bibinfo{volume}{3}, \bibinfo{pages}{1--30}.
\bibitem[{Chen et~al.(2020)Chen, Havlena, Leng{\'{a}}l and Turrini}]{ChenHLT20}
\bibinfo{author}{Chen, Y.}, \bibinfo{author}{Havlena, V.},
  \bibinfo{author}{Leng{\'{a}}l, O.}, \bibinfo{author}{Turrini, A.},
  \bibinfo{year}{2020}.
\newblock \bibinfo{title}{A symbolic algorithm for the case-split rule in
  string constraint solving}, in: \bibinfo{editor}{d.~S.~Oliveira, B.C.} (Ed.),
  \bibinfo{booktitle}{Programming Languages and Systems - 18th Asian Symposium,
  {APLAS} 2020, Fukuoka, Japan, November 30 - December 2, 2020, Proceedings},
  \bibinfo{publisher}{Springer}. pp. \bibinfo{pages}{343--363}.
\newblock \URLprefix \url{https://doi.org/10.1007/978-3-030-64437-6\_18},
  \DOIprefix\doi{10.1007/978-3-030-64437-6_18}.
\bibitem[{Day et~al.(2019)Day, Ehlers, Kulczynski, Manea, Nowotka and
  Poulsen}]{DayEKMNP19}
\bibinfo{author}{Day, J.D.}, \bibinfo{author}{Ehlers, T.},
  \bibinfo{author}{Kulczynski, M.}, \bibinfo{author}{Manea, F.},
  \bibinfo{author}{Nowotka, D.}, \bibinfo{author}{Poulsen, D.B.},
  \bibinfo{year}{2019}.
\newblock \bibinfo{title}{On solving word equations using {SAT}}, in:
  \bibinfo{editor}{Filiot, E.}, \bibinfo{editor}{Jungers, R.M.},
  \bibinfo{editor}{Potapov, I.} (Eds.), \bibinfo{booktitle}{Reachability
  Problems - 13th International Conference, {RP} 2019, Brussels, Belgium,
  September 11-13, 2019, Proceedings}, \bibinfo{publisher}{Springer}. pp.
  \bibinfo{pages}{93--106}.
\newblock \URLprefix \url{https://doi.org/10.1007/978-3-030-30806-3\_8},
  \DOIprefix\doi{10.1007/978-3-030-30806-3\_8}.
\bibitem[{Demri and Lazic(2009)}]{DemriL09}
\bibinfo{author}{Demri, S.}, \bibinfo{author}{Lazic, R.}, \bibinfo{year}{2009}.
\newblock \bibinfo{title}{{LTL} with the freeze quantifier and register
  automata}.
\newblock \bibinfo{journal}{{ACM} Trans. Comput. Log.} \bibinfo{volume}{10},
  \bibinfo{pages}{16:1--16:30}.
\newblock \URLprefix \url{https://doi.org/10.1145/1507244.1507246},
  \DOIprefix\doi{10.1145/1507244.1507246}.
\bibitem[{Diekert(2002)}]{Diekert02Makanin}
\bibinfo{author}{Diekert, V.}, \bibinfo{year}{2002}.
\newblock \bibinfo{title}{Makanin's Algorithm}. \bibinfo{publisher}{Cambridge
  University Press}.
\newblock Encyclopedia of Mathematics and its Applications, pp.
  \bibinfo{pages}{387--442}.
\newblock \DOIprefix\doi{10.1017/CBO9781107326019.013}.
\bibitem[{Durnev and Zetkina(2009)}]{DurnevZ09EquationsFreeGroups}
\bibinfo{author}{Durnev, V.G.}, \bibinfo{author}{Zetkina, O.V.},
  \bibinfo{year}{2009}.
\newblock \bibinfo{title}{On equations in free semigroups with certain
  constraints on their solutions}.
\newblock \bibinfo{journal}{J. Math. Sci.} \bibinfo{volume}{158},
  \bibinfo{pages}{671--676}.
\newblock \DOIprefix\doi{10.1007/s10958-009-9409-z}.
\bibitem[{Ganesh and Berzish(2016)}]{ganesh2016undecidability}
\bibinfo{author}{Ganesh, V.}, \bibinfo{author}{Berzish, M.},
  \bibinfo{year}{2016}.
\newblock \bibinfo{title}{Undecidability of a theory of strings, linear
  arithmetic over length, and string-number conversion}.
\newblock \bibinfo{journal}{arXiv preprint arXiv:1605.09442} .
\bibitem[{Ganesh et~al.(2012)Ganesh, Minnes, Solar-Lezama and
  Rinard}]{ganesh2012word}
\bibinfo{author}{Ganesh, V.}, \bibinfo{author}{Minnes, M.},
  \bibinfo{author}{Solar-Lezama, A.}, \bibinfo{author}{Rinard, M.},
  \bibinfo{year}{2012}.
\newblock \bibinfo{title}{Word equations with length constraints: what’s
  decidable?}, in: \bibinfo{booktitle}{Haifa Verification Conference},
  \bibinfo{organization}{Springer}. pp. \bibinfo{pages}{209--226}.
\bibitem[{Glenn and Gasarch(1996)}]{GlennG96}
\bibinfo{author}{Glenn, J.}, \bibinfo{author}{Gasarch, W.I.},
  \bibinfo{year}{1996}.
\newblock \bibinfo{title}{Implementing {WS1S} via finite automata}, in:
  \bibinfo{editor}{Raymond, D.R.}, \bibinfo{editor}{Wood, D.},
  \bibinfo{editor}{Yu, S.} (Eds.), \bibinfo{booktitle}{Automata Implementation,
  First International Workshop on Implementing Automata, {WIA} '96, London,
  Ontario, Canada, August 29-31, 1996, Revised Papers},
  \bibinfo{publisher}{Springer}. pp. \bibinfo{pages}{50--63}.
\newblock \URLprefix \url{https://doi.org/10.1007/3-540-63174-7\_5},
  \DOIprefix\doi{10.1007/3-540-63174-7\_5}.
\bibitem[{Godefroid et~al.(2005)Godefroid, Klarlund and
  Sen}]{GodefroidKS05DART}
\bibinfo{author}{Godefroid, P.}, \bibinfo{author}{Klarlund, N.},
  \bibinfo{author}{Sen, K.}, \bibinfo{year}{2005}.
\newblock \bibinfo{title}{{DART}: Directed automated random testing}.
\newblock \bibinfo{journal}{SIGPLAN Not.} \bibinfo{volume}{40},
  \bibinfo{pages}{213--223}.
\newblock \DOIprefix\doi{10.1145/1064978.1065036}.
\bibitem[{Gulwani et~al.(2011)Gulwani, Jha, Tiwari and
  Venkatesan}]{gulwani2011}
\bibinfo{author}{Gulwani, S.}, \bibinfo{author}{Jha, S.},
  \bibinfo{author}{Tiwari, A.}, \bibinfo{author}{Venkatesan, R.},
  \bibinfo{year}{2011}.
\newblock \bibinfo{title}{Synthesis of loop-free programs}.
\newblock \bibinfo{journal}{SIGPLAN Not.} \bibinfo{volume}{46},
  \bibinfo{pages}{62--73}.
\newblock \DOIprefix\doi{10.1145/1993316.1993506}.
\bibitem[{Gulwani et~al.(2008)Gulwani, Srivastava and
  Venkatesan}]{gulwani2008program}
\bibinfo{author}{Gulwani, S.}, \bibinfo{author}{Srivastava, S.},
  \bibinfo{author}{Venkatesan, R.}, \bibinfo{year}{2008}.
\newblock \bibinfo{title}{Program analysis as constraint solving}, in:
  \bibinfo{booktitle}{PLDI'08}.
\bibitem[{Hol{\'{\i}}k et~al.(2018)Hol{\'{\i}}k, Jank\r{u}, Lin, R{\"{u}}mmer
  and Vojnar}]{HolikJLRV18}
\bibinfo{author}{Hol{\'{\i}}k, L.}, \bibinfo{author}{Jank\r{u}, P.},
  \bibinfo{author}{Lin, A.W.}, \bibinfo{author}{R{\"{u}}mmer, P.},
  \bibinfo{author}{Vojnar, T.}, \bibinfo{year}{2018}.
\newblock \bibinfo{title}{String constraints with concatenation and transducers
  solved efficiently}.
\newblock \bibinfo{journal}{Proc. {ACM} Program. Lang.} \bibinfo{volume}{2},
  \bibinfo{pages}{4:1--4:32}.
\newblock \URLprefix \url{https://doi.org/10.1145/3158092},
  \DOIprefix\doi{10.1145/3158092}.
\bibitem[{Kaminski and Francez(1994)}]{KaminskiF94}
\bibinfo{author}{Kaminski, M.}, \bibinfo{author}{Francez, N.},
  \bibinfo{year}{1994}.
\newblock \bibinfo{title}{Finite-memory automata}.
\newblock \bibinfo{journal}{Theor. Comput. Sci.} \bibinfo{volume}{134},
  \bibinfo{pages}{329--363}.
\newblock \URLprefix \url{https://doi.org/10.1016/0304-3975(94)90242-9},
  \DOIprefix\doi{10.1016/0304-3975(94)90242-9}.
\bibitem[{Karp(1972)}]{Karp72}
\bibinfo{author}{Karp, R.M.}, \bibinfo{year}{1972}.
\newblock \bibinfo{title}{Reducibility among combinatorial problems}, in:
  \bibinfo{editor}{Miller, R.E.}, \bibinfo{editor}{Thatcher, J.W.} (Eds.),
  \bibinfo{booktitle}{Proceedings of a symposium on the Complexity of Computer
  Computations, held March 20-22, 1972, at the {IBM} Thomas J. Watson Research
  Center, Yorktown Heights, New York, {USA}}, \bibinfo{publisher}{Plenum Press,
  New York}. pp. \bibinfo{pages}{85--103}.
\newblock \URLprefix \url{https://doi.org/10.1007/978-1-4684-2001-2\_9},
  \DOIprefix\doi{10.1007/978-1-4684-2001-2\_9}.
\bibitem[{Kesten et~al.(2001)Kesten, Maler, Marcus, Pnueli and
  Shahar}]{KestenMMPS01}
\bibinfo{author}{Kesten, Y.}, \bibinfo{author}{Maler, O.},
  \bibinfo{author}{Marcus, M.}, \bibinfo{author}{Pnueli, A.},
  \bibinfo{author}{Shahar, E.}, \bibinfo{year}{2001}.
\newblock \bibinfo{title}{Symbolic model checking with rich assertional
  languages}.
\newblock \bibinfo{journal}{Theor. Comput. Sci.} \bibinfo{volume}{256},
  \bibinfo{pages}{93--112}.
\newblock \URLprefix \url{https://doi.org/10.1016/S0304-3975(00)00103-1},
  \DOIprefix\doi{10.1016/S0304-3975(00)00103-1}.
\bibitem[{Kiezun et~al.(2012)Kiezun, Ganesh, Artzi, Guo, Hooimeijer and
  Ernst}]{KiezunGAGHE12HAMPI}
\bibinfo{author}{Kiezun, A.}, \bibinfo{author}{Ganesh, V.},
  \bibinfo{author}{Artzi, S.}, \bibinfo{author}{Guo, P.J.},
  \bibinfo{author}{Hooimeijer, P.}, \bibinfo{author}{Ernst, M.D.},
  \bibinfo{year}{2012}.
\newblock \bibinfo{title}{{HAMPI:} {A} solver for word equations over strings,
  regular expressions, and context-free grammars}.
\newblock \bibinfo{journal}{{ACM} Trans. Softw. Eng. Methodol.}
  \bibinfo{volume}{21}, \bibinfo{pages}{25:1--25:28}.
\newblock \DOIprefix\doi{10.1145/2377656.2377662}.
\bibitem[{King(1976)}]{King76SymbolicExecution}
\bibinfo{author}{King, J.C.}, \bibinfo{year}{1976}.
\newblock \bibinfo{title}{Symbolic execution and program testing}.
\newblock \bibinfo{journal}{Commun. ACM} \bibinfo{volume}{19},
  \bibinfo{pages}{385--394}.
\newblock \DOIprefix\doi{10.1145/360248.360252}.
\bibitem[{Knoth et~al.(2019)Knoth, Wang, Polikarpova and Hoffmann}]{knoth2019}
\bibinfo{author}{Knoth, T.}, \bibinfo{author}{Wang, D.},
  \bibinfo{author}{Polikarpova, N.}, \bibinfo{author}{Hoffmann, J.},
  \bibinfo{year}{2019}.
\newblock \bibinfo{title}{Resource-guided program synthesis}, in:
  \bibinfo{booktitle}{Proceedings of the 40th ACM SIGPLAN Conference on
  Programming Language Design and Implementation},
  \bibinfo{publisher}{Association for Computing Machinery},
  \bibinfo{address}{New York, NY, USA}. pp. \bibinfo{pages}{253--268}.
\newblock \URLprefix \url{https://doi.org/10.1145/3314221.3314602},
  \DOIprefix\doi{10.1145/3314221.3314602}.
\bibitem[{Kosovskii(1976)}]{Kosovskii76EquationsFreeGroups}
\bibinfo{author}{Kosovskii, N.K.}, \bibinfo{year}{1976}.
\newblock \bibinfo{title}{Properties of the solutions of equations in a free
  semigroup}.
\newblock \bibinfo{journal}{J. Math. Sci.} \bibinfo{volume}{6}.
\newblock \DOIprefix\doi{10.1007/BF01084074}.
\bibitem[{Le and He(2018)}]{LeH18}
\bibinfo{author}{Le, Q.L.}, \bibinfo{author}{He, M.}, \bibinfo{year}{2018}.
\newblock \bibinfo{title}{A decision procedure for string logic with quadratic
  equations, regular expressions and length constraints}, in:
  \bibinfo{editor}{Ryu, S.} (Ed.), \bibinfo{booktitle}{Programming Languages
  and Systems - 16th Asian Symposium, {APLAS} 2018, Wellington, New Zealand,
  December 2-6, 2018, Proceedings}, \bibinfo{publisher}{Springer}. pp.
  \bibinfo{pages}{350--372}.
\newblock \URLprefix \url{https://doi.org/10.1007/978-3-030-02768-1_19},
  \DOIprefix\doi{10.1007/978-3-030-02768-1_19}.
\bibitem[{Levi(1944)}]{Levi44}
\bibinfo{author}{Levi, F.W.}, \bibinfo{year}{1944}.
\newblock \bibinfo{title}{On semigroups}.
\newblock \bibinfo{journal}{Bulletin of the Calcutta Mathematical Society}
  \bibinfo{volume}{36}, \bibinfo{pages}{141--146}.
\bibitem[{Li and Ghosh(2013)}]{LiG13}
\bibinfo{author}{Li, G.}, \bibinfo{author}{Ghosh, I.}, \bibinfo{year}{2013}.
\newblock \bibinfo{title}{{PASS:} string solving with parameterized array and
  interval automaton}, in: \bibinfo{editor}{Bertacco, V.},
  \bibinfo{editor}{Legay, A.} (Eds.), \bibinfo{booktitle}{Hardware and
  Software: Verification and Testing - 9th International Haifa Verification
  Conference, {HVC} 2013, Haifa, Israel, November 5-7, 2013, Proceedings},
  \bibinfo{publisher}{Springer}. pp. \bibinfo{pages}{15--31}.
\newblock \URLprefix \url{https://doi.org/10.1007/978-3-319-03077-7\_2},
  \DOIprefix\doi{10.1007/978-3-319-03077-7\_2}.
\bibitem[{Liang et~al.(2014)Liang, Reynolds, Tinelli, Barrett and
  Deters}]{LiangRTBD14DPLLStrings}
\bibinfo{author}{Liang, T.}, \bibinfo{author}{Reynolds, A.},
  \bibinfo{author}{Tinelli, C.}, \bibinfo{author}{Barrett, C.W.},
  \bibinfo{author}{Deters, M.}, \bibinfo{year}{2014}.
\newblock \bibinfo{title}{A {DPLL(T)} theory solver for a theory of strings and
  regular expressions}, in: \bibinfo{booktitle}{Computer Aided Verification -
  26th International Conference, {CAV} 2014}, \bibinfo{publisher}{Springer}.
  pp. \bibinfo{pages}{646--662}.
\newblock \DOIprefix\doi{10.1007/978-3-319-08867-9_43}.
\bibitem[{Lin and Barcel{\'o}(2016)}]{lin2016string}
\bibinfo{author}{Lin, A.W.}, \bibinfo{author}{Barcel{\'o}, P.},
  \bibinfo{year}{2016}.
\newblock \bibinfo{title}{String solving with word equations and transducers:
  towards a logic for analysing mutation xss}, in: \bibinfo{booktitle}{ACM
  SIGPLAN Notices}, \bibinfo{organization}{ACM}. pp. \bibinfo{pages}{123--136}.
\bibitem[{Lin and Majumdar(2018)}]{LinM18QuadraticWordEquations}
\bibinfo{author}{Lin, A.W.}, \bibinfo{author}{Majumdar, R.},
  \bibinfo{year}{2018}.
\newblock \bibinfo{title}{Quadratic word equations with length constraints,
  counter systems, and {Presburger} arithmetic with divisibility}, in:
  \bibinfo{booktitle}{Automated Technology for Verification and Analysis - 16th
  International Symposium, {ATVA} 2018}, \bibinfo{publisher}{Springer}. pp.
  \bibinfo{pages}{352--369}.
\newblock \DOIprefix\doi{10.1007/978-3-030-01090-4_21}.
\bibitem[{Loring et~al.(2019)Loring, Mitchell and Kinder}]{loring2019sound}
\bibinfo{author}{Loring, B.}, \bibinfo{author}{Mitchell, D.},
  \bibinfo{author}{Kinder, J.}, \bibinfo{year}{2019}.
\newblock \bibinfo{title}{Sound regular expression semantics for dynamic
  symbolic execution of javascript}, in: \bibinfo{booktitle}{Proceedings of the
  40th ACM SIGPLAN Conference on Programming Language Design and
  Implementation}, pp. \bibinfo{pages}{425--438}.
\bibitem[{Makanin(1977)}]{makanin1977problem}
\bibinfo{author}{Makanin, G.S.}, \bibinfo{year}{1977}.
\newblock \bibinfo{title}{The problem of solvability of equations in a free
  semigroup}.
\newblock \bibinfo{journal}{Matematicheskii Sbornik} \bibinfo{volume}{145},
  \bibinfo{pages}{147--236}.
\bibitem[{Matiyasevich(2008)}]{matiyasevich2008computation}
\bibinfo{author}{Matiyasevich, Y.}, \bibinfo{year}{2008}.
\newblock \bibinfo{title}{Computation paradigms in light of hilbert's tenth
  problem}, in: \bibinfo{booktitle}{New computational paradigms}.
  \bibinfo{publisher}{Springer}, pp. \bibinfo{pages}{59--85}.
\bibitem[{Matiyasevich(1968)}]{Matiyasevich68Hilbert}
\bibinfo{author}{Matiyasevich, Y.V.}, \bibinfo{year}{1968}.
\newblock \bibinfo{title}{A connection between systems of word and length
  equations and {Hilbert}'s tenth problem}.
\newblock \bibinfo{journal}{Zap. Nauchnykh Semin. POMI} \bibinfo{volume}{8},
  \bibinfo{pages}{132--144}.
\bibitem[{Mora et~al.(2021)Mora, Berzish, Kulczynski, Nowotka and
  Ganesh}]{MoraBKNG21}
\bibinfo{author}{Mora, F.}, \bibinfo{author}{Berzish, M.},
  \bibinfo{author}{Kulczynski, M.}, \bibinfo{author}{Nowotka, D.},
  \bibinfo{author}{Ganesh, V.}, \bibinfo{year}{2021}.
\newblock \bibinfo{title}{{Z3str4}: {A} multi-armed string solver}, in:
  \bibinfo{editor}{Huisman, M.}, \bibinfo{editor}{Pasareanu, C.S.},
  \bibinfo{editor}{Zhan, N.} (Eds.), \bibinfo{booktitle}{Formal Methods - 24th
  International Symposium, {FM} 2021, Virtual Event, November 20-26, 2021,
  Proceedings}, \bibinfo{publisher}{Springer}. pp. \bibinfo{pages}{389--406}.
\newblock \URLprefix \url{https://doi.org/10.1007/978-3-030-90870-6\_21},
  \DOIprefix\doi{10.1007/978-3-030-90870-6\_21}.
\bibitem[{de~Moura and Bj{\o}rner(2008)}]{MouraB08}
\bibinfo{author}{de~Moura, L.M.}, \bibinfo{author}{Bj{\o}rner, N.},
  \bibinfo{year}{2008}.
\newblock \bibinfo{title}{{Z3:} an efficient {SMT} solver}, in:
  \bibinfo{editor}{Ramakrishnan, C.R.}, \bibinfo{editor}{Rehof, J.} (Eds.),
  \bibinfo{booktitle}{Tools and Algorithms for the Construction and Analysis of
  Systems, 14th International Conference, {TACAS} 2008, Held as Part of the
  Joint European Conferences on Theory and Practice of Software, {ETAPS} 2008,
  Budapest, Hungary, March 29-April 6, 2008. Proceedings},
  \bibinfo{publisher}{Springer}. pp. \bibinfo{pages}{337--340}.
\newblock \URLprefix \url{https://doi.org/10.1007/978-3-540-78800-3\_24},
  \DOIprefix\doi{10.1007/978-3-540-78800-3\_24}.
\bibitem[{Neider and Jansen(2013)}]{NeiderJ13}
\bibinfo{author}{Neider, D.}, \bibinfo{author}{Jansen, N.},
  \bibinfo{year}{2013}.
\newblock \bibinfo{title}{Regular model checking using solver technologies and
  automata learning}, in: \bibinfo{editor}{Brat, G.}, \bibinfo{editor}{Rungta,
  N.}, \bibinfo{editor}{Venet, A.} (Eds.), \bibinfo{booktitle}{{NASA} Formal
  Methods, 5th International Symposium, {NFM} 2013, Moffett Field, CA, USA, May
  14-16, 2013. Proceedings}, \bibinfo{publisher}{Springer}. pp.
  \bibinfo{pages}{16--31}.
\newblock \URLprefix \url{https://doi.org/10.1007/978-3-642-38088-4\_2},
  \DOIprefix\doi{10.1007/978-3-642-38088-4\_2}.
\bibitem[{Nielsen(1917)}]{Nielsen17}
\bibinfo{author}{Nielsen, J.}, \bibinfo{year}{1917}.
\newblock \bibinfo{title}{Die isomorphismen der allgemeinen, unendlichen
  {Gruppe} mit zwei {Erzeugenden}}.
\newblock \bibinfo{journal}{Mathematische Annalen} \bibinfo{volume}{78},
  \bibinfo{pages}{385--397}.
\newblock \DOIprefix\doi{10.1007/BF01457113}.
\bibitem[{N{\"{o}}tzli et~al.(2022)N{\"{o}}tzli, Reynolds, Barbosa, Barrett and
  Tinelli}]{NotzliRBBT22}
\bibinfo{author}{N{\"{o}}tzli, A.}, \bibinfo{author}{Reynolds, A.},
  \bibinfo{author}{Barbosa, H.}, \bibinfo{author}{Barrett, C.W.},
  \bibinfo{author}{Tinelli, C.}, \bibinfo{year}{2022}.
\newblock \bibinfo{title}{Even faster conflicts and lazier reductions for
  string solvers}, in: \bibinfo{editor}{Shoham, S.}, \bibinfo{editor}{Vizel,
  Y.} (Eds.), \bibinfo{booktitle}{Computer Aided Verification - 34th
  International Conference, {CAV} 2022, Haifa, Israel, August 7-10, 2022,
  Proceedings, Part {II}}, \bibinfo{publisher}{Springer}. pp.
  \bibinfo{pages}{205--226}.
\newblock \URLprefix \url{https://doi.org/10.1007/978-3-031-13188-2\_11},
  \DOIprefix\doi{10.1007/978-3-031-13188-2\_11}.
\bibitem[{Osera(2019)}]{osera2019}
\bibinfo{author}{Osera, P.M.}, \bibinfo{year}{2019}.
\newblock \bibinfo{title}{Constraint-based type-directed program synthesis},
  in: \bibinfo{booktitle}{Proceedings of the 4th ACM SIGPLAN International
  Workshop on Type-Driven Development}, \bibinfo{publisher}{Association for
  Computing Machinery}, \bibinfo{address}{New York, NY, USA}. pp.
  \bibinfo{pages}{64--76}.
\newblock \URLprefix \url{https://doi.org/10.1145/3331554.3342608},
  \DOIprefix\doi{10.1145/3331554.3342608}.
\bibitem[{Pin(2021)}]{Pin2021}
\bibinfo{editor}{Pin, J.} (Ed.), \bibinfo{year}{2021}.
\newblock \bibinfo{title}{Handbook of Automata Theory}.
\newblock \bibinfo{publisher}{European Mathematical Society Publishing House,
  Z{\"{u}}rich, Switzerland}.
\newblock \URLprefix \url{https://doi.org/10.4171/Automata},
  \DOIprefix\doi{10.4171/Automata}.
\bibitem[{Plandowski(1999)}]{plandowski1999satisfiability}
\bibinfo{author}{Plandowski, W.}, \bibinfo{year}{1999}.
\newblock \bibinfo{title}{Satisfiability of word equations with constants is in
  pspace}, in: \bibinfo{booktitle}{40th Annual Symposium on Foundations of
  Computer Science (Cat. No. 99CB37039)}, \bibinfo{organization}{IEEE}. pp.
  \bibinfo{pages}{495--500}.
\bibitem[{Plandowski(2006)}]{plandowski2006efficient}
\bibinfo{author}{Plandowski, W.}, \bibinfo{year}{2006}.
\newblock \bibinfo{title}{An efficient algorithm for solving word equations},
  in: \bibinfo{booktitle}{Proceedings of the thirty-eighth annual ACM symposium
  on Theory of computing}, \bibinfo{organization}{ACM}. pp.
  \bibinfo{pages}{467--476}.
\bibitem[{Presburger(1929)}]{Presburger29}
\bibinfo{author}{Presburger, M.}, \bibinfo{year}{1929}.
\newblock \bibinfo{title}{About the completeness of a certain system of integer
  arithmetic in which addition is the only operation}, in:
  \bibinfo{booktitle}{I congrès de Mathématiciens des Pays Slaves}, pp.
  \bibinfo{pages}{92--101}.
\bibitem[{Quine(1946)}]{quine1946concatenation}
\bibinfo{author}{Quine, W.V.}, \bibinfo{year}{1946}.
\newblock \bibinfo{title}{Concatenation as a basis for arithmetic}.
\newblock \bibinfo{journal}{The Journal of Symbolic Logic}
  \bibinfo{volume}{11}, \bibinfo{pages}{105--114}.
\bibitem[{Reynolds et~al.(2019)Reynolds, N{\"o}tzli, Barrett and
  Tinelli}]{reynolds2019high}
\bibinfo{author}{Reynolds, A.}, \bibinfo{author}{N{\"o}tzli, A.},
  \bibinfo{author}{Barrett, C.}, \bibinfo{author}{Tinelli, C.},
  \bibinfo{year}{2019}.
\newblock \bibinfo{title}{High-level abstractions for simplifying extended
  string constraints in {SMT}}, in: \bibinfo{booktitle}{International
  Conference on Computer Aided Verification}, \bibinfo{publisher}{Springer}.
  pp. \bibinfo{pages}{23--42}.
\bibitem[{Reynolds et~al.(2017)Reynolds, Woo, Barrett, Brumley, Liang and
  Tinelli}]{reynolds2017scaling}
\bibinfo{author}{Reynolds, A.}, \bibinfo{author}{Woo, M.},
  \bibinfo{author}{Barrett, C.}, \bibinfo{author}{Brumley, D.},
  \bibinfo{author}{Liang, T.}, \bibinfo{author}{Tinelli, C.},
  \bibinfo{year}{2017}.
\newblock \bibinfo{title}{Scaling up dpll (t) string solvers using
  context-dependent simplification}, in: \bibinfo{booktitle}{International
  Conference on Computer Aided Verification}, \bibinfo{publisher}{Springer}.
  pp. \bibinfo{pages}{453--474}.
\bibitem[{Robson and Diekert(1999)}]{robson1999quadratic}
\bibinfo{author}{Robson, J.M.}, \bibinfo{author}{Diekert, V.},
  \bibinfo{year}{1999}.
\newblock \bibinfo{title}{On quadratic word equations}, in:
  \bibinfo{booktitle}{Annual Symposium on Theoretical Aspects of Computer
  Science}, \bibinfo{organization}{Springer}. pp. \bibinfo{pages}{217--226}.
\bibitem[{Schulz(1990)}]{schulz1990makanin}
\bibinfo{author}{Schulz, K.U.}, \bibinfo{year}{1990}.
\newblock \bibinfo{title}{Makanin's algorithm for word equations-two
  improvements and a generalization}, in: \bibinfo{booktitle}{International
  Workshop on Word Equations and Related Topics},
  \bibinfo{organization}{Springer}. pp. \bibinfo{pages}{85--150}.
\bibitem[{Scott et~al.(2017)Scott, Flener, Pearson and Schulte}]{ScottFPS17}
\bibinfo{author}{Scott, J.D.}, \bibinfo{author}{Flener, P.},
  \bibinfo{author}{Pearson, J.}, \bibinfo{author}{Schulte, C.},
  \bibinfo{year}{2017}.
\newblock \bibinfo{title}{Design and implementation of bounded-length sequence
  variables}, in: \bibinfo{editor}{Salvagnin, D.}, \bibinfo{editor}{Lombardi,
  M.} (Eds.), \bibinfo{booktitle}{Integration of {AI} and {OR} Techniques in
  Constraint Programming - 14th International Conference, {CPAIOR} 2017, Padua,
  Italy, June 5-8, 2017, Proceedings}, \bibinfo{publisher}{Springer}. pp.
  \bibinfo{pages}{51--67}.
\newblock \URLprefix \url{https://doi.org/10.1007/978-3-319-59776-8\_5},
  \DOIprefix\doi{10.1007/978-3-319-59776-8\_5}.
\bibitem[{Sen et~al.(2013)Sen, Kalasapur, Brutch and Gibbs}]{SenKBG13Jalangi}
\bibinfo{author}{Sen, K.}, \bibinfo{author}{Kalasapur, S.},
  \bibinfo{author}{Brutch, T.}, \bibinfo{author}{Gibbs, S.},
  \bibinfo{year}{2013}.
\newblock \bibinfo{title}{Jalangi: A selective record-replay and dynamic
  analysis framework for {JavaScript}}, in: \bibinfo{booktitle}{Proceedings of
  the 2013 9th Joint Meeting on Foundations of Software Engineering},
  \bibinfo{publisher}{Association for Computing Machinery}. pp.
  \bibinfo{pages}{488--498}.
\newblock \DOIprefix\doi{10.1145/2491411.2491447}.
\bibitem[{Stanford et~al.(2021)Stanford, Veanes and
  Bj{\o}rner}]{stanford2021symbolic}
\bibinfo{author}{Stanford, C.}, \bibinfo{author}{Veanes, M.},
  \bibinfo{author}{Bj{\o}rner, N.}, \bibinfo{year}{2021}.
\newblock \bibinfo{title}{Symbolic boolean derivatives for efficiently solving
  extended regular expression constraints}, in: \bibinfo{booktitle}{Proceedings
  of the 42nd ACM SIGPLAN International Conference on Programming Language
  Design and Implementation}, pp. \bibinfo{pages}{620--635}.
\bibitem[{Thatcher and Wright(1968)}]{ThatcherW68}
\bibinfo{author}{Thatcher, J.W.}, \bibinfo{author}{Wright, J.B.},
  \bibinfo{year}{1968}.
\newblock \bibinfo{title}{Generalized finite automata theory with an
  application to a decision problem of second-order logic}.
\newblock \bibinfo{journal}{Mathematical Systems Theory} \bibinfo{volume}{2},
  \bibinfo{pages}{57--81}.
\newblock \URLprefix \url{https://doi.org/10.1007/BF01691346},
  \DOIprefix\doi{10.1007/BF01691346}.
\bibitem[{Trinh et~al.(2014)Trinh, Chu and Jaffar}]{trinh2014s3}
\bibinfo{author}{Trinh, M.T.}, \bibinfo{author}{Chu, D.H.},
  \bibinfo{author}{Jaffar, J.}, \bibinfo{year}{2014}.
\newblock \bibinfo{title}{S3: A symbolic string solver for vulnerability
  detection in web applications}, in: \bibinfo{booktitle}{Proceedings of the
  2014 ACM SIGSAC Conference on Computer and Communications Security},
  \bibinfo{organization}{ACM}. pp. \bibinfo{pages}{1232--1243}.
\bibitem[{Trinh et~al.(2016)Trinh, Chu and Jaffar}]{trinh2016progressive}
\bibinfo{author}{Trinh, M.T.}, \bibinfo{author}{Chu, D.H.},
  \bibinfo{author}{Jaffar, J.}, \bibinfo{year}{2016}.
\newblock \bibinfo{title}{Progressive reasoning over recursively-defined
  strings}, in: \bibinfo{booktitle}{International Conference on Computer Aided
  Verification}, \bibinfo{organization}{Springer}. pp.
  \bibinfo{pages}{218--240}.
\bibitem[{Trinh et~al.(2020)Trinh, Chu and Jaffar}]{trinh2020inter}
\bibinfo{author}{Trinh, M.T.}, \bibinfo{author}{Chu, D.H.},
  \bibinfo{author}{Jaffar, J.}, \bibinfo{year}{2020}.
\newblock \bibinfo{title}{Inter-theory dependency analysis for {SMT} string
  solvers}.
\newblock \bibinfo{journal}{Proceedings of the ACM on Programming Languages}
  \bibinfo{volume}{4}, \bibinfo{pages}{1--27}.
\bibitem[{Tseitin(1983)}]{tseitin1983complexity}
\bibinfo{author}{Tseitin, G.S.}, \bibinfo{year}{1983}.
\newblock \bibinfo{title}{On the complexity of derivation in propositional
  calculus}, in: \bibinfo{booktitle}{Automation of reasoning}.
  \bibinfo{publisher}{Springer}, pp. \bibinfo{pages}{466--483}.
\bibitem[{Veanes et~al.(2012)Veanes, Hooimeijer, Livshits, Molnar and
  Bj{\o}rner}]{VeanesHLMB12}
\bibinfo{author}{Veanes, M.}, \bibinfo{author}{Hooimeijer, P.},
  \bibinfo{author}{Livshits, B.}, \bibinfo{author}{Molnar, D.},
  \bibinfo{author}{Bj{\o}rner, N.S.}, \bibinfo{year}{2012}.
\newblock \bibinfo{title}{Symbolic finite state transducers: algorithms and
  applications}, in: \bibinfo{editor}{Field, J.}, \bibinfo{editor}{Hicks, M.}
  (Eds.), \bibinfo{booktitle}{Proceedings of the 39th {ACM} {SIGPLAN-SIGACT}
  Symposium on Principles of Programming Languages, {POPL} 2012, Philadelphia,
  Pennsylvania, USA, January 22-28, 2012}, \bibinfo{publisher}{{ACM}}. pp.
  \bibinfo{pages}{137--150}.
\newblock \URLprefix \url{https://doi.org/10.1145/2103656.2103674},
  \DOIprefix\doi{10.1145/2103656.2103674}.
\bibitem[{Wang et~al.(2018)Wang, Chen, Yu and Jiang}]{WangCYJ18}
\bibinfo{author}{Wang, H.}, \bibinfo{author}{Chen, S.}, \bibinfo{author}{Yu,
  F.}, \bibinfo{author}{Jiang, J.R.}, \bibinfo{year}{2018}.
\newblock \bibinfo{title}{A symbolic model checking approach to the analysis of
  string and length constraints}, in: \bibinfo{editor}{Huchard, M.},
  \bibinfo{editor}{K{\"{a}}stner, C.}, \bibinfo{editor}{Fraser, G.} (Eds.),
  \bibinfo{booktitle}{Proceedings of the 33rd {ACM/IEEE} International
  Conference on Automated Software Engineering, {ASE} 2018, Montpellier,
  France, September 3-7, 2018}, \bibinfo{publisher}{{ACM}}. pp.
  \bibinfo{pages}{623--633}.
\newblock \URLprefix \url{https://doi.org/10.1145/3238147.3238189},
  \DOIprefix\doi{10.1145/3238147.3238189}.
\bibitem[{Wang et~al.(2016)Wang, Tsai, Lin, Yu and
  Jiang}]{WangTLYJ16StringAutomata}
\bibinfo{author}{Wang, H.}, \bibinfo{author}{Tsai, T.}, \bibinfo{author}{Lin,
  C.}, \bibinfo{author}{Yu, F.}, \bibinfo{author}{Jiang, J.R.},
  \bibinfo{year}{2016}.
\newblock \bibinfo{title}{String analysis via automata manipulation with logic
  circuit representation}, in: \bibinfo{booktitle}{Computer Aided Verification
  - 28th International Conference, {CAV} 2016}, \bibinfo{publisher}{Springer}.
  pp. \bibinfo{pages}{241--260}.
\newblock \DOIprefix\doi{10.1007/978-3-319-41528-4_13}.
\bibitem[{{Wang} et~al.(2017){Wang}, {Zhou}, {Jiang}, {Song}, {Gu} and
  {Sun}}]{8115706}
\bibinfo{author}{{Wang}, Y.}, \bibinfo{author}{{Zhou}, M.},
  \bibinfo{author}{{Jiang}, Y.}, \bibinfo{author}{{Song}, X.},
  \bibinfo{author}{{Gu}, M.}, \bibinfo{author}{{Sun}, J.},
  \bibinfo{year}{2017}.
\newblock \bibinfo{title}{A static analysis tool with optimizations for
  reachability determination}, in: \bibinfo{booktitle}{2017 32nd IEEE/ACM
  International Conference on Automated Software Engineering (ASE)}, pp.
  \bibinfo{pages}{925--930}.
\newblock \DOIprefix\doi{10.1109/ASE.2017.8115706}.
\bibitem[{Wolper and Boigelot(1998)}]{WolperB98}
\bibinfo{author}{Wolper, P.}, \bibinfo{author}{Boigelot, B.},
  \bibinfo{year}{1998}.
\newblock \bibinfo{title}{Verifying systems with infinite but regular state
  spaces}, in: \bibinfo{editor}{Hu, A.J.}, \bibinfo{editor}{Vardi, M.Y.}
  (Eds.), \bibinfo{booktitle}{Computer Aided Verification, 10th International
  Conference, {CAV} '98, Vancouver, BC, Canada, June 28 - July 2, 1998,
  Proceedings}, \bibinfo{publisher}{Springer}. pp. \bibinfo{pages}{88--97}.
\newblock \URLprefix \url{https://doi.org/10.1007/BFb0028736},
  \DOIprefix\doi{10.1007/BFb0028736}.
\bibitem[{Wolper and Boigelot(2000)}]{WolperB00}
\bibinfo{author}{Wolper, P.}, \bibinfo{author}{Boigelot, B.},
  \bibinfo{year}{2000}.
\newblock \bibinfo{title}{On the construction of automata from linear
  arithmetic constraints}, in: \bibinfo{editor}{Graf, S.},
  \bibinfo{editor}{Schwartzbach, M.I.} (Eds.), \bibinfo{booktitle}{Tools and
  Algorithms for Construction and Analysis of Systems, 6th International
  Conference, {TACAS} 2000, Held as Part of the European Joint Conferences on
  the Theory and Practice of Software, {ETAPS} 2000, Berlin, Germany, March 25
  - April 2, 2000, Proceedings}, \bibinfo{publisher}{Springer}. pp.
  \bibinfo{pages}{1--19}.
\newblock \URLprefix \url{https://doi.org/10.1007/3-540-46419-0\_1},
  \DOIprefix\doi{10.1007/3-540-46419-0\_1}.
\bibitem[{Wolper and Boigelot(2002)}]{BoigelotW02}
\bibinfo{author}{Wolper, P.}, \bibinfo{author}{Boigelot, B.},
  \bibinfo{year}{2002}.
\newblock \bibinfo{title}{Representing arithmetic constraints with finite
  automata: An overview}, in: \bibinfo{editor}{Stuckey, P.J.} (Ed.),
  \bibinfo{booktitle}{Logic Programming, 18th International Conference, {ICLP}
  2002, Copenhagen, Denmark, July 29 - August 1, 2002, Proceedings},
  \bibinfo{publisher}{Springer}. pp. \bibinfo{pages}{1--19}.
\newblock \URLprefix \url{https://doi.org/10.1007/3-540-45619-8\_1},
  \DOIprefix\doi{10.1007/3-540-45619-8\_1}.
\bibitem[{Yu et~al.(2010)Yu, Alkhalaf and Bultan}]{yu2010stranger}
\bibinfo{author}{Yu, F.}, \bibinfo{author}{Alkhalaf, M.},
  \bibinfo{author}{Bultan, T.}, \bibinfo{year}{2010}.
\newblock \bibinfo{title}{Stranger: An automata-based string analysis tool for
  php}, in: \bibinfo{booktitle}{International Conference on Tools and
  Algorithms for the Construction and Analysis of Systems},
  \bibinfo{organization}{Springer}. pp. \bibinfo{pages}{154--157}.
\bibitem[{Yu et~al.(2014)Yu, Alkhalaf, Bultan and
  Ibarra}]{YuABI14AutomataString}
\bibinfo{author}{Yu, F.}, \bibinfo{author}{Alkhalaf, M.},
  \bibinfo{author}{Bultan, T.}, \bibinfo{author}{Ibarra, O.H.},
  \bibinfo{year}{2014}.
\newblock \bibinfo{title}{Automata-based symbolic string analysis for
  vulnerability detection}.
\newblock \bibinfo{journal}{Formal Methods in System Design}
  \bibinfo{volume}{44}, \bibinfo{pages}{44--70}.
\newblock \DOIprefix\doi{10.1007/s10703-013-0189-1}.
\bibitem[{Yu et~al.(2016)Yu, Shueh, Lin, Chen, Wang and Bultan}]{yu2016optimal}
\bibinfo{author}{Yu, F.}, \bibinfo{author}{Shueh, C.Y.}, \bibinfo{author}{Lin,
  C.H.}, \bibinfo{author}{Chen, Y.F.}, \bibinfo{author}{Wang, B.Y.},
  \bibinfo{author}{Bultan, T.}, \bibinfo{year}{2016}.
\newblock \bibinfo{title}{Optimal sanitization synthesis for web application
  vulnerability repair}, in: \bibinfo{booktitle}{Proceedings of the 25th
  International Symposium on Software Testing and Analysis},
  \bibinfo{organization}{ACM}. pp. \bibinfo{pages}{189--200}.
\bibitem[{Zheng et~al.(2017)Zheng, Ganesh, Subramanian, Tripp, Berzish, Dolby
  and Zhang}]{zheng2017z3str2}
\bibinfo{author}{Zheng, Y.}, \bibinfo{author}{Ganesh, V.},
  \bibinfo{author}{Subramanian, S.}, \bibinfo{author}{Tripp, O.},
  \bibinfo{author}{Berzish, M.}, \bibinfo{author}{Dolby, J.},
  \bibinfo{author}{Zhang, X.}, \bibinfo{year}{2017}.
\newblock \bibinfo{title}{Z3str2: an efficient solver for strings, regular
  expressions, and length constraints}.
\newblock \bibinfo{journal}{Formal Methods in System Design}
  \bibinfo{volume}{50}, \bibinfo{pages}{249--288}.

\end{thebibliography}

\end{document}